\documentclass{SciPost}
\usepackage[T1]{fontenc}
\usepackage{lmodern}
\usepackage{microtype}
\usepackage{amsmath,amssymb,amsthm,mathtools,bm}
\usepackage{mathrsfs}
\usepackage{booktabs,tabularx,array,longtable,multirow}
\usepackage{enumitem}
\usepackage{graphicx}
\usepackage{xcolor}
\usepackage[nameinlink,capitalise,noabbrev]{cleveref}
\usepackage{caption}
\usepackage{subcaption}
\usepackage{slashed}
\usepackage{bbm}
\usepackage{url}
\usepackage{cite}
\usepackage{listings}
\usepackage{etoolbox}
\usepackage{placeins}
\usepackage{float}

\definecolor{spblue}{RGB}{30,76,130}
\definecolor{exactgreen}{RGB}{20,100,65}
\definecolor{numorange}{RGB}{165,85,5}
\definecolor{conjpurple}{RGB}{100,50,125}
\definecolor{openred}{RGB}{145,35,35}
\hypersetup{colorlinks=true,linkcolor=spblue,citecolor=spblue,urlcolor=spblue,
 pdftitle={Analytical results for the Shannon entropy of the critical transverse-field Ising chain},
 pdfauthor={M. A. Rajabpour}}
\newcommand{\dd}{\,\mathrm d}
\newcommand{\EE}{\mathbb E}
\newcommand{\PP}{\mathbb P}

\newcommand{\Ztwo}{\mathbb Z_2}

\newcommand{\cM}{\mathfrak M}
\newcommand{\cA}{\mathcal A}
\newcommand{\cB}{\mathcal B}
\newcommand{\FP}{\operatorname{FP}}
\newcommand{\Pf}{\operatorname{Pf}}
\newcommand{\Hf}{\operatorname{Hf}}
\newcommand{\diag}{\operatorname{diag}}

\newcommand{\arcosh}{\operatorname{arcosh}}
\newcommand{\cyc}{\mathrm{cyc}}
\newcommand{\edge}{\mathrm{edge}}

\newcommand{\linesys}{\mathrm{line}}

\newtheorem{theorem}{Theorem}[section]
\newtheorem{proposition}[theorem]{Proposition}
\newtheorem{lemma}[theorem]{Lemma}

\theoremstyle{definition}

\newtheorem{conjecture}[theorem]{Conjecture}
\theoremstyle{remark}

\begin{document}
\begin{center}
{\Large\bfseries Analytical results for the Shannon entropy of the critical transverse-field Ising chain}\par\vspace{1.0em}
{\large M.~A.~Rajabpour$^{1,2}$}\par\vspace{0.6em}
$^1$ Instituto de F\'isica, Universidade Federal Fluminense, Av.~Gal.~Milton Tavares de Souza s/n, Gragoat\'a, 24210-346, Niter\'oi, RJ, Brazil\\
$^2$ Theoretical Physics III, Center for Electronic Correlations and Magnetism, Institute of Physics, University of Augsburg, D-86135 Augsburg, Germany\par\vspace{0.45em}
{\small Corresponding author: \href{mailto:mohammad1.rajabpour@uni-a.de}{mohammad1.rajabpour@uni-a.de}}\par\vspace{0.45em}
{\small \today}
\end{center}

\section*{Abstract}
\textbf{Local-basis Shannon entropies of critical wave functions contain universal subleading information, but the Shannon point of the critical transverse-field Ising chain is singular in the conventional R\'enyi approach.  We treat it directly at $n=1$ by representing the complete computational-basis Born distribution as the odd-degree boundary of independent long-range Bernoulli edges.  The Shannon chain rule separates the entropy into explicit independent-edge and conditional cycle-space terms.  For the periodic chain this isolates the analytic edge anomaly and organizes the remaining constant by linked vertex support.  Conditional on the stated finite-part matching and forest assumptions, every nonvanishing linked coefficient is represented by an explicit finite-dimensional integral, giving an all-orders analytic hierarchy; the four-vertex term is evaluated in closed form, and the first few linked sectors already nearly saturate the established periodic constant, which is also reconstructed independently from the exact finite-size distribution.  For an interval, the exterior reduces exactly to one ghost vertex and the explicit squared logarithms cancel, leaving the analytic edge contribution $\gamma_E=\log2/4-1/16$.  The conditional cycle term has an exact all-support decomposition into physical-line and ghost-linked sectors; conditional on closure of the boundary forest subtraction, these sectors define an all-orders hierarchy of renormalized line and monomer--dimer boundary periods for the remaining logarithmic coefficient.  The first complete augmented linked coefficient is evaluated in closed form, with an exact cancellation between its line and ghost boundary-layer anomalies; an independent reconstruction from exact finite-size probabilities recovers the established coefficient $\gamma_1=0.060020(3)$.}

\noindent\rule{\textwidth}{0.4pt}
\tableofcontents
\noindent\rule{\textwidth}{0.4pt}

\section{Introduction}
\label{sec:intro}

Local-basis Shannon--R\'enyi entropies characterize how a many-body wave function is distributed over the outcomes of a chosen local measurement.  They are classical entropies of Born probabilities rather than reduced-density-matrix entanglement entropies, and consequently depend on the measurement basis.  Their leading contribution is generally extensive, but critical wave functions can carry universal subleading constants or logarithms.  This structure was established in one-dimensional critical states and Luttinger liquids~\cite{Stephan2009,Zaletel2011,Stephan2011}, while the related multifractal structure of spin-chain ground-state amplitudes was analyzed in Ref.~\cite{AtasBogomolny2012}.  Subsequent work has explored interacting systems, disordered chains, symmetry-broken phases and line subsystems~\cite{Luitz2014PRL,Luitz2014PRB,Luitz2014JSTAT,Monthus2015,GetelinaAlcarazHoyos2016,Misguich2017,LuitzLaflorencie2017,ZhaoLaflorencieMeng2025}.

The critical transverse-field Ising (TFI) chain is a particularly sharp test because two different universal quantities arise from the same ground-state probability distribution.  For a complete periodic chain and for a finite interval $A_\ell$ in the infinite chain, respectively, we seek the subleading terms in
\begin{align}
 H_1^{\rm ring,(z)}(L)&=s_1L+r_1^{(z)}+o(1),\label{eq:intro-ring-scaling}\\
 H_1(A_\ell)&=s_1\ell+\gamma_1\log\ell+O(1).\label{eq:intro-interval-scaling}
\end{align}
The periodic constant was found numerically in the Ising transfer-matrix setting, where the universal term is discontinuous at the Shannon point~\cite{Stephan2010}; a later high-precision analysis gives the $x$-basis benchmark used below~\cite{LauGrassberger2013}.  The corresponding $z$-basis constant is $r_1^{(z)}=r_1^{(x)}-\log2$, by the exact duality shift.  For subsystems, early work proposed a central-charge relation for the Shannon mutual information~\cite{AlcarazRajabpour2013}, while high-precision calculations for the Ising chain found instead the universal coefficient $b_1\simeq0.4801629(2)$~\cite{Stephan2014}.  Further studies clarified the role of conformal bases and R\'enyi generalizations~\cite{AlcarazRajabpour2014,Alcaraz2016}; a distinct generalized mutual information based on R\'enyi divergence was studied in Ref.~\cite{AlcarazRajabpour2015}.  In the interval convention above, the coefficient of Ref.~\cite{Stephan2014} corresponds to $\gamma_1=b_1/8$, as derived in Sec.~\ref{sec:subsystem-direct}.  More generally, critical quadratic chains separate into $U(1)$ and non-$U(1)$ branches, with the TFI value at $n=1$ belonging to a distinct $\alpha\leq1$ non-$U(1)$ branch rather than to the $\alpha>1$ central-charge continuation~\cite{Tarighi2022}.  Related classical and quantum mutual-information diagnostics were studied in Refs.~\cite{Wilms2011,UmParkHinrichsen2012}.

The obstacle is not the solution of the Ising Hamiltonian itself.  After the Jordan--Wigner transformation the ground state is Gaussian, so individual configuration probabilities are accessible through free-fermion determinants or Pfaffians.  Emptiness and formation probabilities in XY-type chains have a long history, beginning with Toeplitz-determinant representations and asymptotics~\cite{Shiroishi2001,AbanovFranchini2003,FranchiniAbanov2005} and extending to conformal and Painlev\'e descriptions of their universal scaling~\cite{StephanEFP2014,Rajabpour2015,Rajabpour2016,AresViti2020}.  Exact formulas for general configurations and their Shannon information were developed in Refs.~\cite{NajafiRajabpour2016,NajafiRajabpour2020,AresRajabpourViti2020}; closely related full-counting-statistics results appear in Refs.~\cite{Groha2018,AresRajabpourViti2021}, and Pfaffian representations in noncomputational bases in Ref.~\cite{TarighiKhassehRajabpour2024}.  Gaussianity nevertheless does not reduce the Shannon entropy to a single determinant because the entropy is a nonlinear functional of the complete probability distribution.  The recent sum-of-powers-of-principal-minors (SPPM) formulation provides a field-theory representation for integer R\'enyi moments; its explicit TFI calculation focuses on $n=2$ rather than on a direct evaluation at $n=1$~\cite{NajafiRamezanpourRajabpour2025}.

The full-system Shannon and Shannon--R\'enyi entropies of critical $c=1$ chains, including the $XX/XXZ$ family, were studied earlier in Refs.~\cite{Stephan2009,Stephan2011}.  Related subsystem quantities were subsequently analyzed through Shannon and R\'enyi mutual information, including both free-boson and Ising critical chains, in Ref.~\cite{Stephan2014}.  More recently, a structurally different exact correspondence has been established: for Gaussian eigenstates, the stabilizer R\'enyi entropy can be mapped exactly to the computational-basis Shannon--R\'enyi entropy of a number-conserving free-fermion state on a doubled system; for the critical TFI ground state this becomes the half-filled $XX$-chain ground state of length $2L$~\cite{RamirezTrinoRajabpour2026}.  A subsequent fugacity-resolved formulation shows that the associated TFI all-minors problem is a common finite-size building block for these stabilizer and $XX$-chain Shannon--R\'enyi quantities and maps it to a checkerboard-weighted discrete Selberg ensemble; at positive integer R\'enyi index it reduces to a finite aliased Dyson constant term, with determinant or Pfaffian product formulas at special indices~\cite{KhassehRajabpour2026}.  These results demonstrate strong exact solvability in neighboring information measures, but they do not directly evaluate the computational-basis TFI Shannon entropy studied here.  For the computational-basis TFI probability law considered here, the relevant Cauchy principal minors instead factorize into graph-cut characteristic functions.  At $n=1$, the ordinary Shannon chain rule then leaves a conditional entropy inside weighted cycle-space fibers.  Controlling that residual information is what necessitates the linked-support and ghost-vertex constructions developed below.

The singularity of the Shannon point is therefore a physical part of the problem rather than merely a technical inconvenience.  Boundary-CFT analyses show that R\'enyi reweighting can change the effective boundary condition and produce boundary transitions~\cite{Zaletel2011,Stephan2011}.  Mathematical work on related Book-Ising geometries has established long-range order in regimes motivated in part by these R\'enyi constructions~\cite{DuminilCopinGarbanTassion2023}.  Most directly relevant here, Ref.~\cite{RajabpourBoundaryPhase2026} established an exact finite-size Cauchy/parity representation for the critical Ising measurement law and used its weak inverse-square classical interaction to derive the real-$n>1$ interval term $\frac{c n}{8(n-1)}\log\ell$, with $c=1/2$ for Ising, directly from the interactions removed by a spatial cut.  That construction is valid for real $n>0$ and shows that the explicit long-distance sectors become marginal at the endpoint $n=1$.  The present work develops that endpoint directly: the ordinary Shannon chain rule isolates the remaining conditional information stored in graph cycle space.  Selective measurements and post-measurement entanglement in critical quantum chains were studied earlier in Ref.~\cite{NajafiRajabpourMeasurement2016}.  More recently, representative studies have investigated measurement-altered critical states and measurement-induced boundary criticality~\cite{SunYaoJian2023,Murciano2023,LiuMurcianoMrossAlicea2025,KhannaMurcianoVasseur2026}.  A complementary snapshot-level viewpoint was developed for the critical $XX$ chain, where partial occupation outcomes encode a compact conformal boundary coordinate with exact free-fermion determinant moments~\cite{RajabpourSnapshots2026}.  Here the state is not modified by a measurement protocol, and the object of interest is the Born distribution of the unmeasured critical ground state.

Building on the exact finite-size Cauchy/parity structure of Ref.~\cite{RajabpourBoundaryPhase2026}, the starting point here is its direct Shannon-point probability law.  Every Walsh coefficient of the Born distribution is a principal minor of a trigonometric Cauchy matrix, and Cauchy factorization turns each minor into a product over a graph cut.  That cut product is exactly the characteristic function of the odd-degree boundary of a complete graph with independent Bernoulli edges.  Identifying a spin configuration with its set $D$ of down spins therefore gives, at finite size,
\begin{align}
 p_L(D)&=\PP\{\partial F=D\},\label{eq:intro-parity-law}\\
 H_1&=H(F)-H(F\mid\partial F).\label{eq:intro-entropy-split}
\end{align}
Here $F$ is the random edge set and $\partial F$ its set of odd-degree vertices.  The first identity is an equality of complete probability laws; the second follows from the ordinary Shannon chain rule.  It separates an explicit independent-edge entropy from the information hidden in weighted cosets of the graph cycle space.  This is the key structural distinction from the discrete-Selberg all-minors organization described above: after the explicit edge contribution is removed, the remaining Shannon information is uncertainty within weighted boundary fibers rather than another independent-minor sum.  The Bernoulli weights and the determinant-to-cut derivation are given in Sec.~\ref{sec:probability}.

For the periodic chain, the explicit edge contribution contains the universal anomaly $-\pi^2/16$.  The remaining conditional entropy admits an exact finite-size M\"obius expansion by vertex support.  In the macroscopic even sectors its leading ambiguity is carried by perfect matchings, and a Pfaffian--Hafnian identity reduces their matching sum to determinants.  Conditional on the finite-part matching conjecture and the stated forest closure of the odd sectors, every nonvanishing linked intercept is represented by an explicit finite-dimensional linked finite-part integral, yielding an all-orders analytic hierarchy for the periodic constant.  The four-vertex continuum contribution is evaluated in closed form, while the six- and eight-vertex coefficients are extracted from exact finite-size linked sectors.  Already through eight vertices the resulting estimate is $0.25453$, within about $2\times10^{-4}$ of the high-precision benchmark.  Independently, reconstruction of the complete finite-size distribution gives the entropy density and the periodic constant, in agreement with Ref.~\cite{LauGrassberger2013}.

For an interval, marginalizing the infinite exterior has an exact reduction: all exterior edges incident on a physical site collapse to one parity bit, and these parity bits are independent across sites.  They may therefore be represented by edges to a single ghost vertex.  The internal-edge and ghost-edge entropies separately contain squared logarithms whose coefficients cancel, leaving the explicit edge contribution $\gamma_E=\log2/4-1/16$.  The remaining boundary cycle entropy admits an exact finite-size M\"obius decomposition at every support order into physical-line and ghost-linked sectors.  At arbitrary ghost support the primitive common-scale fiber is a monomer--dimer ensemble; conditional on closure of the boundary forest subtraction, the line and ghost sectors give an all-orders hierarchy of renormalized periods for $\gamma_C$ and hence for $\gamma_1=\gamma_E-\gamma_C$.  Nested endpoint and collision scales make this hierarchy less direct than the periodic one, and its all-order renormalization and summation are not proved.  At total augmented support three, the physical-line and ghost boundary-layer anomalies cancel exactly, so the first complete linked coefficient is obtained in closed form as $\kappa_3^{+}=2I_2$; the same cancellation persists throughout the auxiliary Cauchy family.  A direct reconstruction of the complete interval distribution from exact infinite-line Walsh coefficients independently recovers $\gamma_1=0.060020(3)$, providing an internal check against $b_1/8$ from Ref.~\cite{Stephan2014} without using that value in the fit.  The exact Cauchy deformation additionally identifies the hyperbolic coordinate $\operatorname{arcosh}2=\log(2+\sqrt3)$ as a natural constant of the boundary problem.

The same algebra also extends the interval result beyond the Ising chain.  Time-reversal-invariant critical quadratic chains can be organized through a complex symbol and the zeros that control their critical structure~\cite{VerresenJonesPollmann2018,JonesVerresen2019,Tarighi2022}.  For the sequence $f_m(z)=z^m-1$, the computational-basis correlation matrix decomposes exactly into residue classes modulo $m$, so the interval probability distribution factorizes into independent TFI blocks.  The corresponding entropy identity is proved in Appendix~\ref{app:decimation}.  On the experimental side, subleading Shannon--R\'enyi terms have recently been used to extract central charge on a quantum processor~\cite{Koyluoglu2026}; complementary field-theory work develops stabilizer entropy as a criticality diagnostic from a distinct boundary-CFT perspective~\cite{Hoshino2026}.

The paper is organized as follows.  Section~\ref{sec:model} derives the determinant, Walsh and Bernoulli descriptions of the critical probability law.  Sections~\ref{sec:periodic-edge} and~\ref{sec:subsystem-setup} treat the periodic and interval geometries separately, including their linked all-support organizations.  Section~\ref{sec:cauchy-family} develops the Cauchy deformation and a formal grand organization of the periodic hierarchy.  The appendices contain the algebraic proofs, asymptotic calculations, boundary-forest construction and numerical procedures used in the finite-size analysis.

\section{Exact measurement law and entropy decomposition}
\label{sec:model}

\subsection{Model and measurement conventions}

We use
\begin{equation}
\mathcal H(h)=-\frac12\sum_{j=0}^{L-1}
\left(\sigma_j^x\sigma_{j+1}^x+h\sigma_j^z\right),
\qquad \sigma_L^x=\sigma_0^x,
\label{eq:tfi-ham}
\end{equation}
for the periodic chain and the corresponding infinite-volume Hamiltonian on $\mathbb Z$ for the subsystem problem~\cite{Pfeuty1970}.  We work at $h=1$ and take $L$ even.  All measurement entropies are in the $\sigma^z$ basis unless stated otherwise.

We write $[L]=\{0,1,\ldots,L-1\}$ and denote the even critical ground state by $|G\rangle$.  A configuration is $\bm\sigma=(\sigma_0^z,\ldots,\sigma_{L-1}^z)$ with $\sigma_j^z=\pm1$ and probability
\begin{equation}
 p_{\bm\sigma}=|\langle\bm\sigma|G\rangle|^2.
\end{equation}
For a probability distribution $p_{\bm\sigma}$ we use natural logarithms and define the local-basis Shannon--R\'enyi entropies by
\begin{align}
 H_n&=\frac{1}{1-n}\log\sum_{\bm\sigma}p_{\bm\sigma}^{\,n},
 \qquad n>0,\quad n\neq1,\label{eq:def-renyi}\\
 H_1&=-\sum_{\bm\sigma}p_{\bm\sigma}\log p_{\bm\sigma}.\label{eq:def-shannon}
\end{align}
These are the classical Shannon and R\'enyi entropies of the Born probabilities~\cite{Shannon1948,Renyi1961}; they are not reduced-density-matrix entanglement entropies.  For any discrete random variable or vector $X$, $H(X)$ denotes its ordinary Shannon entropy, and $H(X\mid Y)$ the corresponding conditional entropy.
The basis dependence is physical: a change of local measurement basis changes the probability distribution and therefore changes $H_n$.  At the self-dual point the $x$- and $z$-basis entropies of the even periodic chain differ by the exact duality shift \cite{Stephan2010},
\begin{equation}
H_n^{(x)}(L)=H_n^{(z)}(L)+\log2,
\label{eq:KW}
\end{equation}
which affects only the $O(1)$ term.  The finite-size scaling appropriate to the full system and to the subsystem will be introduced separately in \Cref{sec:periodic-edge,sec:subsystem-setup}.

\subsection{Exact Born distribution as a Bernoulli parity graph}
\label{sec:probability}

At criticality the finite-ring probability distribution can be written as
\begin{equation}
p_{\bm\sigma}=2^{-L}\det(I+D_{\bm\sigma}G_L),
\label{eq:det-born}
\end{equation}
where $D_{\bm\sigma}=\diag(\sigma_0^z,\ldots,\sigma_{L-1}^z)$ and
\begin{equation}
(G_L)_{j\ell}=\frac{(-1)^{j-\ell}}{L\sin\!\left[\frac\pi L(j-\ell+\frac12)\right]}.
\label{eq:GL}
\end{equation}
Equation~\eqref{eq:det-born} is the standard Gaussian formation-probability representation specialized to the critical TFI ground state.  Appendix~\ref{app:det-walsh} explains how it follows from the local spin projectors and Wick's theorem and then derives the determinant-to-Walsh step explicitly.  Determinant formulas for computational-basis formation probabilities in quadratic chains are developed in Refs.~\cite{NajafiRajabpour2016,NajafiRajabpour2020,AresRajabpourViti2020}, while Pfaffian extensions to noncomputational bases are given in Ref.~\cite{TarighiKhassehRajabpour2024}.  The feature special to the present problem is the trigonometric Cauchy structure of \eqref{eq:GL}, which lets us factor every principal minor rather than only selected formation probabilities.  The same finite-size Cauchy probability kernel underlies the real-$n$ boundary-phase construction of Ref.~\cite{RajabpourBoundaryPhase2026}.

Expanding \eqref{eq:det-born} in Walsh characters gives
\begin{equation}
p_{\bm\sigma}=2^{-L}\sum_{S\subseteq[L]}d_L(S)\chi_S(\bm\sigma),
\qquad
\chi_S(\bm\sigma)=\prod_{j\in S}\sigma_j^z,
\label{eq:walsh}
\end{equation}
with coefficients equal to principal minors,
\begin{equation}
d_L(S)=\det G_L[S].
\label{eq:principal-minors}
\end{equation}
Trigonometric Cauchy factorization yields (see Appendix~\ref{app:cauchy})
\begin{equation}
 d_L(S)=A_L^{|S|}\prod_{\substack{i<j\\i,j\in S}}g_{j-i}^{(L)},
\qquad
A_L=\frac{1}{L\sin(\pi/2L)},
\label{eq:cauchy-factor}
\end{equation}
where
\begin{equation}
 g_m^{(L)}=\frac{\sin^2(\pi m/L)}{
 \sin[\pi(m-\frac12)/L]\sin[\pi(m+\frac12)/L]}
 =\frac{1}{1-z_m^{(L)}},
\label{eq:g-z}
\end{equation}
with
\begin{equation}
 z_m^{(L)}=\frac{\sin^2(\pi/2L)}{\sin^2(\pi m/L)}.
\end{equation}

Now introduce independent Bernoulli edges $\eta_{ij}\in\{0,1\}$ on the complete graph with vertex set $V=[L]$.  This long-range complete graph is an auxiliary probability representation, not the physical interaction graph of the TFI Hamiltonian: the Hamiltonian remains the nearest-neighbour spin chain of Eq.~\eqref{eq:tfi-ham}, while the Bernoulli edges encode the measurement probability law.  We write $\PP$ and $\EE$ for probability and expectation.  Their occupation probabilities are
\begin{equation}
\beta_{ij}^{(L)}\equiv\PP(\eta_{ij}=1)
=\frac12\left[1-\sqrt{1-z_{|i-j|}^{(L)}}\right].
\label{eq:beta-ring}
\end{equation}
Then
\begin{equation}
1-2\beta_{ij}^{(L)}=\sqrt{1-z_{|i-j|}^{(L)}}.
\end{equation}
For an edge configuration $F$, let $\partial F\subseteq V$ denote the set of vertices with odd incident edge parity.  Its Walsh transform is
\begin{equation}
\EE\,\chi_S(\partial F)=
\prod_{\substack{i\in S\\j\notin S}}(1-2\beta_{ij}^{(L)}).
\label{eq:cut-transform}
\end{equation}
The cut-product identity for the Cauchy minors gives exactly
\begin{equation}
\prod_{\substack{i\in S\\j\notin S}}(1-2\beta_{ij}^{(L)})=d_L(S).
\end{equation}
Identify a spin configuration with the subset $D=\{j\in[L]:\sigma_j^z=-1\}$, so that $\chi_S(\bm\sigma)=(-1)^{|S\cap D|}$.  The Walsh characters form a complete orthogonal basis on the Boolean hypercube, so equality of all Walsh coefficients determines the probability law uniquely.  Since the Born law and the Bernoulli boundary law have the same coefficient $d_L(S)$ for every $S\subseteq[L]$, Walsh inversion gives the central finite-size statement below.  The orthogonality and inversion argument is given in Appendix~\ref{app:det-walsh}, while the cut-product step is proved in Appendix~\ref{app:bernoulli-proof}:
\begin{equation}
p_L(D)=\PP\{\partial F=D\}.
\label{eq:parity-law}
\end{equation}
Appendix~\ref{app:exact-checkpoints} also gives an equivalent high-temperature-current representation of the same parity law, in which the auxiliary long-range coupling has the inverse-square tail $J_m=1/(16m^2)+O(m^{-4})$.

Equation~\eqref{eq:parity-law} immediately gives the entropy identity
\begin{align}
H_1&=H(\partial F)=H(F)-H(F\mid\partial F)\nonumber\\
&\equiv H_{\edge}-H_{\cyc}.
\label{eq:edge-cycle}
\end{align}

The boundary-map interpretation of the conditional term and its decomposition into weighted cycle-space fibers are summarized in \Cref{fig:cycle-fibers}.

\begin{figure}[t]
\centering
\includegraphics[width=\textwidth]{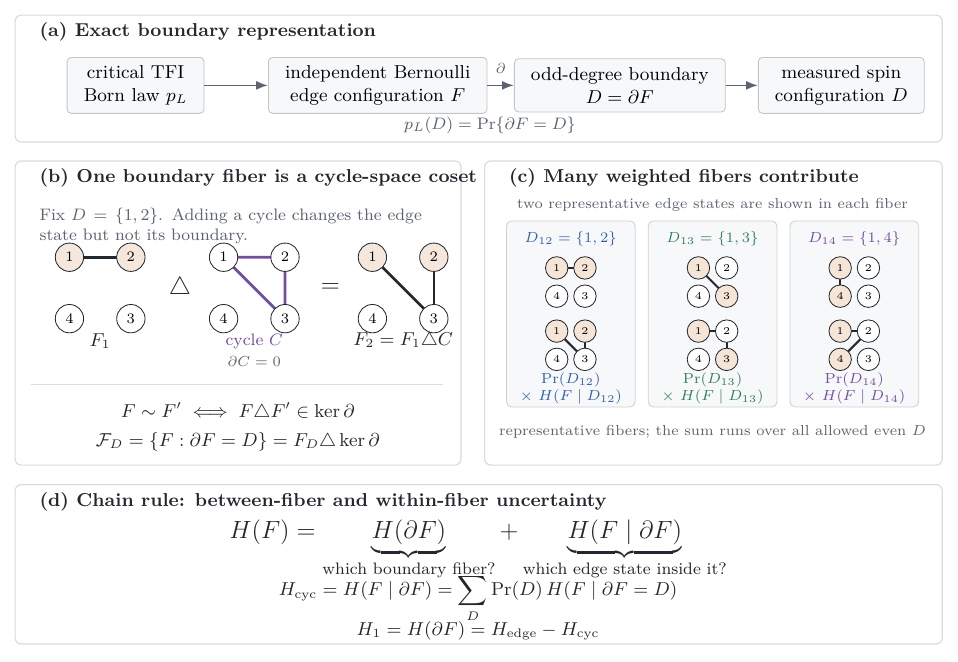}
\caption{Cycle entropy as uncertainty inside weighted boundary fibers.  (a) The critical TFI Born law is represented exactly as the odd-degree boundary $D=\partial F$ of an independent Bernoulli edge configuration $F$.  (b) For fixed boundary $D$, compatible edge configurations form an affine coset of the binary cycle space: adding a cycle $C$, which is an even-degree subgraph with $\partial C=0$, changes $F$ but leaves $\partial F$ unchanged.  (c) Different allowed boundaries label distinct weighted fibers; two representative edge states are shown in each displayed fiber, while the conditional entropy averages over all edge configurations and all allowed even $D$.  (d) The Shannon chain rule separates uncertainty between fibers, $H(\partial F)=H_1$, from uncertainty within fibers, $H(F\mid\partial F)=H_{\cyc}$, giving $H_1=H_{\edge}-H_{\cyc}$.}
\label{fig:cycle-fibers}
\end{figure}

Since the edges are independent,
\begin{equation}
H_{\edge}=\sum_{i<j}h(\beta_{ij}),
\qquad
h(p)=-p\log p-(1-p)\log(1-p).
\end{equation}
The second term is a weighted cycle-code entropy: two edge configurations have the same boundary iff their symmetric difference is an even subgraph, i.e. an element of the binary cycle space, as illustrated explicitly in panel~(b) of \Cref{fig:cycle-fibers}.  This interpretation is used throughout the rest of the paper.  The probability-law representation itself is independent of the entropy functional and may also be used to evaluate local-basis R\'enyi entropies.  What is special to the Shannon point is the ordinary chain rule leading to Eq.~\eqref{eq:edge-cycle}; there is no identical edge-minus-conditional-entropy decomposition at generic R\'enyi index.  Appendix~\ref{app:renyi-moment} gives an exact normalized-moment identity whose derivative at $n=1$ reproduces the cycle entropy, while Appendix~\ref{app:exact-checkpoints} records the exact support check $H_0^{(z)}=(L-1)\log2$.

The determinant-to-Walsh, Cauchy-factorization and cut-product steps are given respectively in Appendices~\ref{app:det-walsh}, \ref{app:cauchy} and \ref{app:bernoulli-proof}.  With the common probability law established, we now keep the two geometries separate and begin with the complete periodic chain.

\section{Full periodic system: universal Shannon constant}
\label{sec:periodic-edge}

For an even periodic chain of length $L$ we write
\begin{equation}
H_1^{(z)}(L)=s_1L+r_1^{(z)}+o(1),
\label{eq:pbc-asympt}
\end{equation}
with $H_1^{(x)}(L)=H_1^{(z)}(L)+\log2$ from \eqref{eq:KW}.  The target of this section is the universal $O(1)$ constant, not the subsystem logarithmic coefficient studied in \Cref{sec:subsystem-setup}.  Writing
\begin{equation}
H_{\cyc}(L)=s_{\cyc}L+c_{\cyc}+o(1),
\end{equation}
the asymptotic coefficient bookkeeping implied by \eqref{eq:edge-cycle} and the derived edge anomaly below is
\begin{equation}
s_1=s_{\edge}-s_{\cyc},\qquad
r_1^{(z)}=-\frac{\pi^2}{16}-c_{\cyc},\qquad
r_1^{(x)}=\log2-\frac{\pi^2}{16}-c_{\cyc}.
\label{eq:periodic-bookkeeping}
\end{equation}
Thus the edge contribution is obtained analytically, whereas the complete $c_{\cyc}$ (and hence the periodic constants in either basis) is not claimed here in closed form: it is determined numerically and organized by the linked expansion developed below.

\subsection{Independent-edge contribution}

For the ring the edge entropy is
\begin{equation}
H_{\edge}(L)=\frac{L}{2}\sum_{m=1}^{L-1}h\!\left[
\frac12\left(1-\sqrt{1-z_m^{(L)}}\right)\right].
\label{eq:hedge-ring}
\end{equation}
The bulk density is the absolutely convergent infinite-line series
\begin{equation}
s_{\edge}=\sum_{m=1}^{\infty}h\!\left[
\frac12\left(1-\sqrt{1-\frac1{4m^2}}\right)\right]
=0.5163350281651047631\ldots .
\label{eq:sedge}
\end{equation}
The universal ring correction comes solely from the nonanalytic $-z\log z/4$ term in
\begin{equation}
h\!\left(\frac{1-\sqrt{1-z}}{2}\right)
=\frac z4\,[1+\log4-\log z]+O(z^2|\log z|).
\end{equation}
A singular Euler--Maclaurin calculation, given in Appendix~\ref{app:edge-anomaly}, yields
\begin{equation}
H_{\edge}(L)=s_{\edge}L-\frac{\pi^2}{16}+o(1).
\label{eq:periodic-edge-result}
\end{equation}
The closed analytic value $-\pi^2/16$ is a derived asymptotic coefficient.  It isolates a contribution that is fully determined analytically from the remaining part of the periodic constant.

\subsection{Vertex-linked expansion and matching kernels}
\label{sec:periodic-linked}

The conditional entropy $H_{\cyc}$ is extensive, so expanding directly in graph cycles is badly redundant.  We instead reorganize it by vertex support using M\"obius inversion on the subset lattice.  This is a connected-cluster expansion in support space rather than a literal enumeration of graph cycles, and it does not require choosing a cycle basis.

For a finite vertex set $S$ let
\begin{equation}
C_L(S)=H(F_S\mid\partial F_S)
\end{equation}
be the conditional entropy of the induced graph on $S$.  Define the vertex M\"obius cumulant
\begin{equation}
\Psi_L(S)=\sum_{T\subseteq S}(-1)^{|S|-|T|}C_L(T).
\label{eq:psi}
\end{equation}
The quantity $\Psi_L(S)$ is a signed linked coefficient, not itself a Shannon entropy: it is the inclusion--exclusion remainder assigned to the complete support $S$ after contributions associated with proper sub-supports have been removed, and it need not be nonnegative.  The inverse M\"obius relation, together with explicit three- and four-vertex examples, is given in Appendix~\ref{app:mobius}.  Applying that inverse relation to the full vertex set gives the exact finite-size identity
\begin{equation}
H_{\cyc}(L)=\sum_{k=3}^{L}A_{k,L},\qquad
A_{k,L}=\sum_{|S|=k}\Psi_L(S).
\label{eq:linked-expansion}
\end{equation}
The expansion begins at three vertices because a graph on one or two vertices has no cycle.

\paragraph{From the exact lattice sector to a macroscopic support.}
For an ordered $k$-set on the ring introduce positive cyclic gaps $g_1+\cdots+g_k=L$.  Translation invariance gives the exact integer-gap formula
\begin{equation}
A_{k,L}=\frac{L}{k}\sum_{g_1+\cdots+g_k=L}\Psi_{k,L}(g_1,\ldots,g_k).
\label{eq:gap-sum}
\end{equation}
Equation~\eqref{eq:gap-sum}, rather than a continuum integral, is the starting point for the fixed-support asymptotics.  The question is which regions of this discrete sum can contribute to the $L^0$ intercept of $A_{k,L}$.

First consider its fully macroscopic interior.  Write $g_a=L u_a$, with $u_a>0$ and $\sum_a u_a=1$, and define the associated ordered points on the unit circle by
\begin{equation}
x_1=0,\qquad x_j=\sum_{a=1}^{j-1}u_a\pmod 1\quad (j=2,\ldots,k).
\label{eq:gap-to-points}
\end{equation}
For clarity one may temporarily restrict to a truncated simplex
\begin{equation}
\Delta_{k-1}^{(\epsilon)}=
\left\{u_a\ge \epsilon,\ \sum_{a=1}^{k}u_a=1\right\},
\qquad 0<\epsilon<1/k,
\label{eq:truncated-simplex}
\end{equation}
so that every pairwise separation remains uniformly macroscopic.  At a separation $m=xL$ one edge has
\begin{equation}
\beta_{xL}^{(L)}=\frac{\pi^2}{16L^2}\csc^2(\pi x)+O(L^{-4}).
\label{eq:macro-edge}
\end{equation}
It is convenient to write
\begin{equation}
K_{ij}=\frac{\pi}{4\sin\pi(x_i-x_j)}\quad(i\ne j),\qquad K_{ji}=-K_{ij},
\label{eq:K-macro}
\end{equation}
so that $\beta_{ij}^{(L)}=L^{-2}K_{ij}^2+O(L^{-4})$.  For fixed $k$ and $u\in\Delta_{k-1}^{(\epsilon)}$, an edge configuration $E$ containing $m$ occupied edges therefore has
\begin{equation}
\PP\{F_S=E\}
=L^{-2m}\prod_{(ij)\in E}K_{ij}^2\,[1+O_\epsilon(L^{-2})].
\label{eq:macro-edge-configuration}
\end{equation}
The absent-edge factors contribute only the bracket in Eq.~\eqref{eq:macro-edge-configuration}.  Thus, in the macroscopic interior, the rare-edge expansion is ordered simply by the number of occupied long edges.

The role of the M\"obius transform is particularly transparent in this expansion.  Each term in the rare-edge expansion has a vertex support: the set of vertices incident on the edge factors appearing in that term.  A term carried by a proper subset $R\subsetneq S$ appears in every induced graph $T$ with $R\subseteq T\subseteq S$, and its coefficient in $\Psi_L(S)$ is multiplied by
\begin{equation}
\sum_{R\subseteq T\subseteq S}(-1)^{|S|-|T|}
=(1-1)^{|S|-|R|}=0.
\label{eq:mobius-support-filter}
\end{equation}
Thus, order by order in the macroscopic rare-edge expansion, the M\"obius combination acts as a support projector: only terms whose edge factors require all vertices of $S$ can survive in $\Psi_L(S)$.  This is the precise asymptotic meaning of the ``full-support'' terminology used below.

\paragraph{Why perfect matchings are the leading even-support ambiguity.}
Let $|S|=2p$ and focus first on the boundary $D=S$, for which every vertex is odd.  Any edge configuration with this boundary has at least $p$ occupied edges.  Indeed, the sum of the degrees is at least $2p$, while it equals twice the number of edges.  If exactly $p$ edges are present, every vertex must have degree one, so the configuration is a perfect matching.  Conversely every perfect matching has boundary $S$.  Therefore the minimal realizations of the all-odd boundary are \emph{exactly} the perfect matchings of the $2p$ vertices.

Let $\cM_{2p}$ denote this set of perfect matchings and define
\begin{equation}
w_M=\prod_{(ij)\in M}K_{ij}^2,
\qquad
Z_p=\sum_{M\in\cM_{2p}}w_M.
\label{eq:matching-weights}
\end{equation}
Equation~\eqref{eq:macro-edge-configuration} then gives, uniformly on $\Delta_{2p-1}^{(\epsilon)}$,
\begin{align}
\PP\{F_S=M\}&=L^{-2p}w_M\,[1+O_\epsilon(L^{-2})],
\label{eq:matching-probability}\\
\PP\{\partial F_S=S\}&=L^{-2p}Z_p\,[1+O_\epsilon(L^{-2})].
\label{eq:allodd-probability}
\end{align}
Configurations in the same boundary fiber with more than $p$ edges are down by at least another factor $L^{-2}$.  Consequently, conditioned on the all-odd boundary,
\begin{equation}
\PP\{F_S=M\mid\partial F_S=S\}
=\frac{w_M}{Z_p}+O_\epsilon(L^{-2}).
\label{eq:matching-conditional-law}
\end{equation}
The leading conditional uncertainty in this fiber is therefore the Shannon entropy of the normalized matching weights.  If $\mu_M=w_M/Z_p$, its contribution to $C_L(S)$ is
\begin{align}
&\PP\{\partial F_S=S\}
H(F_S\mid\partial F_S=S)\nonumber\\
&\hspace{1cm}=L^{-2p}Z_pH(\mu)
+O_\epsilon(L^{-2p-2}\log L)\nonumber\\
&\hspace{1cm}=L^{-2p}\Gamma_{2p}(u)
+O_\epsilon(L^{-2p-2}\log L),
\label{eq:allodd-entropy-scaling}
\end{align}
where
\begin{equation}
\Gamma_{2p}=Z_p\log Z_p-\sum_{M}w_M\log w_M.
\label{eq:Gamma-matching}
\end{equation}
Thus $\Gamma_{2p}$ is not introduced as an independent continuum ansatz: it is the leading conditional Shannon ambiguity inside the rare all-odd boundary fiber of the exact Bernoulli graph.

It remains to connect this fiber calculation to the linked coefficient $\Psi_L(S)$.  The support filter in Eq.~\eqref{eq:mobius-support-filter} removes lower-order ambiguities that can already be realized on proper vertex subsets.  At order $p$ in the macroscopic edge activities, any graph using all $2p$ vertices must have exactly $p$ edges and hence degree one at every vertex; it is therefore one of the perfect matchings above and has the all-odd boundary.  A proper boundary can involve all $2p$ vertices only with more than $p$ edges and is suppressed by at least $L^{-2}$ relative to Eq.~\eqref{eq:allodd-entropy-scaling}.  Accordingly, away from collision faces the first primitive full-support term is
\begin{equation}
\Psi_{2p,L}(Lu_1,\ldots,Lu_{2p})
=L^{-2p}\Gamma_{2p}(u)
+O_\epsilon(L^{-2p-2}\log L).
\label{eq:Psi-macro-matching}
\end{equation}
Equation~\eqref{eq:Psi-macro-matching} is the missing bridge between the exact M\"obius sector and the continuum matching kernel: the M\"obius transform selects full vertex support, while minimal full-support parity configurations at even order are perfect matchings.

For four vertices this mechanism can be seen without any abstraction.  There are precisely three two-edge configurations using all four vertices,
\begin{equation}
M_1=(12)(34),\qquad
M_2=(13)(24),\qquad
M_3=(14)(23),
\end{equation}
and all three have boundary $S=\{1,2,3,4\}$.  Writing their weights as $w_1,w_2,w_3$, the all-odd fiber contributes
\begin{equation}
L^{-4}\left[
(w_1+w_2+w_3)\log(w_1+w_2+w_3)
-\sum_{a=1}^{3}w_a\log w_a
\right]
+O_\epsilon(L^{-6}\log L).
\label{eq:Gamma4-explicit-fiber}
\end{equation}
All order-$L^{-2}$ ambiguities are carried by smaller supports and disappear from the four-vertex M\"obius coefficient; Eq.~\eqref{eq:Gamma4-explicit-fiber} is the first term requiring the complete four-point support.

\paragraph{Why the discrete gap sum produces an $O(1)$ continuum integral.}
The passage to an integral is now a counting statement rather than a separate assumption.  Restrict Eq.~\eqref{eq:gap-sum} to gap vectors with $g_a/L\in\Delta_{2p-1}^{(\epsilon)}$ and denote this interior contribution by $A_{2p,L}^{(\epsilon)}$.  The integer compositions of $L$ form a Riemann mesh of dimension $2p-1$, so for fixed $\epsilon$
\begin{equation}
L^{-(2p-1)}
\sum_{\substack{g_1+\cdots+g_{2p}=L\\ g/L\in\Delta_{2p-1}^{(\epsilon)}}}
F(g/L)
\longrightarrow
\int_{\Delta_{2p-1}^{(\epsilon)}}F(u)\,\dd^{2p-1}u.
\label{eq:gap-riemann}
\end{equation}
Combining Eqs.~\eqref{eq:gap-sum}, \eqref{eq:Psi-macro-matching} and \eqref{eq:gap-riemann} gives
\begin{equation}
A_{2p,L}^{(\epsilon)}
\longrightarrow
\frac1{2p}
\int_{\Delta_{2p-1}^{(\epsilon)}}
\Gamma_{2p}(u)\,\dd^{2p-1}u.
\label{eq:interior-period}
\end{equation}
The origin of the constant scaling is visible directly in the powers of $L$:
\begin{equation}
\underbrace{L}_{\text{translation}}
\times
\underbrace{L^{2p-1}}_{\text{gap configurations}}
\times
\underbrace{L^{-2p}}_{\text{matching ambiguity}}
=L^0.
\label{eq:constant-power-balance}
\end{equation}
This balance is the reason a continuum period can govern an $O(1)$ term of the discrete entropy sector.

The fixed-$\epsilon$ result \eqref{eq:interior-period} is not yet a formula for the lattice intercept $c_{2p}$.  The expansion \eqref{eq:Psi-macro-matching} is uniform only away from the boundary of the simplex.  When one or more $u_a\to0$, some lattice separations cease to be macroscopic, $K_{ij}$ becomes singular, and the continuum limit mixes with microscopic or mesoscopic collision regions.  In the exact lattice sum the smallest gap is one, corresponding to a moving continuum cutoff of order $1/L$.  These collision layers contain the extensive pieces already assigned to lower supports, logarithmic odd-cluster residues, and finite face anomalies.  Therefore replacing the whole integer-gap sum by the ordinary integral of $\Gamma_{2p}$ would double count lower-support information and in general produces a singular integral.

This is why the continuum object entering the constant is a \emph{linked finite part}, not the naive integral.  The finite-part prescription is required to implement, on the collision faces of the continuum simplex, the same hierarchy of lower-support subtraction that is exact on the lattice.  Appendix~\ref{app:finite-part} gives the counterterm organization.  What is controlled directly by the argument above is the macroscopic interior kernel and its $L^0$ power counting; the identification of the fully renormalized continuum finite part with the $L^0$ coefficient of the exact integer-gap sum is an additional lattice--continuum matching statement.

\begin{conjecture}[Finite-part matching]
\label{conj:FPmatch}
For every fixed $p$, the $L^0$ coefficient of the exact lattice sector $A_{2p,L}$ equals the linked finite part obtained from the macroscopic matching kernel with the cutoff inherited from positive integer gaps.
\end{conjecture}

The geometry of this sequence of steps is summarized in \Cref{fig:linked-geometry}: panels~(a) and~(b) show the exact support and M\"obius organization, panel~(c) illustrates the minimal perfect-matching fiber at four vertices, and panel~(d) indicates why collision faces require an additional linked subtraction.

\begin{figure}[t]
\centering
\includegraphics[width=\textwidth]{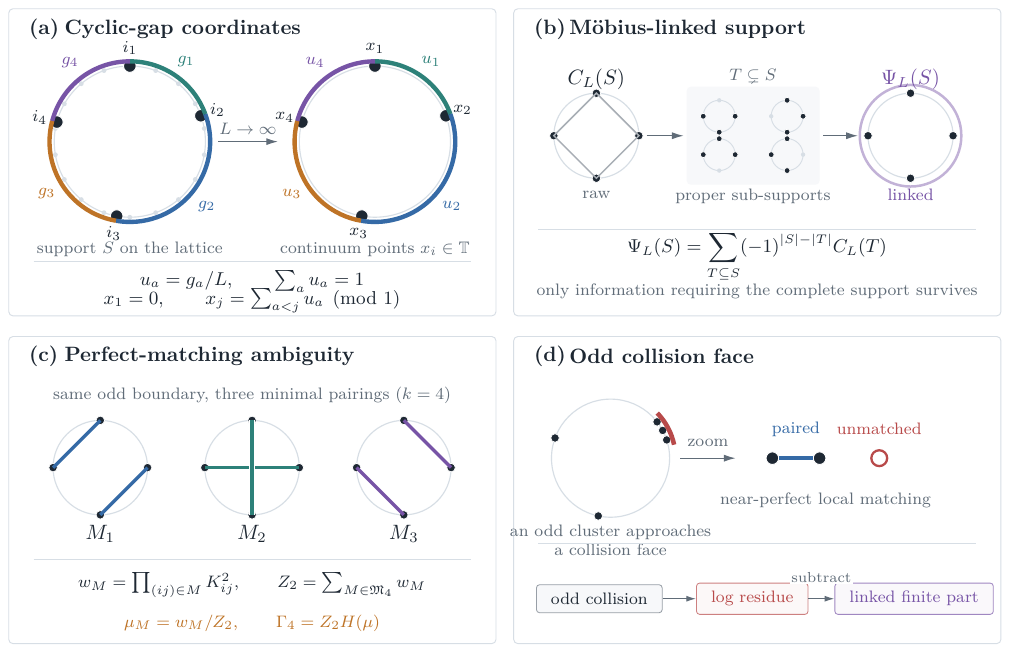}
\caption{Geometry of the periodic vertex-linked expansion.  (a) An ordered support $S$ on the periodic lattice is parametrized by positive cyclic gaps $g_a$, with $\sum_a g_a=L$.  In the macroscopic limit $g_a=L u_a$, the cumulative gaps define continuum points $x_i\in\mathbb T$.  (b) The vertex M\"obius cumulant $\Psi_L(S)$ subtracts contributions carried by proper sub-supports $T\subsetneq S$, leaving the signed linked remainder assigned by inclusion--exclusion to the complete support; the support diagrams in panel~(b) are schematic and do not represent particular edge configurations.  (c) For an even support the leading all-odd boundary ambiguity is realized by perfect matchings.  For $k=4$ there are three minimal pairings $M_1,M_2,M_3$, weighted by $w_M$, whose normalized ambiguity gives $\Gamma_4=Z_2H(\mu)$.  (d) When an odd cluster approaches a collision face, its minimal local realization is a near-perfect matching with one unmatched vertex.  The resulting logarithmic residue is subtracted in defining the linked finite part.}
\label{fig:linked-geometry}
\end{figure}

\paragraph{Cauchy reduction of the matching sum.}
Having identified the probabilistic origin of $\Gamma_{2p}$, we now exploit the special Cauchy kernel to evaluate it efficiently.  The identity
\begin{equation}
\Hf(K_{ij}^2)=\Pf(K)^2=\det K
\label{eq:pfhaf}
\end{equation}
turns the unsigned perfect-matching partition function into a determinant.  Here $\Pf$ is the Pfaffian of an antisymmetric matrix, characterized by $\Pf(K)^2=\det K$, while $\Hf$ is the Hafnian of a symmetric matrix, i.e. the unsigned sum over perfect matchings of products of matrix entries.  Equation~\eqref{eq:pfhaf} is a Cauchy-type Pfaffian--Hafnian identity.  Related Pfaffian--Hafnian analogues of Borchardt's identity were proved in Ref.~\cite{IshikawaKawamukoOkada2005}; Appendix~\ref{app:pfhaf-proof} gives a self-contained proof for the present rational and trigonometric Cauchy kernels.  Below $K_{\widehat i\widehat j}$ denotes the matrix obtained by deleting rows and columns $i,j$.  Moreover,
\begin{equation}
\sum_Mw_M\log w_M
=\sum_{i<j}K_{ij}^2\det K_{\widehat i\widehat j}\log K_{ij}^2,
\end{equation}
so
\begin{equation}
\Gamma_{2p}(K)=\det K\left[\log\det K-
\sum_{i<j}(K_{ij}K^{-1}_{ij})^2\log K_{ij}^2\right].
\label{eq:Gamma-det}
\end{equation}
The first equality follows by exchanging the sum over matchings with the sum over edges and applying Eq.~\eqref{eq:pfhaf} to the complementary minor; Jacobi's complementary-minor identity then gives $\det K_{\widehat i\widehat j}=\det K\,(K^{-1}_{ij})^2$.  Thus the factorial perfect-matching sum is reduced to $O(p^3)$ linear algebra; Appendix~\ref{app:pfhaf-proof} records the details.

Whenever a fixed-support sector admits an expansion
\begin{equation}
A_{k,L}=s_kL+c_k+o(1),
\label{eq:ck-definition}
\end{equation}
we call $c_k$ its linked $L^0$ intercept; it should not be confused with the complete cycle constant $c_{\cyc}$.  The macroscopic support argument also distinguishes even and odd $k$ before any collision analysis.  A graph using all $2p$ vertices can do so with $p$ edges, and the resulting minimal configurations are the perfect matchings above.  For $2p+1$ vertices, by contrast, at least $p+1$ edges are needed to use every vertex.  Its primitive macroscopic weight is therefore $L^{-2p-2}$, while the number of placements is $O(L^{2p+1})$, leaving only $O(L^{-1})$.  The possible loophole is again a collision face on which part of the support collapses to a lower-support structure.  Appendix~\ref{app:odd-selection} analyzes the first cases $k=3,5$ and the general collision-forest power counting.  Conditional on recursive linked collision subtraction closing compatibly with the exact vertex M\"obius expansion at every proper support, the renormalized odd sectors obey
\begin{equation}
A_{2p+1,L}=s_{2p+1}L+O\!\left(\frac{\log^mL}{L}\right),
\qquad
c_{2p+1}=0,
\label{eq:odd-selection-main}
\end{equation}
for some finite $m$ depending on the sector.  We refer to \eqref{eq:odd-selection-main} as the \emph{odd-support selection rule}.  It is a conditional physics-level statement, not an all-order theorem: a rigorous proof would require a uniform factorization/forest theorem and a bound on the renormalized odd remainder.  Accordingly, the even-sector partial sums below remain finite-order comparisons rather than a proved all-orders reconstruction of $c_{\cyc}$.

\subsection{Universal linked coefficients and reconstruction of the constant}
\label{sec:periodic-coeffs}

Conditional on Conjecture~\ref{conj:FPmatch}, the continuum representation of the $2p$-vertex constant is
\begin{equation}
c_{2p}=\frac1{2p}\FP_{\rm linked}
\int_{\Delta_{2p-1}}\Gamma_{2p}(u)\,\dd^{2p-1}u,
\label{eq:c2p-period}
\end{equation}
where $\Delta_{2p-1}$ is the positive cyclic-gap simplex and $\Gamma_{2p}(u)$ means the matching kernel evaluated at the cumulative point coordinates $x(u)$ defined in Eq.~\eqref{eq:gap-to-points}.  The corresponding exact finite-lattice sector $A_{2p,L}$ is defined independently of this conjecture by the M\"obius expansion.

These statements have distinct logical status.  Equation~\eqref{eq:interior-period} follows from the exact gap sum and the controlled macroscopic rare-edge expansion on every truncated simplex: it identifies the primitive interior $L^0$ contribution.  Equation~\eqref{eq:c2p-period} is stronger because it asserts that, after all collision faces are renormalized with the linked prescription, the resulting finite part equals the intercept of the exact lattice sector.  Finally, neither statement by itself gives the complete cycle constant $c_{\cyc}$: reconstructing that quantity requires combining all support orders, controlling the odd sectors, and controlling the higher-even-support tail.  Thus the chain of objects is
\begin{equation}
\Gamma_{2p}\quad\longrightarrow\quad c_{2p}\quad\longrightarrow\quad c_{\cyc}\quad\longrightarrow\quad r_1,
\label{eq:linked-hierarchy}
\end{equation}
with each step requiring a distinct mathematical input.

For four vertices conformal covariance reduces the renormalized continuum problem to one modulus.  This reduction acts on the already identified matching kernel $\Gamma_4$; it is not the step that produces the matching kernel from the lattice model.  The remaining elementary integral is evaluated in Appendix~\ref{app:c4}.  The continuum finite-part calculation gives the value displayed below, while its identification with the exact lattice intercept still uses Conjecture~\ref{conj:FPmatch}:
\begin{equation}
 c_4=-\frac{\pi^3\sqrt3}{256}.
\label{eq:c4-main}
\end{equation}
The same value is independently supported by direct exact-lattice extrapolation.

For six and eight vertices we evaluate the exact finite-$L$ linked sectors and extrapolate
\begin{equation}
A_{2p,L}=s_{2p}L+c_{2p}+\frac{a_1}{L}+\frac{a_2}{L^2}+\cdots.
\end{equation}
The continuum construction is nevertheless explicit beyond four vertices: Appendix~\ref{app:c6-setup} works out the next periodic case $2p=6$ in a projective gauge, lists its fifteen reduced matching weights, writes the resulting three-modulus integral, and exhibits representative pair- and three-point collision counterterms.  That construction is an analytic setup for $c_6$, not an independent evaluation of its finite part or a replacement for Conjecture~\ref{conj:FPmatch}.
The available sizes support
\begin{equation}
 c_6\simeq 0.0346,\qquad c_8\simeq-0.00305.
 \label{eq:coeffs-main}
\end{equation}
At the present sizes the ten-vertex sector is much smaller and its inferred constant changes appreciably when the correction order is varied.    The finite-size sequences and representative fit variations are given in Appendix~\ref{app:periodic-numerics}; the purpose of Eq.~\eqref{eq:coeffs-main} is to establish the scale and sign of the next two linked corrections rather than to claim more digits than the displayed data support.

Keeping $c_4,c_6,c_8$ gives the even-sector estimate
\begin{equation}
 \log2-\frac{\pi^2}{16}-(c_4+c_6+c_8)\simeq0.25453,
 \label{eq:periodic-eight-sector}
\end{equation}
already within about $2\times10^{-4}$ of the high-precision value $0.254392505(10)$ reported in Ref.~\cite{LauGrassberger2013}.  This comparison is useful as evidence that the support expansion is organizing the constant efficiently, but it is not a convergence proof: the odd-sector statement is conditional on the forest-closure assumption in Eq.~\eqref{eq:odd-selection-main}, and a bound on the higher-support tail remains open.

Independently of the linked-support expansion, the cycle entropy can be extracted directly from the exact Walsh distribution (Appendix~\ref{app:finite-data}).  Exact sizes through $L=24$ give
\begin{equation}
H_{\cyc}(L)=s_{\cyc}L+c_{\cyc}+O(L^{-1}),
\end{equation}
with
\begin{equation}
 s_{\cyc}=0.09306147(2),\qquad
 c_{\cyc}=-0.1780956(3),
\label{eq:ccyc-direct}
\end{equation}
and hence
\begin{equation}
s_1=s_{\edge}-s_{\cyc}=0.42327356(2).
\end{equation}
Together with \eqref{eq:periodic-bookkeeping}, the same finite-size analysis gives $r_1^{(x)}=0.2543925(3)$.  The uncertainty reflects the spread under changes of the finite-size window and correction order, not sampling noise.  The more precise external value $0.254392505(10)$ from Ref.~\cite{LauGrassberger2013} is quoted only for comparison.  This determination does not use the linked coefficients $c_{2p}$.

The periodic linked program is therefore reduced, conditional on the finite-part matching conjecture, to a sequence of explicit finite-dimensional conformal periods.  What remains open is the general lattice--continuum matching theorem, a rigorous all-order proof of the conditional odd-support selection rule, and a controlled all-orders sum or tail bound.

\FloatBarrier
\section{Interval subsystem: universal Shannon logarithm}
\label{sec:subsystem-setup}

For the subsystem we retain only the interval $A_\ell=\{1,\ldots,\ell\}$ of the infinite critical chain and marginalize all exterior spins,
\begin{equation}
p_A(\sigma_A)=\sum_{\sigma_{\bar A}}p(\sigma_A,\sigma_{\bar A})
=\langle\sigma_A|\rho_A|\sigma_A\rangle,
\end{equation}
where $\rho_A$ is the reduced density matrix of the interval.  Its Shannon entropy scales as
\begin{equation}
H_1(A_\ell)=s_1\ell+\gamma_1\log\ell+\kappa_1+o(1),
\label{eq:sub-asympt}
\end{equation}
where the universal target of this section is $\gamma_1$.  This is distinct from the $O(1)$ full-system constant studied in \Cref{sec:periodic-edge}.

There are two exact ways of using the free-fermion structure for the interval.  One may reconstruct the marginal probabilities directly from the Gaussian/Walsh data restricted to $A_\ell$; we use that route later for an independent finite-size determination of the complete entropy.  For the analytic treatment, however, it is more useful to preserve the Bernoulli boundary representation of the full measurement law.  We first show that marginalizing the infinite exterior has an exact finite-graph representation obtained by replacing all crossing-edge information with independent edges to a single ghost vertex.  The ordinary Shannon chain rule then separates the interval entropy into an explicit independent-edge contribution and a conditional cycle contribution.  The former can be treated analytically, while the latter is organized by the same vertex M\"obius transform used for the periodic system.

The distinguished ghost vertex makes the conditional problem less homogeneous than on the ring: linked supports split into purely physical-line supports and supports containing the ghost.  The decomposition is nevertheless all-support from the outset.  At arbitrary ghost support the leading common-scale ambiguity is described by a monomer--dimer kernel, while the physical-line sectors inherit the linked kernels of the infinite-line Bernoulli graph.  Conditional on a compatible subtraction of lower-support collision and endpoint forests, these sectors define an all-orders hierarchy for the conditional logarithmic coefficient $\gamma_C$ and hence for $\gamma_1=\gamma_E-\gamma_C$.  At the first complete augmented support, the line coefficient $\alpha_3^{\linesys}$ and the ghost coefficient $\delta_2^{\partial}$ are both controlled analytically and their finite boundary-layer anomalies cancel, leaving the closed coefficient $\kappa_3^{+}=\alpha_3^{\linesys}+\delta_2^{\partial}=2I_2$.  Beginning at higher support, nested endpoint scales and finite lattice face anomalies make the renormalization less direct than in the periodic problem.  The direct interval distribution is used later as an independent finite-size check of the complete $\gamma_1$ and of this line/ghost bookkeeping.  No exact closed form, all-order renormalization theorem, or summability result for the total $\gamma_1$ is claimed here.

\subsection{Exact reduction of the subsystem to an augmented graph}
\label{sec:subsystem-ghost}

The interval distribution can be reconstructed directly from the exact Gaussian/Walsh data restricted to $A_\ell$, and we use that route later for the complete finite-size entropy in Sec.~\ref{sec:subsystem-direct}.  For the analytic decomposition of the Shannon entropy, however, it is more useful to preserve the Bernoulli boundary representation of the full measurement law from Sec.~\ref{sec:probability}.  In that language, taking the subsystem marginal becomes a question of how the exterior Bernoulli edges modify the parity observed on the vertices of $A_\ell$.

The key simplification is that, at each physical site $i$, the entire exterior enters only through the parity of the crossing edges incident on $i$.  We now show that these exterior parity variables can be represented exactly by independent edges to a single ghost vertex.

In the thermodynamic limit the physical edge law is
\begin{equation}
\beta_m=\frac12\left(1-\sqrt{1-\frac1{4m^2}}\right),
\qquad
q_m\equiv1-2\beta_m=\sqrt{1-\frac1{4m^2}}.
\label{eq:beta-infinite}
\end{equation}
Since $\beta_m=O(m^{-2})$, the total occupation probability of crossing edges incident on any fixed site is summable.  Split the infinite Bernoulli graph into edges wholly inside $A_\ell$, edges crossing the two cuts, and edges entirely outside.  Exterior--exterior edges never affect the parity observed inside the interval.  If $F_A$ denotes the internal graph, define for each $i\in A_\ell$ the exterior crossing parity
\begin{equation}
\xi_i=\bigoplus_{j\notin A_\ell}\eta_{ij}.
\end{equation}
Then the measured parity is
\begin{equation}
D_i=(\partial F_A)_i\oplus\xi_i.
\label{eq:sub-parity}
\end{equation}
A crossing edge has exactly one endpoint in $A_\ell$, so the sets of Bernoulli variables entering $\xi_i$ and $\xi_j$ are disjoint for $i\neq j$.  Hence the variables $\xi_i$ are independent.  A finite-range truncation establishing that these infinite parities are well defined and that their independence survives the infinite-volume limit is given in Appendix~\ref{app:ghost-details}.

Let
\begin{equation}
\lambda_{i,\ell}\equiv\PP(\xi_i=1)
\end{equation}
be the probability that the exterior contributes odd parity at site $i$.  Because the $\xi_i$ are independent, their joint law is reproduced exactly by independent Bernoulli variables $\zeta_i$ with $\PP(\zeta_i=1)=\lambda_{i,\ell}$.  Represent $\zeta_i$ by an edge from a new ghost vertex $0$ to $i$ and define
\begin{equation}
F^+=F_A\cup\{(0,i):\zeta_i=1\},
\qquad V_\ell^+=\{0\}\cup A_\ell.
\label{eq:Fplus}
\end{equation}

Thus the infinite exterior can be compressed, for purposes of the interval measurement law, to one independent parity variable per physical site; representing those variables as ghost edges preserves the marginal law while restoring the boundary form of a finite graph.

\begin{proposition}[Exact ghost reduction]
The marginal measurement vector on $A_\ell$ is equal in distribution to the physical components of the augmented boundary $\partial F^+$.  The ghost component is fixed by those physical components, so $D$ and $\partial F^+$ are in one-to-one correspondence and
\begin{equation}
H_1(A_\ell)=H(D)=H(\partial F^+)=H(F^+)-H(F^+\mid\partial F^+).
\label{eq:ghost-entropy-chain}
\end{equation}
\end{proposition}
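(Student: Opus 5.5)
The plan has three steps: transfer the parity law to the interval, show that the ghost component is redundant, and apply the Shannon chain rule on a finite graph.

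\emph{Step 1 (interval law).} We start from the infinite-volume version of the parity law \eqref{eq:parity-law} restricted to $A_\ell$. The marginal measurement vector on $A_\ell$ is the restriction of $\partial F$ to $A_\ell$. By \eqref{eq:sub-parity}, this restriction equals $D=\partial F_A\oplus\xi$, where $\xi=(\xi_i)_{i\in A_\ell}$. Here $F_A$ is built from internal edges and $\xi$ from crossing edges; these are disjoint families of independent Bernoulli variables, so $F_A$ and $\xi$ are independent. The $\xi_i$ are mutually independent among themselves, with $\PP(\xi_i=1)=\lambda_{i,\ell}$. Hence $(F_A,\xi)$ has the same joint law as $(F_A,\zeta)$.

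The well-definedness of the infinite parities $\xi_i$ and the passage from the finite ring (or a finite-range truncation) to the infinite line are taken from Appendix~\ref{app:ghost-details}. They use only the summability of $\beta_m=O(m^{-2})$. This is the one genuinely analytic point, and I expect it to be the main obstacle. I would handle it by truncating to crossing edges of range at most $R$. For each $R$, independence and the identity in law are exact. The truncation error in total variation is bounded by $\sum_{i\in A_\ell}\sum_{|j-i|>R}\beta_{|i-j|}\to0$. Convergence in total variation of laws on the finite set $\{0,1\}^{A_\ell}$ then transfers both the identity in law and the entropy.

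\emph{Step 2 (ghost redundancy).} For each physical $i$, the degree of $i$ in $F^+$ is its degree in $F_A$ plus $\zeta_i$. So $(\partial F^+)_i=(\partial F_A)_i\oplus\zeta_i$, which equals $D_i$ in law jointly over $i\in A_\ell$ by Step 1. This proves the first claim.

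For the ghost component, use the handshake identity: the number of odd-degree vertices of any finite graph is even. Applied to $F^+$ on $V_\ell^+$, it gives $(\partial F^+)_0=\bigoplus_{i\in A_\ell}(\partial F^+)_i$. Thus the map $D\mapsto(\bigoplus_i D_i,D)$ is a bijection onto the support of $\partial F^+$, so $H(D)=H(\partial F^+)$.

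\emph{Step 3 (chain rule).} $\partial F^+$ is a deterministic function of the finite random edge set $F^+$. Hence $H(F^+,\partial F^+)=H(F^+)$, and the chain rule gives $H(\partial F^+)=H(F^+)-H(F^+\mid\partial F^+)$. This is exactly as in \eqref{eq:edge-cycle}, now on the finite augmented vertex set $V_\ell^+$, and it completes \eqref{eq:ghost-entropy-chain}.
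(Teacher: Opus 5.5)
Your proposal is correct and follows the paper's proof: the exterior parities $\xi_i$ are independent of $F_A$ and of each other and are replaced in law by ghost edges, the handshake parity fixes the ghost bit, and the chain rule is applied to the deterministic map $F^+\mapsto\partial F^+$. The only difference is technical: you control the range-$R$ truncation with a total-variation union bound, whereas the paper (Appendix~\ref{app:ghost-details}) uses Borel--Cantelli almost-sure stabilization and factorized characteristic functions; either suffices.
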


Indeed, for every physical vertex,
\begin{equation}
(\partial F^+)_i=(\partial F_A)_i\oplus\zeta_i,
\end{equation}
so, jointly over all $i\in A_\ell$,
\begin{equation}
\bigl((\partial F^+)_i\bigr)_{i\in A_\ell}
\stackrel{d}{=}D.
\label{eq:ghost-boundary-physical}
\end{equation}
The ghost boundary bit contains no additional information.  Since the boundary of every finite graph has even total parity,
\begin{equation}
(\partial F^+)_0=\bigoplus_{i=1}^{\ell}(\partial F^+)_i
=\bigoplus_{i=1}^{\ell}D_i.
\label{eq:ghost-bit-determined}
\end{equation}
Thus $D$ and the full augmented boundary $\partial F^+$ determine one another, so $H(D)=H(\partial F^+)$.  Since $\partial F^+$ is a deterministic function of $F^+$, the Shannon chain rule gives Eq.~\eqref{eq:ghost-entropy-chain}.

The reduction is illustrated in \Cref{fig:ghost}.  It is an equality in distribution of the parity variables relevant to the subsystem measurement law, not an identification of the microscopic exterior edge configuration.  The same ghost representation extends to the real-$n>1$ interval problem~\cite{RajabpourBoundaryPhase2026}.

\begin{figure}[t]
\centering
\includegraphics[width=\textwidth]{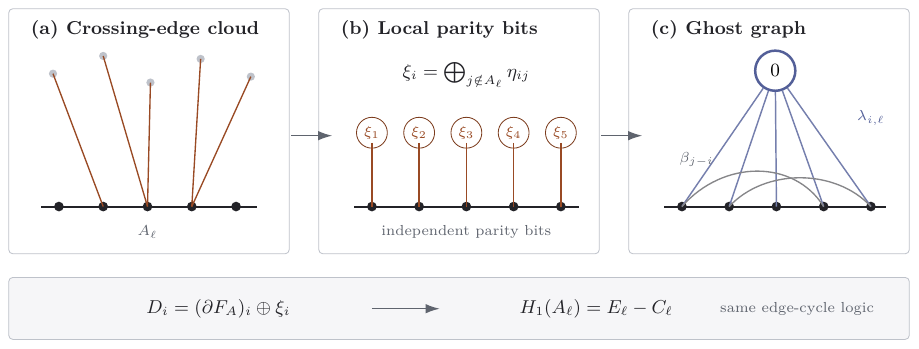}
\caption{Exact interval reduction.  (a) The exterior of $A_\ell$ contributes through the cloud of crossing Bernoulli edges.  (b) For each physical site $i$, those exterior edges enter the measured parity only through the single bit $\xi_i=\bigoplus_{j\notin A_\ell}\eta_{ij}$, with $\Pr(\xi_i=1)=\lambda_{i,\ell}$.  (c) The independent parity bits $\xi_i$ are represented exactly by edges from a single ghost vertex $0$ to the physical sites, while the original internal edges of probabilities $\beta_{j-i}$ remain present.  The physical components of $\partial F^+$ are the measured bits, and the ghost component is fixed by their total parity; the augmented graph therefore reproduces the complete interval measurement law.}
\label{fig:ghost}
\end{figure}

It remains to determine the ghost-edge probabilities $\lambda_{i,\ell}$ explicitly.  Since $\xi_i\in\{0,1\}$, the nontrivial $\Ztwo$ character gives
\begin{equation}
\EE(-1)^{\xi_i}=1-2\lambda_{i,\ell}.
\end{equation}
Parity addition becomes multiplication under this character,
\begin{equation}
(-1)^{\xi_i}
=\prod_{j\notin A_\ell}(-1)^{\eta_{ij}},
\qquad
\EE(-1)^{\eta_{ij}}=1-2\beta_{|i-j|}=q_{|i-j|}.
\end{equation}
Independence of the crossing edges therefore factorizes the expectation.  The exterior vertices to the left of $A_\ell$ occur at distances $m=i,i+1,\ldots$ from $i$, while those to the right occur at distances $m=\ell+1-i,\ell+2-i,\ldots$.  Hence
\begin{equation}
1-2\lambda_{i,\ell}
=\prod_{m=i}^{\infty}q_m
 \prod_{m=\ell+1-i}^{\infty}q_m.
\end{equation}
To evaluate the product, define the squared tail
\begin{equation}
P_n\equiv\prod_{m=n}^{\infty}q_m^2
=\prod_{m=n}^{\infty}\left(1-\frac1{4m^2}\right)
=\frac{\Gamma(n)^2}{\Gamma(n-\frac12)\Gamma(n+\frac12)}.
\end{equation}
Since $\prod_{m=n}^{\infty}q_m=\sqrt{P_n}$, the exact ghost-edge probability is
\begin{equation}
1-2\lambda_{i,\ell}=\sqrt{P_iP_{\ell+1-i}},\qquad
\lambda_{i,\ell}=\frac12\left[1-\sqrt{P_iP_{\ell+1-i}}\right].
\label{eq:lambda-main}
\end{equation}
The finite-truncation derivation of the infinite product and the Gamma-function tail identity are recorded in Appendix~\ref{app:ghost-details}.

With the ghost probabilities now explicit, the consequence of the exact ghost-reduction proposition can be written in the same edge--cycle form as for the full system.  Define the entropy of the independent augmented edges by
\begin{equation}
E_\ell\equiv H(F^+)
=\sum_{m=1}^{\ell-1}(\ell-m)h(\beta_m)
+\sum_{i=1}^{\ell}h(\lambda_{i,\ell}),
\label{eq:Eell}
\end{equation}
and the residual conditional uncertainty by
\begin{equation}
C_\ell\equiv H(F^+\mid\partial F^+)
=H(F_A,\zeta\mid\partial F_A\oplus\zeta).
\label{eq:Cell-augmented}
\end{equation}
Then the exact interval entropy is
\begin{equation}
H_1(A_\ell)=E_\ell-C_\ell.
\label{eq:sub-split}
\end{equation}
Thus the ghost construction does more than reproduce the marginal probabilities: it restores a finite independent-edge graph for which the Shannon entropy again separates into an explicit edge entropy and a conditional cycle entropy.  The next subsection determines the logarithmic part of $E_\ell$; the remaining term $C_\ell$ is organized by linked supports afterwards.

\subsection{Independent contribution and cancellation of squared logarithms}
\label{sec:subsystem-log}

The exact split \eqref{eq:sub-split} reduces the first part of the subsystem problem to the independent-edge entropy $E_\ell$.  The augmented graph contains two distinct classes of independent edges, so it is useful to separate
\begin{equation}
E_\ell=E_\ell^{\rm int}+E_\ell^{\rm ghost},
\qquad
E_\ell^{\rm int}=\sum_{m=1}^{\ell-1}(\ell-m)h(\beta_m),
\qquad
E_\ell^{\rm ghost}=\sum_{i=1}^{\ell}h(\lambda_{i,\ell}).
\label{eq:Eell-two-pieces}
\end{equation}
The purpose of this subsection is to determine the subleading growth of these two terms.  Neither piece separately has the single-logarithm form expected for the full interval entropy: each contains a $(\log\ell)^2$ contribution.  We show that these squared logarithms arise from two complementary boundary mechanisms and cancel exactly in their sum, leaving a single logarithm with an analytic coefficient.

\paragraph{Internal physical edges.}
Put $a=1/16$.  At large separation,
\begin{align}
h(\beta_m)
&=\frac{a}{m^2}\left[1-\log a+2\log m\right]
+O\!\left(\frac{\log m}{m^4}\right)\nonumber\\
&=\frac1{m^2}\left[\frac{\log m}{8}+\frac1{16}+\frac{\log2}{4}\right]
+O\!\left(\frac{\log m}{m^4}\right).
\label{eq:hbeta-tail}
\end{align}
An edge of length $m$ has $\ell-m$ possible placements inside the interval.  Relative to the infinite bulk edge entropy $s_{\edge}\ell=\ell\sum_{m\ge1}h(\beta_m)$, restricting the graph to the interval removes edge placements that would cross either cut.  Exactly,
\begin{equation}
E_\ell^{\rm int}
=s_{\edge}\ell-
\left[
\ell\sum_{m\ge\ell}h(\beta_m)
+\sum_{m<\ell}m h(\beta_m)
\right].
\label{eq:internal-split}
\end{equation}
Thus every subextensive term in $E_\ell^{\rm int}$ is a cut deficit rather than a bulk contribution.  The origin of the squared logarithm is already visible in the second term inside the brackets: Eq.~\eqref{eq:hbeta-tail} gives
\begin{equation}
-\sum_{m<\ell}m h(\beta_m)
=-2a\sum_{m<\ell}\frac{\log m}{m}+O(\log\ell)
=-a(\log\ell)^2+O(\log\ell).
\end{equation}
Keeping also the single-logarithmic pieces from this sum and from the missing tail $m\ge\ell$ gives
\begin{equation}
E_\ell^{\rm int}
=s_{\edge}\ell-a(\log\ell)^2-a(3-\log a)\log\ell+O(1).
\label{eq:internal-asympt}
\end{equation}
Thus the deficit of physical-edge placements created by the two interval boundaries produces a negative squared logarithm.

\paragraph{Ghost edges.}
The exterior degrees of freedom that were removed from the physical graph re-enter through the ghost probabilities.  Stirling expansion of the exact formula \eqref{eq:lambda-main}, with $N=\ell+1$, gives
\begin{equation}
\lambda_{i,\ell}
=a\left(\frac1i+\frac1{N-i}\right)
+O\!\left(i^{-2}+(N-i)^{-2}\right).
\label{eq:lambda-two-endpoints}
\end{equation}
Near either endpoint this has the slower tail $\lambda\sim a/R$.  Since a rare Bernoulli variable has
\begin{equation}
h(\lambda)=\lambda(1-\log\lambda)+O(\lambda^2),
\end{equation}
one endpoint contributes asymptotically $a\sum_{R<\ell}(\log R)/R$, and the two endpoints together therefore generate $+a(\log\ell)^2$.  Carrying the same expansion to single-logarithmic order gives
\begin{equation}
E_\ell^{\rm ghost}
=+a(\log\ell)^2+2a(1-\log a)\log\ell+O(1).
\label{eq:ghost-asympt}
\end{equation}
The detailed harmonic sums, the control of the remainder in \eqref{eq:lambda-two-endpoints}, and the corresponding endpoint summation are given in Appendix~\ref{app:sub-independent}.

\paragraph{Cancellation and the coefficient $\gamma_E$.}
The two squared logarithms have equal magnitude and opposite sign:
\begin{equation}
E_\ell^{\rm int}\supset-a(\log\ell)^2,
\qquad
E_\ell^{\rm ghost}\supset+a(\log\ell)^2.
\end{equation}
Their cancellation is the finite-interval counterpart of keeping the information carried by the exterior crossing edges: removing those edges from the physical interval creates the first term, while encoding their parity by ghost edges supplies the compensating second term.  Adding Eqs.~\eqref{eq:internal-asympt} and \eqref{eq:ghost-asympt} leaves
\begin{equation}
E_\ell=s_{\edge}\ell+
\left(\frac{\log2}{4}-\frac1{16}\right)\log\ell+O(1),
\label{eq:Eell-asympt}
\end{equation}
so the independent-edge logarithmic coefficient is
\begin{equation}
\gamma_E=\frac{\log2}{4}-\frac1{16}=0.1107867951399\ldots .
\label{eq:gammaE-main}
\end{equation}
The logarithmic accounting is therefore explicit: the internal physical edges contribute the cut-induced terms $-a(\log\ell)^2-a(3-\log a)\log\ell$, while the ghost edges contribute $+a(\log\ell)^2+2a(1-\log a)\log\ell$.  The squared logarithms cancel, whereas the two single-logarithmic pieces combine to \eqref{eq:gammaE-main}.

\paragraph{Relation to the real-$n>1$ cut picture.}
This decomposition should not be read as assigning separate ``interior'' and ``exterior'' origins to the universal interval logarithm.  The subleading part of $E_\ell^{\rm int}$ is itself induced by the cut, as Eq.~\eqref{eq:internal-split} makes explicit: it is the entropy deficit from physical-edge placements that are present in the infinite bulk but cannot lie wholly inside $A_\ell$.  The ghost term is the parity-compressed remnant of the crossing edges.  In the real-$n>1$ construction of Ref.~\cite{RajabpourBoundaryPhase2026}, the same boundary geometry is organized directly through the long-range interactions removed by the spatial cut.  At the Shannon point the ordinary chain rule instead reorganizes the problem into the independent-edge contribution $E_\ell$ and the conditional cycle contribution $C_\ell$.  Consequently, the $n=1$ decomposition is not a term-by-term continuation of the real-$n>1$ one, even though both are controlled by the same inverse-square Cauchy structure.

This relation now fixes the asymptotic bookkeeping.  Since both the total interval entropy \eqref{eq:sub-asympt} and the independent contribution \eqref{eq:Eell-asympt} contain only a single logarithm, the exact identity $H_1(A_\ell)=E_\ell-C_\ell$ implies that the conditional term may be written as
\begin{equation}
C_\ell=s_C\ell+\gamma_C\log\ell+O(1).
\end{equation}
Comparing coefficients gives
\begin{equation}
s_1=s_{\edge}-s_C,\qquad
\gamma_1=\gamma_E-\gamma_C.
\label{eq:subsystem-bookkeeping}
\end{equation}
Thus the independent-edge part of the subsystem logarithm is completely fixed analytically.  The remaining task is to understand the conditional cycle contribution $C_\ell$, to which we now apply the vertex-linked decomposition.

\subsection{Exact vertex-linked decomposition: line and ghost sectors}
\label{sec:subsystem-linked}

Having determined the independent-edge contribution $E_\ell$, the unresolved part of the subsystem logarithm is entirely contained in
\begin{equation}
C_\ell=H(F^+\mid\partial F^+).
\end{equation}
Unlike $E_\ell$, this conditional entropy is not additive over edges, so its logarithmic contribution cannot be assigned directly to individual physical or ghost edges.  The role of this subsection is therefore organizational: we resolve $C_\ell$ exactly into linked vertex supports by the same M\"obius transform used for the periodic problem.  The augmented interval introduces one essential distinction absent on the ring: a linked support either lies entirely on the physical interval or contains the ghost vertex.  This finite-size decomposition identifies the geometrically distinct sectors from which the coefficient $\gamma_C$ must be assembled; their endpoint asymptotics are analyzed only afterwards.

The augmented vertex set is
\begin{equation}
V_\ell^+=\{0\}\cup A_\ell.
\end{equation}
For every $U\subseteq V_\ell^+$ let
\begin{equation}
C_\ell^{\rm aug}(U)=H(F_U\mid\partial F_U)
\end{equation}
be the cycle entropy of the induced augmented graph.  When $U=S\subseteq A_\ell$ contains only physical vertices, write
\begin{equation}
C_\ell^{\rm aug}(S)=C_\ell^{\linesys}(S),
\end{equation}
and for the full physical interval set $C_\ell^{\linesys}\equiv C_\ell^{\linesys}(A_\ell)$.  For a fixed physical support $S$, this local line quantity uses the same infinite-line edge probabilities $\beta_{|i-j|}$ as in the bulk problem and therefore depends only on the relative physical positions; the explicit $\ell$-dependence of the summed line sectors below enters through the finite interval over which those supports are placed.  For a support containing the ghost write
\begin{equation}
C_\ell^{\rm aug}(\{0\}\cup S)=C_\ell^+(0\cup S).
\end{equation}

The ordinary vertex M\"obius cumulant on the augmented graph is
\begin{equation}
\widetilde\Psi_\ell(U)=
\sum_{R\subseteq U}(-1)^{|U|-|R|}C_\ell^{\rm aug}(R).
\label{eq:augmented-mobius}
\end{equation}
For a purely physical support this is the line cumulant
\begin{equation}
\Psi_\ell^{\linesys}(S)=\widetilde\Psi_\ell(S),
\qquad
A_{k,\ell}^{\linesys}=\sum_{|S|=k}\Psi_\ell^{\linesys}(S),
\qquad k\ge3.
\label{eq:line-sectors}
\end{equation}
These are the open-interval counterparts of the periodic linked sectors.  Their local kernels are inherited from the same physical Bernoulli law, but their sums are over an interval rather than a ring; consequently they possess endpoint asymptotics of their own and may contribute to the subsystem logarithm.

For a support $\{0\}\cup S$ containing the ghost, split the subsets $R\subseteq\{0\}\cup S$ according to whether they contain $0$.  Equation~\eqref{eq:augmented-mobius} then gives
\begin{align}
\widetilde\Psi_\ell(\{0\}\cup S)
&=\sum_{T\subseteq S}(-1)^{|S|-|T|}C_\ell^+(0\cup T)\nonumber\\
&\quad-\sum_{T\subseteq S}(-1)^{|S|-|T|}C_\ell^{\linesys}(T)\nonumber\\
&=\sum_{T\subseteq S}(-1)^{|S|-|T|}
\left[C_\ell^+(0\cup T)-C_\ell^{\linesys}(T)\right].
\label{eq:ghost-cumulant}
\end{align}
Thus the apparent ``ghost subtraction'' is not an additional rule: it is simply the ordinary M\"obius cumulant of an augmented support containing the distinguished vertex.  We denote it by
\begin{equation}
\Psi_\ell^{\partial}(S)\equiv\widetilde\Psi_\ell(\{0\}\cup S),
\qquad
B_{k,\ell}=\sum_{|S|=k}\Psi_\ell^{\partial}(S),
\qquad k\ge2.
\label{eq:Bk}
\end{equation}
The index $k$ counts physical vertices only.  Hence $B_{k,\ell}$ is a linked sector on $k+1$ total augmented vertices, and the first possible cycle is $B_{2,\ell}$, corresponding to the three-vertex support $\{0,i,j\}$.

M\"obius inversion on the full augmented graph now gives the exact finite-size decomposition
\begin{equation}
C_\ell=C_\ell^{\linesys}+C_\ell^\partial,
\qquad
C_\ell^{\linesys}=\sum_{k\ge3}A_{k,\ell}^{\linesys},
\qquad
C_\ell^\partial\equiv\sum_{k\ge2}B_{k,\ell},
\label{eq:Cell-line-ghost}
\end{equation}
or, equivalently,
\begin{equation}
C_\ell=\sum_{k\ge3}A_{k,\ell}^{\linesys}
+\sum_{k\ge2}B_{k,\ell}.
\label{eq:Cell-full-linked}
\end{equation}
This is the exact structural split needed for the conditional entropy: the first sum contains linked supports entirely inside the physical interval, while the second contains every linked support through which the exterior parity enters via the ghost.  It is a decomposition of the full $C_\ell$, not an approximation scheme and not yet an evaluation of $\gamma_C$.

For comparison with the periodic support counting, define
\begin{equation}
\widetilde A_{q,\ell}^{+}=\sum_{\substack{U\subseteq V_\ell^+\\|U|=q}}\widetilde\Psi_\ell(U).
\end{equation}
Then, for $3\le q\le\ell+1$,
\begin{equation}
\widetilde A_{q,\ell}^{+}=A_{q,\ell}^{\linesys}+B_{q-1,\ell},
\qquad A_{\ell+1,\ell}^{\linesys}\equiv0.
\label{eq:augmented-sector-split}
\end{equation}
Thus a periodic $q$-vertex sector is homogeneous in vertex type, whereas the augmented interval sector of the same total size splits into a line piece and a ghost-containing piece; only their sum reconstructs the corresponding augmented M\"obius sector.  Combining this with the independent-edge split gives the compact exact hierarchy
\begin{equation}
H_1(A_\ell)
=\underbrace{\bigl(E_\ell^{\rm int}+E_\ell^{\rm ghost}\bigr)}_{E_\ell}
-\underbrace{\sum_{q=3}^{\ell+1}\widetilde A_{q,\ell}^{+}}_{C_\ell}.
\label{eq:interval-nested-hierarchy}
\end{equation}
The first bracket contains the explicit cancellation of the squared logarithms, while the second is the complete augmented linked hierarchy whose renormalized single-log coefficients are organized below.  Appendix~\ref{app:distinguished-mobius} records the distinguished-vertex M\"obius identity in abstract form and its interval implementation, using the finite-size boundary transform developed in Appendix~\ref{app:mobius}.

For the analysis below it is useful to isolate the first ghost sector,
\begin{equation}
C_\ell^{\partial,\ge3}\equiv\sum_{k\ge3}B_{k,\ell}
=C_\ell^\partial-B_{2,\ell},
\end{equation}
so that
\begin{equation}
C_\ell=C_\ell^{\linesys}+B_{2,\ell}+C_\ell^{\partial,\ge3}.
\label{eq:Cell-three-piece}
\end{equation}
Equation~\eqref{eq:Cell-three-piece} is the bookkeeping identity used in the remainder of the interval analysis.  The three terms have different geometric origins: $C_\ell^{\linesys}$ is linked uncertainty supported entirely inside the physical interval, $B_{2,\ell}$ is the first linked ambiguity involving the exterior parity through the ghost, and $C_\ell^{\partial,\ge3}$ contains all higher ghost-linked supports.  The next subsection evaluates $B_{2,\ell}$ because it is the first analytically closed ghost sector; it should not be interpreted as an approximation to the complete $C_\ell$ or to $\gamma_C$.

\subsection{The first ghost-linked logarithm}
\label{sec:subsystem-B2}

The decomposition in the previous subsection isolates $B_{2,\ell}$ as the first ghost-containing piece of the conditional entropy.  We now evaluate its logarithmic growth completely.  This sector is the simplest place where the boundary mechanism can be seen without the nested endpoint subdivergences that appear at higher support: a linked support contains the ghost and exactly two physical vertices, so its induced graph is the triangle $\{0,i,j\}$.

Every proper induced subgraph of a triangle is a forest and therefore has zero cycle entropy.  The M\"obius cumulant consequently reduces exactly to the conditional entropy of that triangle,
\begin{equation}
\Psi_\ell^{\partial}(\{i,j\})
=C_\triangle(\lambda_{i,\ell},\lambda_{j,\ell},\beta_{j-i}),
\qquad
B_{2,\ell}=\sum_{1\le i<j\le\ell}C_\triangle.
\label{eq:B2-triangle}
\end{equation}
The three possible edges are the two ghost edges $(0,i)$ and $(0,j)$, with probabilities $\lambda_{i,\ell}$ and $\lambda_{j,\ell}$, and the physical edge $(i,j)$, with probability $\beta_{j-i}$.  Since the triangle has a single cycle, fixing its boundary can leave two complementary edge configurations.  The quantity $C_\triangle$ is precisely the Shannon ambiguity inside those boundary fibers.  The exact finite-lattice expression contains four complementary fiber pairs and is recorded in Appendix~\ref{app:triangle-I2}.  The local triangle ambiguity is illustrated in \Cref{fig:B2-two-scale}; the two asymptotic mechanisms responsible for the logarithm are analyzed separately below.

\begin{figure}[t]
\centering
\includegraphics[width=0.72\textwidth]{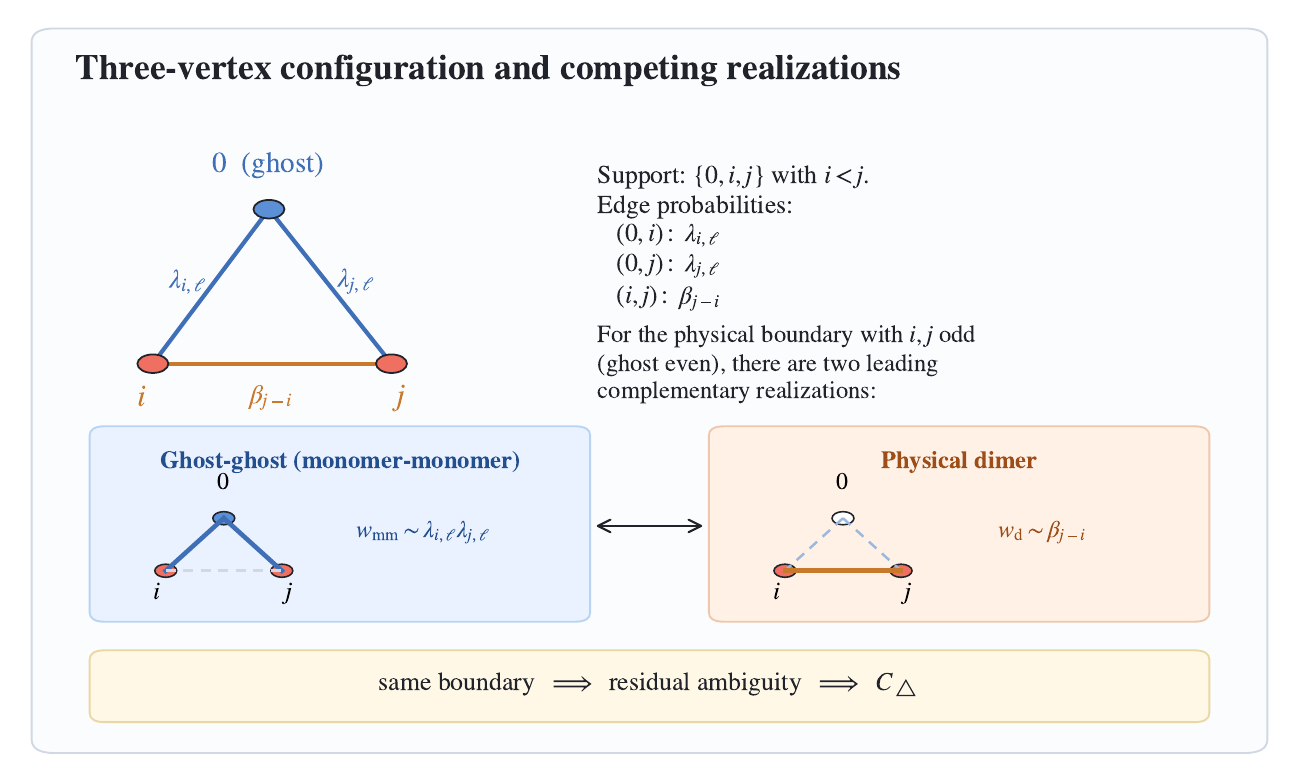}
\caption{Three-vertex ambiguity underlying the first ghost-linked sector.  On the support $\{0,i,j\}$, the unique triangle cycle allows two complementary realizations of the same physical boundary with $i$ and $j$ odd and the ghost even.  The leading competing configurations are the two ghost edges, with weight proportional to $\lambda_{i,\ell}\lambda_{j,\ell}$, and the single physical edge, with weight proportional to $\beta_{j-i}$.  The conditional triangle entropy $C_\triangle$ measures the residual Shannon ambiguity between these realizations.  The common-scale and fixed-separation regions that generate the logarithm in $B_{2,\ell}$ are analyzed in the text.}
\label{fig:B2-two-scale}
\end{figure}

These two asymptotic mechanisms must be kept distinct.  The common-scale sector gives the continuum period, whereas the fixed-separation sector is a nonuniform lattice boundary layer of that continuum limit.  Both produce a harmonic outer-scale sum and therefore contribute to the same coefficient of $\log\ell$.

\paragraph{Common-scale endpoint region.}
Consider the left endpoint and write
\begin{equation}
i=tR,\qquad j=R,\qquad 0<t<1,
\qquad 1\ll R\ll\ell.
\end{equation}
At this scale
\begin{equation}
\lambda_{i,\ell}=\frac{1}{16tR}+o(R^{-1}),\qquad
\lambda_{j,\ell}=\frac{1}{16R}+o(R^{-1}),\qquad
\beta_{j-i}=\frac{1}{16(1-t)^2R^2}+o(R^{-2}).
\end{equation}
For the physical boundary in which both $i$ and $j$ are odd, the two minimally suppressed realizations are therefore the two ghost edges and the single physical edge.  Removing their common factor $R^{-2}$ leaves the scaled weights
\begin{equation}
u(t)=\frac1{256t},\qquad
v(t)=\frac1{16(1-t)^2}.
\end{equation}
Their conditional Shannon ambiguity gives
\begin{align}
C_\triangle(tR,R)
&=\frac1{R^2}\Gamma_2^\partial(t)+o(R^{-2}),\nonumber\\
\Gamma_2^\partial(t)
&=(u+v)\log(u+v)-u\log u-v\log v.
\label{eq:B2-macro-scaling}
\end{align}
For two physical vertices the line-only subfiber contains only the single dimer and hence has zero conditional entropy, so no additional line subtraction appears at this order.

The logarithm now follows from scale counting.  To keep the common-scale approximation separate from the nonuniform boundary layers, fix $\epsilon>0$.  At fixed outer scale $R$ there are $O(R)$ allowed positions of the inner point, and the Riemann sum over $t=i/R$ gives
\begin{equation}
\sum_{\epsilon R\le i\le(1-\epsilon)R}C_\triangle(i,R)
=\frac1R\int_\epsilon^{1-\epsilon}\Gamma_2^\partial(t)\dd t
+o_\epsilon(R^{-1}),
\qquad
I_2=\int_0^1\Gamma_2^\partial(t)\dd t.
\label{eq:B2-macro-sum}
\end{equation}
Taking $R\to\infty$ first and then $\epsilon\to0$ defines the common-scale coefficient $I_2/R$; the microscopic near-diagonal strip is not included in this limit and is matched separately below.  Thus the remaining scale sum $\sum_RR^{-1}$ produces $I_2\log\ell$ from the left endpoint.  Reflection gives the same contribution from the right endpoint, so the common-scale regions contribute $2I_2\log\ell$ in total.  The elementary evaluation of the period is given in Appendix~\ref{app:triangle-I2} and yields
\begin{equation}
I_2=\frac1{256}\left[
\frac{\pi^2}{2}+2\log^2(2+\sqrt3)+32\log2-8\sqrt3\log(2+\sqrt3)
\right].
\label{eq:I2}
\end{equation}

\paragraph{Microscopic fixed-separation boundary layer.}
The common-scale parametrization is not uniform as $t\to1$.  Indeed, $t$ fixed assumes that the physical separation $r=j-i=(1-t)R$ grows with $R$; it therefore misses the near-diagonal strip in which $r$ remains microscopic while the pair is translated far from an endpoint.  Put
\begin{equation}
i=R,\qquad j=R+r,\qquad r=O(1),\qquad R\to\infty.
\end{equation}
Then
\begin{equation}
\lambda_{R,\ell}=\lambda_{R+r,\ell}=\frac1{16R}+O(R^{-2}),
\qquad
\beta_{j-i}=\beta_r,
\end{equation}
so the physical edge is no longer of order $R^{-2}$.  The exact triangle-fiber decomposition gives instead
\begin{equation}
C_\triangle
=\frac{h(\beta_r)}{8R}+O(R^{-2}\log R).
\label{eq:triangle-micro-tail}
\end{equation}
Each fixed separation therefore contributes $h(\beta_r)\log\ell/8$ per endpoint.  Summing over $r$ and using $s_{\edge}=\sum_{r\ge1}h(\beta_r)$ gives $s_{\edge}/8$ from each endpoint, or the total boundary-layer contribution $s_{\edge}/4$.  This term is not an independent double count of the continuum period: the two descriptions are matched by an intermediate separation cutoff, and the resulting coefficient is cutoff independent.  Appendix~\ref{app:triangle-I2} gives the explicit four-fiber derivation and the matching argument.

\paragraph{Result and interpretation.}
Combining the two endpoint periods with the microscopic boundary layer gives the structural formula
\begin{equation}
\delta_2^{\partial}=2I_2+\frac{s_{\edge}}4.
\label{eq:delta2-structural}
\end{equation}
Equivalently,
\begin{equation}
\begin{aligned}
 \delta_2^{\partial}={}&\frac{s_{\edge}}4+
 \frac1{128}\left[
 \frac{\pi^2}{2}+2\log^2(2+\sqrt3)+32\log2
 -8\sqrt3\log(2+\sqrt3)\right]\\
={}&0.2254584779146003\ldots .
\end{aligned}
\label{eq:delta2-boundary}
\end{equation}
Hence
\begin{equation}
B_{2,\ell}=\delta_2^{\partial}\log\ell+O(1).
\label{eq:B2-asymptotic}
\end{equation}
The decomposition in Eq.~\eqref{eq:delta2-structural} is the main structural result of this subsection: the first ghost-linked logarithm receives one continuum contribution from configurations whose two physical vertices scale together away from an endpoint and one lattice boundary-layer contribution from configurations that remain microscopically separated.

The coefficient $\delta_2^{\partial}$ is much larger than the logarithmic coefficient of the complete conditional entropy,
\begin{equation}
\gamma_C=\gamma_E-\gamma_1\simeq0.05077.
\end{equation}
This is not a discrepancy.  Equation~\eqref{eq:Cell-three-piece} shows that $B_{2,\ell}$ is only one linked component of $C_\ell$, and substantial cancellations occur against both the open-line sector and higher ghost sectors.  As a finite-size diagnostic, after subtracting the known bulk term $s_{\cyc}\ell$ the two-scale slope of the complete $C_\ell$ is about $0.0521$ between $\ell=8$ and $16$, while the same diagnostic decomposes approximately as
\begin{equation}
0.0521\simeq -0.2880+0.2405+0.0996,
\label{eq:conditional-slope-cancellation}
\end{equation}
coming respectively from $C_\ell^{\linesys}$, $B_{2,\ell}$, and $C_\ell^{\partial,\ge3}$.  These are finite-size effective slopes rather than separate universal coefficients---the higher ghost sectors contain nested logarithms---but they make the cancellation mechanism visible.  Appendix~\ref{app:finite-data} gives the corresponding data.

\subsection{First complete augmented linked coefficient}
\label{sec:subsystem-kappa3}

The first ghost sector $B_{2,\ell}$ is only one part of the complete three-vertex augmented sector.  Equation~\eqref{eq:augmented-sector-split} gives
\begin{equation}
\widetilde A_{3,\ell}^{+}=A_{3,\ell}^{\linesys}+B_{2,\ell}.
\label{eq:q3-augmented-split}
\end{equation}
The physical-line term is also analytically accessible.  For three physical vertices every proper induced subgraph is a forest, so the three-vertex M\"obius cumulant is simply the triangle conditional entropy.  Resolving the sum by its outer span $R$ gives the exact identity
\begin{align}
A_{3,\ell}^{\linesys}
&=\sum_{R=2}^{\ell-1}(\ell-R)\Phi_3(R),\label{eq:A3-span}\\
\Phi_3(R)
&=\sum_{r=1}^{R-1}
C_\triangle(\beta_r,\beta_{R-r},\beta_R).
\label{eq:Phi3-span}
\end{align}
For fixed $r$ and $R\to\infty$, the two long physical edges are both $1/(16R^2)+O(R^{-3})$, while the short edge retains probability $\beta_r$.  The exact triangle-fiber formula of Appendix~\ref{app:triangle-I2} therefore gives
\begin{equation}
C_\triangle(\beta_r,\beta_{R-r},\beta_R)
=\frac{h(\beta_r)}{8R^2}+O(R^{-3}\log R).
\label{eq:line-triangle-tail}
\end{equation}
The two collision ends $r=O(1)$ and $R-r=O(1)$ contribute equally, whereas the fully macroscopic interior is subleading.  Matching the fixed-separation and common-scale regions as in Appendix~\ref{app:triangle-I2} yields
\begin{equation}
\Phi_3(R)=\frac{s_{\edge}}{4R^2}+o(R^{-2}).
\label{eq:Phi3-tail}
\end{equation}
Since $\sum_{R<\ell}(\ell-R)R^{-2}$ has logarithmic part $-\log\ell$, the line coefficient is
\begin{equation}
\boxed{\alpha_3^{\linesys}=-\frac{s_{\edge}}4.}
\label{eq:alpha3-line}
\end{equation}
Combining this with Eq.~\eqref{eq:delta2-structural} shows that the finite boundary-layer anomaly cancels inside the complete augmented support:
\begin{equation}
\boxed{
\kappa_3^{+}=\alpha_3^{\linesys}+\delta_2^{\partial}=2I_2.}
\label{eq:kappa3-main}
\end{equation}
Thus the first complete augmented linked logarithmic coefficient is known in closed form,
\begin{equation}
\boxed{
\kappa_3^{+}=\frac1{128}\left[
\frac{\pi^2}{2}+2\log^2(2+\sqrt3)+32\log2
-8\sqrt3\log(2+\sqrt3)\right]
=0.0963747208733\ldots .}
\label{eq:kappa3-closed}
\end{equation}
The cancellation is internal to one exact M\"obius sector: the $+s_{\edge}/4$ term in the ghost contribution is precisely compensated by the $-s_{\edge}/4$ endpoint anomaly of the physical-line contribution.  This makes $\widetilde A_{3,\ell}^{+}$, rather than $B_{2,\ell}$ alone, the natural first object in the interval linked hierarchy.

\subsection{Higher boundary sectors and nested endpoint scales}
\label{sec:subsystem-higher-boundary}

The solved $B_{2,\ell}$ sector suggests a general boundary mechanism: at a common endpoint scale, ghost attachments and physical pairings can contribute at the same total scaling degree.  We now identify the leading boundary fiber for arbitrary $k$ and explain why, beginning at $k=3$, nested endpoint and collision scales obstruct a direct extraction of a single logarithmic coefficient.  Consider a ghost-containing support $\{0\}\cup S$ with $k$ physical vertices in a common left-endpoint scaling region
\begin{equation}
x_i=Rt_i,\qquad 0<t_1<\cdots<t_k=1,
\qquad 1\ll R\ll\ell.
\end{equation}
The edge probabilities have the asymptotic form
\begin{equation}
\lambda_{x_i,\ell}=\frac{1}{R}\,m_i+o(R^{-1}),
\qquad
\beta_{x_j-x_i}=\frac{1}{R^2}\,a_{ij}+o(R^{-2}),
\end{equation}
with
\begin{equation}
m_i=\frac{1}{16t_i},
\qquad
a_{ij}=\frac{1}{16(t_i-t_j)^2}.
\label{eq:boundary-monomer-dimer-weights}
\end{equation}
The appearance of monomer--dimer coverings can be derived directly from the exact boundary fibers, rather than taken as a continuum ansatz.  Consider any augmented edge configuration on the support $\{0\}\cup S$ and let $n_{\rm g}$ be the number of occupied ghost edges and $n_{\rm p}$ the number of occupied physical edges.  In the common-scale regime its probability is suppressed by
\begin{equation}
R^{-(n_{\rm g}+2n_{\rm p})},
\end{equation}
because a ghost edge costs one power of $R^{-1}$ whereas a physical edge costs two.  The same exponent has a simple graph-theoretic meaning:
\begin{equation}
n_{\rm g}+2n_{\rm p}=\sum_{i\in S}\deg_F(i),
\label{eq:boundary-degree-exponent}
\end{equation}
namely it is the total degree carried by the physical vertices.

This identity singles out the primitive boundary fiber.  Let $D\subseteq S$ be the set of physical vertices with odd degree and write $r=|D|$.  If an edge configuration genuinely uses the complete physical support $S$, every vertex in $D$ has positive odd degree and therefore degree at least one, while every vertex in $S\setminus D$ has positive even degree and therefore degree at least two.  Hence every full-support realization of the physical boundary $D$ obeys
\begin{equation}
n_{\rm g}+2n_{\rm p}=\sum_{i\in S}\deg_F(i)
\ge r+2(k-r)=2k-r.
\label{eq:boundary-full-support-bound}
\end{equation}
For the all-odd physical boundary $D=S$ one has $r=k$, so the lower bound is $k$.  Equality is possible only when every physical vertex has degree exactly one.  Each vertex must then either be paired once with another physical vertex or be attached once to the ghost.  The minimally suppressed full-support realizations are therefore in one-to-one correspondence with \emph{monomer--dimer coverings} of the $k$ physical vertices: physical edges are dimers and ghost attachments are monomers.

The same degree count also shows why no other physical boundary competes at the primitive logarithmic order.  If $D\subsetneq S$, then $r\le k-1$ and Eq.~\eqref{eq:boundary-full-support-bound} gives $n_{\rm g}+2n_{\rm p}\ge k+1$.  Thus every genuinely full-support realization of a proper physical boundary is suppressed by at least one additional power of $R$.  Configurations that evade this bound by leaving one or more physical vertices unused are precisely lower-support configurations; their leading contribution is already present on a proper subset and is removed by the vertex M\"obius support filter.  Consequently the all-odd fiber is the unique common-scale boundary fiber capable of producing a primitive $R^{-k}$ full-support term.

Let $M$ be a partial matching of the physical vertices and let $U(M)$ be its unmatched set.  The corresponding augmented covering is
\begin{equation}
F_M=M\cup\{(0,i):i\in U(M)\},
\end{equation}
so that
\begin{equation}
n_{\rm p}=|M|,\qquad n_{\rm g}=|U(M)|,
\qquad n_{\rm g}+2n_{\rm p}=k.
\end{equation}
Its scaled weight is
\begin{equation}
w_M=\prod_{(ij)\in M}a_{ij}\prod_{i\in U(M)}m_i.
\label{eq:boundary-covering-weight}
\end{equation}
All such coverings have the same overall scale,
\begin{equation}
\Pr(F=F_M)=R^{-k}w_M+o(R^{-k}),
\end{equation}
whereas full-support configurations in every proper physical-boundary fiber are $o(R^{-k})$.  If $\mathfrak M_k^\partial$ denotes the set of all partial matchings, then
\begin{equation}
Z_k^{\rm aug}=\sum_{M\in\mathfrak M_k^\partial}w_M,
\end{equation}
and conditioning on the all-odd physical boundary gives
\begin{equation}
\Pr(F=F_M\mid D_{\rm phys}=S)
=\frac{w_M}{Z_k^{\rm aug}}+o(1).
\end{equation}
The common factor $R^{-k}$ cancels in the conditional probabilities.  Multiplying the resulting Shannon entropy by the probability of the all-odd boundary therefore gives
\begin{equation}
\Pr(D_{\rm phys}=S)H(F\mid D_{\rm phys}=S)
=R^{-k}\Gamma_k^{\rm aug}+o(R^{-k}),
\end{equation}
with
\begin{equation}
\Gamma_k^{\rm aug}
=Z_k^{\rm aug}\log Z_k^{\rm aug}
-\sum_{M\in\mathfrak M_k^\partial}w_M\log w_M.
\label{eq:boundary-raw-md}
\end{equation}
Thus the monomer--dimer kernel is the leading conditional Shannon ambiguity of a specific rare boundary fiber of the exact augmented Bernoulli graph, exactly as the perfect-matching kernel arises in the periodic problem.

There is one additional feature specific to the \emph{ghost-linked} M\"obius coefficient.  Equation~\eqref{eq:ghost-cumulant} subtracts the line entropy on the same physical support.  For odd $k$ every monomer--dimer covering necessarily contains at least one monomer, so the all-odd fiber has no line-only counterpart at the same scaling order.  For even $k$, however, the zero-monomer subfamily consists of perfect matchings of the physical vertices and is already present in $C_\ell^{\linesys}(S)$.  Denote this subfamily by $\mathfrak M_k^0$ and define
\begin{equation}
Z_k^{\rm line}=\sum_{M\in\mathfrak M_k^0}w_M,
\qquad
\Gamma_k^{\rm line}
=Z_k^{\rm line}\log Z_k^{\rm line}
-\sum_{M\in\mathfrak M_k^0}w_M\log w_M,
\end{equation}
with $\Gamma_k^{\rm line}=0$ for odd $k$.  The leading common-scale kernel belonging to the ghost-linked cumulant is consequently
\begin{equation}
\Gamma_k^\partial=
\begin{cases}
\Gamma_k^{\rm aug},&k\ \text{odd},\\[2mm]
\Gamma_k^{\rm aug}-\Gamma_k^{\rm line},&k\ \text{even}.
\end{cases}
\label{eq:monomer-dimer}
\end{equation}
The subtraction is invisible at $k=2$, because the line-only subfamily contains a single physical dimer and hence has zero entropy.  Equation~\eqref{eq:monomer-dimer} therefore reduces exactly to the two-weight kernel $\Gamma_2^\partial$ derived in \Cref{sec:subsystem-B2}.  For higher even $k$ the subtraction is essential: it removes the perfect-matching ambiguity that belongs to the physical-line sector rather than to new ghost-linked information.

We can now state the discrete-to-continuum bridge at fixed support.  Let $x_i/R\to t_i$ with $0<t_1<\cdots<t_k=1$, and keep the shape variables a fixed positive distance from all collision and endpoint faces while taking $R\to\infty$ with $R/\ell\to0$.  The support-filter argument above and the asymptotic edge weights then give
\begin{equation}
\Psi_\ell^\partial(\{x_1,\ldots,x_k\})
=R^{-k}\Gamma_k^\partial(t_1,\ldots,t_k)+o(R^{-k}).
\label{eq:boundary-Psi-md}
\end{equation}
Equation~\eqref{eq:boundary-Psi-md} is the subsystem counterpart of Eq.~\eqref{eq:Psi-macro-matching}: the M\"obius transform selects complete physical support, the degree bound selects the all-odd boundary as the unique marginal fiber, and its minimal configurations give the monomer--dimer entropy kernel.

The origin of the logarithm is then a direct Riemann-sum count.  Fix the outermost left-endpoint vertex at $x_k=R$.  There are $k-1$ internal ordered positions, so a compact region of shape space contains $O(R^{k-1})$ lattice supports.  Combining this measure factor with Eq.~\eqref{eq:boundary-Psi-md} gives
\begin{equation}
R^{k-1}R^{-k}=R^{-1}.
\label{eq:boundary-superficial-degree}
\end{equation}
To make the statement precise away from singular faces, let $\mathcal D_{k-1}(\epsilon)$ be the ordered shape simplex with $t_1$, all successive separations, and $1-t_{k-1}$ bounded below by $\epsilon>0$, and define
\begin{equation}
\mathcal I_k^\partial(\epsilon)
=\int_{\mathcal D_{k-1}(\epsilon)}
\Gamma_k^\partial(t_1,\ldots,t_{k-1},1)\,\dd^{k-1}t.
\label{eq:boundary-shape-integral}
\end{equation}
Then the common-scale contribution from the two interval endpoints has the asymptotic form
\begin{equation}
B_{k,\ell}^{\rm common}(\epsilon)
=2\,\mathcal I_k^\partial(\epsilon)\log\ell+O_\epsilon(1).
\label{eq:boundary-common-log}
\end{equation}
The factor of two is the contribution of the left and right endpoints.  Thus every primitive boundary support order is superficially marginal, in sharp contrast with the periodic problem where only even primitive supports contribute at order $L^0$.

The exact linked decomposition therefore already has the structure of an all-orders expansion for the interval cycle logarithm.  To state it explicitly, let $\alpha_k^{\linesys}$ denote the renormalized single-log coefficient generated by the physical-line sector $A_{k,\ell}^{\linesys}$ after its lower-support collision and endpoint subtractions.  For the ghost sectors, conditional on closure of the compatible boundary-forest subtraction, define
\begin{equation}
\delta_k^{\partial}
=2\,\FP_{\partial}
\int_{0<t_1<\cdots<t_{k-1}<1}
R_{\partial}\Gamma_k^{\partial}(t_1,\ldots,t_{k-1},1)\,\dd^{k-1}t,
\label{eq:boundary-delta-allorder}
\end{equation}
where $R_{\partial}$ subtracts nested endpoint/collision subclusters and $\FP_{\partial}$ includes the finite lattice face anomalies required by matching to the discrete problem.  It is then natural to combine the line and ghost coefficients belonging to one complete augmented support,
\begin{equation}
\kappa_q^{+}\equiv\alpha_q^{\linesys}+\delta_{q-1}^{\partial},
\qquad q\ge3.
\label{eq:kappa-q-definition}
\end{equation}
The superscript $+$ distinguishes these augmented linked coefficients from the nonuniversal interval constant $\kappa_1$ in Eq.~\eqref{eq:sub-asympt}.  Under the corresponding all-order closure and summability assumptions, Eqs.~\eqref{eq:Cell-full-linked} and~\eqref{eq:subsystem-bookkeeping} become
\begin{align}
\gamma_C
&=\sum_{k\ge3}\alpha_k^{\linesys}
 +\sum_{k\ge2}\delta_k^{\partial}
 =\sum_{q\ge3}\kappa_q^{+},
\label{eq:interval-cycle-allorder}\\
\gamma_1
&=\frac{\log2}{4}-\frac1{16}
 -\sum_{q\ge3}\kappa_q^{+}.
\label{eq:interval-gamma-allorder}
\end{align}
The second form groups the line and ghost contributions into the complete augmented M\"obius sector at fixed total support $q$, as in Eq.~\eqref{eq:kappa-q-definition}; in particular $\kappa_3^{+}$ is the closed coefficient in Eq.~\eqref{eq:kappa3-main}.  This is the interval counterpart of the periodic linked series.  Its complication is not the absence of an all-support organization, but the fact that every boundary order is marginal and the renormalized primitive coefficient must be disentangled from nested endpoint scales before the series can be evaluated or controlled.

Equations~\eqref{eq:boundary-Psi-md}--\eqref{eq:boundary-common-log} hold only with the collision and nested-endpoint faces kept at finite distance.  Removing the shape cutoff is precisely where the nonuniform regions enter.  For $k=2$ the continuum shape period is integrable, but the fixed-separation lattice face still produces the finite boundary-layer anomaly $s_{\edge}/4$ derived above.  For $k\ge3$ the shape integral itself contains endpoint and collision subdivergences, so the coefficient of the physical single logarithm cannot be obtained by simply sending $\epsilon\to0$ in Eq.~\eqref{eq:boundary-common-log}; the proper subscales must first be removed by the boundary forest subtraction, together with their finite lattice face anomalies.

The first higher case makes the construction concrete.  For $k=3$ there is no line-only perfect matching, so $\Gamma_3^\partial=\Gamma_3^{\rm aug}$.  With $0<t_1<t_2<t_3=1$, the four coverings are the three-ghost-monomer state and the three states containing one physical dimer and one ghost monomer.  Their weights are
\begin{align}
w_0&=m_1m_2m_3=\frac1{4096t_1t_2},&
 w_{12}&=a_{12}m_3=\frac1{256(t_2-t_1)^2},\nonumber\\
w_{13}&=a_{13}m_2=\frac1{256t_2(1-t_1)^2},&
 w_{23}&=a_{23}m_1=\frac1{256t_1(1-t_2)^2}.
\end{align}
Hence
\begin{equation}
Z_3=w_0+w_{12}+w_{13}+w_{23},
\qquad
\Gamma_3^\partial=Z_3\log Z_3-\sum_{\alpha}w_\alpha\log w_\alpha.
\label{eq:Gamma3-boundary-main}
\end{equation}
The common-scale power counting by itself would suggest a single logarithm.  The obstruction is that the shape integral of $\Gamma_3^\partial$ has proper endpoint subscales.  For example, if the first point approaches the endpoint while $t_2=t$ remains fixed,
\begin{equation}
\Gamma_3^\partial(t_1,t,1)
=\frac{1}{16t_1}\Gamma_2^\partial(t)+O(|\log t_1|),
\qquad t_1\to0^+.
\label{eq:k3-factorization-main}
\end{equation}
The factorization has a simple covering interpretation: the first vertex is forced into its dominant ghost-monomer factor $1/(16t_1)$, while the remaining two vertices retain exactly the two-covering ambiguity of the solved $k=2$ problem.  Integrating the inner scale therefore creates an additional logarithm which multiplies the outer $R^{-1}$ sum.  A common-scale collapse of two points near the endpoint gives another representation of the same subdivergence, and the overlap of these descriptions must be removed by inclusion--exclusion on compatible endpoint/collision forests.

This is why the naive integral of $\Gamma_k^\partial$ is not, for $k\ge3$, an invariant single-log coefficient.  Raw higher-support sectors can contain $(\log\ell)^2$ and higher nested logarithms before proper subscales are subtracted.  Appendix~\ref{app:boundary-forest} gives the detailed $k=3$ collision and endpoint factorizations and formulates the corresponding forest prescription.  Appendix~\ref{app:k3-setup} then turns the same $k=3$ example into an explicit two-variable integral on the unit square, displays an analytic regulator, and shows how the two lower-order $\Gamma_2^\partial$ residues appear as boundary poles before the overlap subtraction.  At present Eqs.~\eqref{eq:boundary-delta-allorder}--\eqref{eq:interval-gamma-allorder} should therefore be read as a controlled all-orders renormalization organization rather than as a proved expansion: closing the overlap subtraction, matching the finite lattice face anomalies at arbitrary support, and controlling the support sum remain open.  We quote the analytically controlled $k=2$ coefficient, use the higher sectors to expose the recursive structure and cancellations of the conditional entropy, and determine the complete subsystem logarithm independently from the exact interval distribution in the next subsection.

\subsection{Direct reconstruction of the complete subsystem logarithm}
\label{sec:subsystem-direct}

The complete $\gamma_1$ can be obtained without the linked expansion.  The infinite-line Walsh coefficient for a finite set $S$ is
\begin{equation}
d_\infty(S)=\left(\frac2\pi\right)^{|S|}
\prod_{i<j\in S}\frac{4(i-j)^2}{4(i-j)^2-1}.
\label{eq:d-infty}
\end{equation}
Encode a spin configuration by $x\in\{0,1\}^\ell$ through $\sigma_i=(-1)^{x_i}$ and write $S\cdot x=\sum_{i\in S}x_i\pmod 2$.  The interval distribution follows from a $2^\ell$-component Walsh--Hadamard transform,
\begin{equation}
p_A(x)=2^{-\ell}\sum_{S\subseteq A}(-1)^{S\cdot x}d_\infty(S).
\label{eq:FWHT-paper}
\end{equation}
The reconstruction uses only the exact infinite-line Walsh coefficients in Eq.~\eqref{eq:d-infty}; there is no stochastic sampling error.  Representative exact values of $H_1(A_\ell)$, together with a decomposition of the corresponding augmented cycle entropy, are listed in Appendix~\ref{app:finite-data}.

As an internal check of the exact interval distribution, we fit the finite-size entropies to
\begin{equation}
H_1(A_\ell)=s\ell+\gamma\log\ell+c+
\frac{a_1}{\ell}+\frac{a_2}{\ell^2}+\frac{a_3}{\ell^3}.
\label{eq:sub-fit}
\end{equation}
The literature value of the logarithmic coefficient is not used as a fit input.  We instead vary the lower size cutoff and correction order, and also repeat the extrapolation with the extensive coefficient fixed to the independently determined periodic value $s=0.42327356$.  The latter removes the strongest $s$--$\gamma$ correlation.  Representative fit windows, their degrees of freedom, the fixed-$s$ sequence, and a diagnostic including a logarithmically enhanced correction are collected in Appendix~\ref{app:fit-stability}; see in particular \Cref{tab:interval-fit,tab:interval-fit-fixed-s}.  The stable higher-cutoff windows give the internal reconstruction
\begin{equation}
s_1\simeq0.42327356,\qquad \gamma_1=0.060020(3).
\label{eq:gamma-direct-main}
\end{equation}
The parenthetical uncertainty is a deterministic extrapolation envelope over the tested finite-size descriptions, not a statistical standard error and not a claim of greater precision than the existing large-size literature.  The purpose of this calculation is to verify that the exact Walsh distribution and the subsystem bookkeeping reproduce the established universal logarithm from first principles at the finite sizes accessible to the complete distribution.

The normalization relating our single-interval coefficient to the published mutual-information result is worth making explicit.  On a periodic ring define the chord length
\begin{equation}
d(\ell,L)=\frac{L}{\pi}\sin\frac{\pi\ell}{L}.
\end{equation}
If a single-interval Shannon entropy has the universal form
\begin{equation}
H_1(A_\ell)=s_1\ell+\gamma_1\log d(\ell,L)+O(1),
\end{equation}
then the complementary interval has the same logarithm, while the entropy of the complete periodic system has no $\log L$ term.  Therefore
\begin{equation}
I_1(A:B)=H_1(A)+H_1(B)-H_1(AB)
=2\gamma_1\log d(\ell,L)+O(1).
\end{equation}
Ref.~\cite{Stephan2014} uses the convention
\begin{equation}
I_1(A:B)=\frac{b_1}{4}\log d(\ell,L)+O(1),
\end{equation}
so that
\begin{equation}
\gamma_1=\frac{b_1}{8}.
\label{eq:gamma-b1-relation}
\end{equation}
The published value $b_1=0.4801629(2)$ thus implies
\begin{equation}
\frac{b_1}{8}=0.06002036(3),
\end{equation}
in agreement with Eq.~\eqref{eq:gamma-direct-main}.  This agreement should be read as a validation of the present exact-distribution reconstruction, not as a new high-precision determination of a previously unknown coefficient.  The same non-$U(1)$ $n\le1$ logarithmic branch in critical quadratic fermions was independently identified in Ref.~\cite{Tarighi2022}.

Combining our internal reconstruction with the independently derived edge coefficient $\gamma_E$ in Eq.~\eqref{eq:gammaE-main} gives the logarithmic coefficient of the \emph{complete} conditional entropy,
\begin{equation}
\gamma_C\equiv\gamma_E-\gamma_1\simeq0.05077.
\label{eq:gammaC-direct}
\end{equation}
This number characterizes all of $C_\ell$ in Eq.~\eqref{eq:Cell-full-linked}; it is not the coefficient of $B_{2,\ell}$ or of any other individual raw support sector.

The analytic and numerical roles are therefore cleanly separated.  The ghost reduction, the independent coefficient $\gamma_E$, and the augmented M\"obius decomposition are exact; the arbitrary-support monomer--dimer kernels and boundary-forest prescription provide the conditional all-orders hierarchy in Eqs.~\eqref{eq:boundary-delta-allorder}--\eqref{eq:interval-gamma-allorder}, with the first complete augmented coefficient $\kappa_3^{+}=2I_2$ controlled analytically.  What remains open is to prove the all-order boundary renormalization and lattice matching, evaluate the higher renormalized periods, and control their sum.  The finite-size Walsh reconstruction independently verifies that this organization is compatible with the established subsystem logarithm.  The Cauchy deformation in the next section explains why a particular hyperbolic constant appears already in $I_2$; a brief numerical observation involving that constant is recorded there separately and is not used in the analysis.
\FloatBarrier
\section{Unifying Cauchy structure and all-order organization}
\label{sec:cauchy-family}

The previous two sections analyze different universal quantities with parallel all-support organizations: the ring concerns an $O(1)$ constant built from perfect-matching linked periods, while the interval concerns a boundary logarithm built from physical-line and monomer--dimer boundary sectors.  The interval hierarchy requires the additional subtraction of nested endpoint scales, so its all-order analytic control is less complete.  Their common origin becomes visible only after those problems have been treated separately.  Both descend from the same Cauchy principal-minor structure.  We now introduce an exact one-parameter deformation of that algebraic probability kernel.  The deformation has three roles: it preserves the exact Cauchy/parity structure away from the Ising point, exposes the interval quantities $\gamma_E(\nu)$ and the complete first augmented coefficient $\kappa_3^{+}(\nu)=2I_2(\nu)$ as explicit functions of $\nu$, and supplies the natural parameter $z=4\nu^2$ that grades the periodic linked hierarchy.  The grand construction in the final subsection is specific to the periodic all-order organization.

\subsection{Exact Cauchy deformation and interval boundary data}
\label{sec:cauchy-interval-deformation}

The status of this parameter should be kept distinct from the physical model generalizations discussed elsewhere in the paper.  At $\nu=1/2$ the kernel below is the critical TFI kernel.  For generic $0<\nu<1$, however, we use it as an exact Toeplitz--Cauchy deformation of the principal minors and of the induced parity-boundary probability law.  No local spin-chain parent Hamiltonian is assumed or required for generic $\nu$.  In particular, this auxiliary deformation should not be conflated with the separate family of genuine critical quadratic chains $f_m(z)=z^m-1$ treated in Appendix~\ref{app:decimation}.  On the infinite line define
\begin{equation}
(G_\nu)_{jk}=\frac{\sin\pi\nu}{\pi(j-k+\nu)},\qquad 0<\nu<1.
\label{eq:Gnu-main}
\end{equation}
It is Toeplitz (its entries depend only on $j-k$) and is generated by the pure-jump Fourier symbol $g_\nu(e^{i\theta})=e^{i\nu(\pi-\theta)}$.  Its principal minors are
\begin{equation}
d_\nu(S)=\left(\frac{\sin\pi\nu}{\pi\nu}\right)^{|S|}
\prod_{i<j\in S}\frac{(i-j)^2}{(i-j)^2-\nu^2}.
\label{eq:dnu-main}
\end{equation}
The corresponding independent-edge parity graph has
\begin{equation}
q_r(\nu)=\sqrt{1-\frac{\nu^2}{r^2}},
\qquad
\beta_r(\nu)=\frac12\left[1-\sqrt{1-\frac{\nu^2}{r^2}}\right].
\label{eq:beta-nu}
\end{equation}
Euler's sine product supplies the one-body factor
\begin{equation}
\prod_{r=1}^{\infty}\left(1-\frac{\nu^2}{r^2}\right)=\frac{\sin\pi\nu}{\pi\nu}.
\end{equation}
Thus the entire Bernoulli representation deforms with the Cauchy kernel, not just its asymptotic tail.

For an interval, the exterior product remains explicit:
\begin{equation}
P_m(\nu)=\prod_{r=m}^{\infty}\left(1-\frac{\nu^2}{r^2}\right)
=\frac{\Gamma(m)^2}{\Gamma(m-\nu)\Gamma(m+\nu)},
\end{equation}
and
\begin{equation}
1-2\lambda_{i,\ell}^{(\nu)}=\sqrt{P_i(\nu)P_{\ell+1-i}(\nu)}.
\end{equation}
The far-boundary activity is
\begin{equation}
\lambda_R^{(\nu)}=\frac{\nu^2}{4R}+O(R^{-2}).
\end{equation}
The calculation leading to \eqref{eq:Eell-asympt} can then be repeated for arbitrary $\nu$ and gives
\begin{equation}
\gamma_E(\nu)=-\frac{\nu^2}{4}\left[1+\log\frac{\nu^2}{4}\right].
\label{eq:gammaEnu-main}
\end{equation}

The two-vertex macroscopic period also closes.  Set $a=\nu^2/4$, $u=a^2/t$ and $v=a/(1-t)^2$.  Then
\begin{equation}
I_2(\nu)=\int_0^1[(u+v)\log(u+v)-u\log u-v\log v]\dd t.
\end{equation}
Parametrize
\begin{equation}
\cosh\eta=\frac1\nu,
\end{equation}
so $\eta=\arcosh(1/\nu)$, with $\arcosh$ denoting the inverse hyperbolic cosine.
The exact result is
\begin{equation}
I_2(\nu)=\frac{\nu^4}{16}\left[
\frac{\pi^2}{2}+\frac4{\nu^2}\log\frac2\nu
-\frac{4\sqrt{1-\nu^2}}{\nu^2}\arcosh\frac1\nu
+2\arcosh^2\frac1\nu\right].
\label{eq:I2nu}
\end{equation}
At $\nu=1/2$ this reduces to \eqref{eq:I2}.  The appearance of $\log(2+\sqrt3)$ is therefore fixed by the hyperbolic parametrization of the exact Cauchy family: at the Ising point the canonical coordinate is $\eta=\arcosh2=\log(2+\sqrt3)$.

The cancellation in the first complete augmented sector also extends throughout this family.  Define
\begin{equation}
s_{\edge}(\nu)=\sum_{r\ge1}h(\beta_r(\nu)).
\end{equation}
Since $\beta_R(\nu)=\nu^2/(4R^2)+O(R^{-4})$, the fixed-separation physical-line triangle gives
\begin{equation}
\alpha_3^{\linesys}(\nu)=-\nu^2s_{\edge}(\nu).
\label{eq:alpha3nu}
\end{equation}
On the ghost side $\lambda_R^{(\nu)}=\nu^2/(4R)+O(R^{-2})$, so the same exact triangle-fiber expansion gives
\begin{equation}
\delta_2^{\partial}(\nu)=2I_2(\nu)+\nu^2s_{\edge}(\nu).
\label{eq:delta2nu}
\end{equation}
The boundary-layer terms therefore cancel identically at fixed augmented support,
\begin{equation}
\boxed{\kappa_3^{+}(\nu)=\alpha_3^{\linesys}(\nu)+\delta_2^{\partial}(\nu)=2I_2(\nu).}
\label{eq:kappa3nu}
\end{equation}
Thus the first complete interval linked coefficient, not only its ghost component, is known in closed form throughout the auxiliary Cauchy family.

\subsection{Periodic deformation and coefficient homogeneity}
\label{sec:periodic-deformation}

For the periodic ring the same auxiliary Cauchy deformation has the finite-size realization
\begin{equation}
(G_{\nu,L})_{jk}=\frac{\sin\pi\nu}{L\sin[\pi(j-k+\nu)/L]}.
\label{eq:finite-cauchy-nu}
\end{equation}
Its macroscopic physical-edge activity is multiplied by $4\nu^2$ relative to the Ising point.  Since a $2p$-vertex matching contains $p$ edges, the renormalized even-sector coefficients are expected, within the same finite-part matching prescription, to obey the homogeneity relation
\begin{equation}
 c_{2p}(\nu)=(4\nu^2)^p c_{2p}(1/2),
\label{eq:homogeneity}
\end{equation}
This homogeneity is exact for the full-support kernel and its local counterterms; its identification with the lattice constants is conditional on Conjecture~\ref{conj:FPmatch}.  It is therefore natural to introduce
\begin{equation}
C(z)=\sum_{p\ge2}c_{2p}z^p,\qquad z=4\nu^2,
\label{eq:Cz}
\end{equation}
with the physical periodic Ising point at $z=1$.  Thus the same deformation that probes the interval boundary geometry also supplies a natural bookkeeping parameter for the periodic linked hierarchy.  Direct finite-size calculations at $z<1$ provide an independent check of this generating interpretation because higher linked orders are suppressed; the finite-ring construction and numerical comparison are given in Appendix~\ref{app:deformation-test}.

\subsection{Formal grand determinant/Hafnian organization}
\label{sec:grand}

The rapid periodic coefficient hierarchy suggests a useful all-order generating framework, but its logical status is different from the fixed-support construction developed above.  It is therefore important to separate three levels: an exact identity at each fixed support, an exactly summable Hafnian object in its convergence domain, and a formal determinant grand object whose renormalization and resummation are not yet established.  None of the finite-support results of Sec.~\ref{sec:periodic-linked} depends on this grand construction.

\paragraph{Exact fixed-support identity.}
For $n=2p$ distinct points on the circle define
\begin{equation}
D_{2p}(q;x)=[\det K_{2p}(x)]^q,
\qquad
H_{2p}(q;x)=\Hf[(K_{ij}(x)^2)^q].
\label{eq:grand-fixed-DH}
\end{equation}
At $q=1$ the Pfaffian--Hafnian identity gives $D_{2p}(1;x)=H_{2p}(1;x)=Z_p(x)$, and differentiation is pointwise exact:
\begin{equation}
\Gamma_{2p}(x)
=\left.\partial_q\left[D_{2p}(q;x)-H_{2p}(q;x)\right]\right|_{q=1}.
\label{eq:grand-fixed-derivative}
\end{equation}
Thus, conditional only on the lattice--continuum finite-part matching of Conjecture~\ref{conj:FPmatch}, the fixed-support coefficient can be written in the controlled form
\begin{equation}
c_{2p}
=\frac1{(2p)!}\FP_{\rm linked}
\int_{\mathbb T^{2p}}
\left.\partial_q\left[D_{2p}(q;x)-H_{2p}(q;x)\right]\right|_{q=1}
\dd^{2p}x.
\label{eq:grand-fixed-controlled}
\end{equation}
The order of operations in Eq.~\eqref{eq:grand-fixed-controlled} is deliberate: the $q$ derivative is taken at fixed noncoincident configuration first, producing the Shannon kernel $\Gamma_{2p}$ for which the linked finite part has already been defined.  No $q$-dependent extension of the collision subtraction is needed for this statement.

\paragraph{Exactly summable Hafnian side.}
The matching side can be packaged independently in a genuine convergence domain.  For $\Re q<1/2$ define
\begin{equation}
J(q)=\int_0^1\dd x\int_0^1\dd y\,
\left[\frac{\pi^2}{16\sin^2\!\pi(x-y)}\right]^q
=\left(\frac\pi4\right)^{2q}
\frac{\Gamma(\frac12-q)}{\sqrt\pi\Gamma(1-q)}.
\label{eq:Jq-grand}
\end{equation}
In this half-plane the integrals are absolutely convergent.  Every perfect matching factorizes into $p$ copies of $J(q)$, and therefore
\begin{align}
\cB(q,z)
&=\sum_{p=0}^{\infty}\frac{z^p}{(2p)!}
\int_{\mathbb T^{2p}}H_{2p}(q;x)\,\dd^{2p}x
\label{eq:Bqz}\\
&=\exp\!\left[\frac{zJ(q)}2\right].
\label{eq:B-exact}
\end{align}
The second line supplies a meromorphic continuation in $q$; from this point onward $\cB(q,z)$ denotes that continuation when $q$ lies outside the convergence half-plane.  Expanding it about $q=1$ gives
\begin{equation}
\cB(1,z)=1,
\qquad
\left.\partial_q\cB(q,z)\right|_{q=1}=\frac{\pi^2}{16}z.
\label{eq:B-derivative-one}
\end{equation}
This continuation of the explicitly solvable matching side reproduces the magnitude needed to cancel the deformed periodic independent-edge anomaly.

\paragraph{Formal determinant side and the missing compatibility theorem.}
Writing an analogous determinant grand object near $q=1$ requires additional structure that has not been constructed in this work.  The linked finite part used elsewhere is defined for the Shannon combination $\Gamma_{2p}$ after the derivative in Eq.~\eqref{eq:grand-fixed-derivative}; it does not automatically define separate finite parts of $D_{2p}(q)$ and $H_{2p}(q)$ for $q$ in a neighborhood of one.  In particular, the individual $q$-dependent terms can have collision singularities that cancel only after the Shannon combination is formed.

To display the prospective all-order structure without hiding this issue, suppose that there exists a $q$-dependent linked renormalization $\operatorname{FP}_{q,\rm linked}$ for the determinant coefficients, compatible with the already defined finite part at $q=1$, and set formally
\begin{equation}
\cA(q,z)
=\sum_{p=0}^{\infty}\frac{z^p}{(2p)!}
\operatorname{FP}_{q,\rm linked}
\int_{\mathbb T^{2p}}D_{2p}(q;x)\,\dd^{2p}x.
\label{eq:Aqz}
\end{equation}
The required compatibility is precisely that differentiation of the separately renormalized objects reproduce the finite part of the differentiated difference,
\begin{multline}
\left.\partial_q\left[
\operatorname{FP}_{q,\rm linked}\int D_{2p}(q;x)\,\dd^{2p}x
-\operatorname{AC}\int H_{2p}(q;x)\,\dd^{2p}x
\right]\right|_{q=1}
\\
=\FP_{\rm linked}\int
\left.\partial_q\left[D_{2p}(q;x)-H_{2p}(q;x)\right]\right|_{q=1}
\dd^{2p}x,
\label{eq:q-ren-compatibility}
\end{multline}
where $\operatorname{AC}$ denotes the analytic continuation from the convergent Hafnian domain.  Equation~\eqref{eq:q-ren-compatibility} is not proved here; it is the extra renormalization statement needed before the two grand objects can be differentiated separately.

If this compatibility holds coefficient by coefficient, then the generating series introduced in Eq.~\eqref{eq:Cz} obeys, as a \emph{formal power-series identity in $z$},
\begin{equation}
C(z)\stackrel{\rm formal}{=}
\left.\partial_q[\cA(q,z)-\cB(q,z)]\right|_{q=1}.
\label{eq:C-grand}
\end{equation}
No interchange of an infinite particle-number sum with the finite-part operation is asserted by Eq.~\eqref{eq:C-grand}; it is only a coefficientwise packaging of the fixed-$p$ relations.

Passing from this formal series to the physical point $z=1$ requires still more.  One needs convergence or a specified summation of the even-support series at $z=1$, control of its higher-support tail, and the all-order odd-support selection rule in Eq.~\eqref{eq:odd-selection-main}.  Only under these additional assumptions does the cancellation in Eq.~\eqref{eq:B-derivative-one} lead, for the auxiliary periodic parity law, to the formal constant
\begin{equation}
r^{\rm aux}(z)\stackrel{\rm formal}{=}
-\left.\partial_q\cA(q,z)\right|_{q=1}.
\label{eq:grand-master-z}
\end{equation}
At $z=1$ this auxiliary law is the physical TFI $z$-basis law; using the exact duality shift then gives, still under the same all-order assumptions,
\begin{equation}
r_1^{(x)}\stackrel{\rm formal}{=}
\log2-\left.\partial_q\cA(q,1)\right|_{q=1}.
\label{eq:grand-master-x}
\end{equation}
These equations are therefore an all-order organizing target rather than an independent derivation of the periodic constant.  The exact content is the fixed-support derivative identity and the closed Hafnian continuation; the unresolved content is the $q$-dependent determinant renormalization, its compatibility with the linked finite part, and the resummation to $z=1$.  Appendix~\ref{app:grand-derivation} records the coefficient normalization and the Hafnian calculation in this hierarchy.

The formal object \eqref{eq:Aqz} is a continuum analogue of the sum of powers of principal minors (SPPM), for which a Berezin field-theory representation was introduced in Ref.~\cite{NajafiRamezanpourRajabpour2025}.  Related finite-lattice all-minors sums in the TFI-stabilizer/$XX$-chain Shannon--R\'enyi correspondence admit a discrete-Selberg organization and determinant/Pfaffian reductions at special indices~\cite{RamirezTrinoRajabpour2026,KhassehRajabpour2026}.  The present proposal is nevertheless different: it would require a linked renormalization of an antisymmetric Cauchy determinant power and its derivative at $q=1$, rather than the fugacity-resolved finite-lattice moment itself.  Constructing such a renormalization, and only then seeking an integrable or tau-function representation, is a natural route toward an all-orders analytic periodic constant.

\FloatBarrier
\section{Discussion and conclusions}
\label{sec:discussion}
\label{sec:conclusion}

The central result is the exact identification of the critical TFI measurement distribution with a classical parity-boundary distribution.  Formation probabilities in free-fermion chains are naturally expressed through determinant methods~\cite{Shiroishi2001,FranchiniAbanov2005,StephanEFP2014,NajafiRajabpour2016,NajafiRajabpour2020,AresViti2020,AresRajabpourViti2020,AresRajabpourViti2021}, with Pfaffian extensions available in noncomputational bases~\cite{TarighiKhassehRajabpour2024}; here the additional trigonometric Cauchy structure factorizes every principal minor and promotes those individual probability formulas to an identity for the complete Born law.  At $n=1$ the ordinary entropy chain rule then gives Eq.~\eqref{eq:edge-cycle}, so that the difficult many-body Shannon entropy is separated into a known independent-edge term and a conditional entropy on the graph cycle space.

For the periodic chain this separation isolates the universal edge contribution $-\pi^2/16$ analytically.  The remaining constant is organized by the exact vertex M\"obius expansion.  At even macroscopic support the leading ambiguity is carried by perfect matchings, and the Pfaffian--Hafnian identity reduces the corresponding matching partition function to determinants.  Odd support has a complementary structural suppression: unmatched-leg power counting, together with recursive removal of lower-support collision faces, gives the conditional selection rule \eqref{eq:odd-selection-main}; a rigorous all-order forest theorem remains open.  Conditional on this forest closure and on the finite-part matching conjecture of Sec.~\ref{sec:periodic-linked}, every nonvanishing linked intercept is given by the explicit finite-dimensional integral \eqref{eq:c2p-period}, providing an all-orders analytic hierarchy for the periodic constant.  The four-vertex continuum integral is evaluated in closed form.  The six- and eight-vertex lattice sectors have the signs and magnitudes shown in Eq.~\eqref{eq:coeffs-main}; retaining these first three linked sectors gives $0.25453$, already within about $2\times10^{-4}$ of the high-precision periodic constant.  This rapid finite-order saturation provides strong numerical evidence that the linked-support hierarchy organizes the universal term efficiently, without constituting a convergence proof.  The complete constant is obtained independently from the full finite-size distribution.

The interval geometry exposes a different aspect of the same probability law.  Marginalizing the exterior does not generate a complicated correlated bath: all exterior information relevant to the measured interval collapses to independent parity variables, represented by a single ghost vertex.  The augmented conditional entropy then has an exact vertex-linked decomposition at every support order into physical-line sectors and sectors whose support contains the ghost.  This makes the cancellation of the two $(\log\ell)^2$ independent-edge terms explicit and leaves the analytic edge coefficient $\gamma_E=\log2/4-1/16$.  The remaining cycle coefficient is itself organized to all orders: physical-line supports provide open-line linked periods, while an arbitrary $k$-vertex ghost support reduces at common scale to the monomer--dimer kernel $\Gamma_k^\partial$; conditional on closure of the boundary-forest subtraction and lattice face matching, these sectors give the series in Eqs.~\eqref{eq:interval-cycle-allorder}--\eqref{eq:interval-gamma-allorder}.  In contrast with the periodic hierarchy, every boundary order is marginal and nested endpoint scales must be subtracted before a primitive single-log coefficient is defined.  The first complete augmented coefficient is evaluated analytically: $\alpha_3^{\linesys}=-s_{\edge}/4$ cancels the boundary-layer term in $\delta_2^{\partial}$, leaving $\kappa_3^{+}=2I_2$ in closed form.  The same cancellation holds for the full auxiliary Cauchy family, $\kappa_3^{+}(\nu)=2I_2(\nu)$.  The complete Walsh reconstruction independently recovers $\gamma_1=0.060020(3)$, serving as an internal validation against the previously known mutual-information coefficient $b_1$~\cite{Stephan2014,Tarighi2022}.  The Cauchy deformation also singles out $\log(2+\sqrt3)$ as a natural boundary constant.  The established value is numerically compatible with $\gamma_1\approx[62+37\log2-64\log(2+\sqrt3)]/56$, but this relation is neither derived nor used elsewhere.

The relation to the R\'enyi boundary phase is also useful for locating the Shannon result.  Boundary-transition analyses show why a naive continuation to $n=1$ is not justified~\cite{Zaletel2011,Stephan2011}, while related Book-Ising geometries provide a rigorous statistical-mechanical setting for such boundary ordering~\cite{DuminilCopinGarbanTassion2023}.  At real $n>1$ the Cauchy probability kernel gives a microscopic derivation of the universal interval logarithm from the inverse-square pairs removed by the cut~\cite{RajabpourBoundaryPhase2026}.  At $n=1$, by contrast, the chain rule isolates the conditional cycle information evaluated here.  Related developments on weak measurement, decoherence, measurement-altered critical states and information-theoretic diagnostics include Refs.~\cite{LeeJianXu2023,YangMaoJian2023,PaviglianitiTurkeshiSchiroSilva2024,PatilLudwig2024,KhannaVasseur2026,PatilLudwig2025,KunoOritoIchinose2025,Heinrich2026}.

Finally, two logically distinct extensions indicate that the parity representation is not an isolated algebraic accident: the auxiliary Cauchy deformation provides an exact algebraic family of probability kernels around the Ising point, while the decimation of $f_m(z)=z^m-1$ is an exact statement for a genuine family of critical quadratic chains.  The symbol-based description of critical quadratic chains~\cite{VerresenJonesPollmann2018,JonesVerresen2019,Tarighi2022} supplies a natural setting in which to ask which other physical models admit similarly factorized principal minors.  Two analytic problems remain particularly clear: proving the lattice--continuum finite-part correspondence and the required all-order forest control for the periodic linked hierarchy, and proving closure and lattice matching of the nested boundary-forest hierarchy for the interval cycle entropy.  The all-orders organizations themselves are already explicit in both geometries; resolving these problems would place the corresponding renormalized series on firm analytic footing and permit controlled evaluation of their full sums.

\section*{Acknowledgements}
The author used OpenAI ChatGPT during the preparation of this work as an interactive tool for checking the consistency of analytical derivations and numerical results, auditing the LaTeX manuscript and cross-references, and improving language and presentation. All AI-assisted suggestions incorporated into the manuscript were critically reviewed by the author, who made all final scientific and editorial decisions and assumes full responsibility for the content.

\section*{Author contributions}
M.A.R. is the sole author of this work.

\section*{Funding information}
This work was partially supported by CNPq and FAPERJ (grant number E-26/210.062/\allowbreak 2023).

\section*{Data and code availability}
No external datasets were used.  The numerical results reported in this work are deterministic evaluations of the exact finite-size formulas given in the text and appendices.

\section*{Competing interests}
The author declares no competing interests.

\appendix
\section{Algebraic foundations of the probability law}
\label{app:algebra}

This appendix supplies the algebra behind the exact probability-law statements in Sec.~\ref{sec:probability}: the formation-probability determinant, its Cauchy factorization, the Bernoulli boundary law, and the infinite-volume ghost reduction.

\subsection{Formation-probability determinant and Walsh conventions}
\label{app:det-walsh}

For a fixed spin configuration $\bm\sigma$, its Born probability can first be written without any fermionic notation as the expectation value of the product of local projectors,
\begin{equation}
 p_{\bm\sigma}=\left\langle G\left|\prod_{j=0}^{L-1}\frac{1+\sigma_j^z\,\widehat\sigma_j^z}{2}\right|G\right\rangle .
 \label{eq:projector-prob}
\end{equation}
After the Jordan--Wigner transformation, which maps the spin chain to free fermions, the critical TFI ground state is Gaussian: all higher fermionic correlators are fixed by two-point contractions through Wick's theorem.  The projector expectation therefore reduces to the standard formation-probability determinant.  In the conventions used in the main text one obtains~\cite{NajafiRajabpour2016,NajafiRajabpour2020,AresRajabpourViti2020}
\begin{equation}
 p_{\bm\sigma}=\det M_{\bm\sigma},\qquad
 M_{\bm\sigma}=\frac12\left(I+D_{\bm\sigma}G_L\right),
 \label{eq:formation-M}
\end{equation}
with the critical finite-ring contraction matrix $G_L$ given in \eqref{eq:GL}.  Thus \eqref{eq:formation-M} is exactly \eqref{eq:det-born}.  This Gaussian/Wick reduction is standard free-fermion machinery; the nonstandard step used in this paper begins by exploiting the Cauchy form of $G_L$ and of all its principal minors.  We now spell out that determinant-to-Walsh algebra because it is essential for the parity-graph identification.

For $D_\sigma=\diag(\sigma_0,\ldots,\sigma_{L-1})$,
\begin{equation}
\det(I+D_\sigma G)=\sum_{S\subseteq[L]}\det[(D_\sigma G)[S]].
\end{equation}
Since $D_\sigma[S]$ is diagonal,
\begin{equation}
\det[(D_\sigma G)[S]]=\left(\prod_{j\in S}\sigma_j\right)\det G[S].
\end{equation}
Thus
\begin{equation}
2^{-L}\det(I+D_\sigma G)=2^{-L}\sum_S d(S)\chi_S(\sigma),
\qquad d(S)=\det G[S].
\end{equation}
The key point is that the Walsh characters form an orthogonal basis on the Boolean hypercube.  For $S,T\subseteq[L]$,
\begin{equation}
2^{-L}\sum_{\bm\sigma\in\{\pm1\}^L}
\chi_S(\bm\sigma)\chi_T(\bm\sigma)=\delta_{S,T}.
\label{eq:walsh-orthogonality-app}
\end{equation}
Indeed, $\chi_S\chi_T=\chi_{S\triangle T}$.  If $S\neq T$, choose $j\in S\triangle T$ and pair every spin configuration with the configuration obtained by flipping only $\sigma_j$; the two terms cancel.  If $S=T$, every term equals one.

Consequently, for any function $f$ on $\{\pm1\}^L$, define
\begin{equation}
\widehat f(S)=\sum_{\bm\sigma}f(\bm\sigma)\chi_S(\bm\sigma).
\end{equation}
Orthogonality gives the inverse transform
\begin{equation}
f(\bm\sigma)=2^{-L}\sum_{S\subseteq[L]}\widehat f(S)\chi_S(\bm\sigma).
\label{eq:walsh-inversion-app}
\end{equation}
Thus the complete set of Walsh coefficients determines a function, and in particular a probability law, uniquely.  Applied to the determinant expansion above, this shows that its coefficients $d(S)$ are precisely the Walsh coefficients of the Born distribution.  Normalization is the $S=\varnothing$ component: $d(\varnothing)=1$, while every nonempty Walsh character has zero average over the hypercube.

For the infinite interval calculation one simply takes $L\to\infty$ at fixed separations in the finite-ring Cauchy factorization.  Since
\begin{equation}
\frac{1}{L\sin(\pi/2L)}\to\frac2\pi,
\end{equation}
and
\begin{equation}
\frac{\sin^2(\pi m/L)}{\sin[\pi(m-1/2)/L]\sin[\pi(m+1/2)/L]}
\to\frac{4m^2}{4m^2-1},
\end{equation}
one obtains \eqref{eq:d-infty}.

\subsection{Trigonometric and ordinary Cauchy factorization}
\label{app:cauchy}

This subsection provides the factorization used in Sec.~\ref{sec:probability} to turn principal minors into graph-cut products.  The finite-ring matrix \eqref{eq:GL} is a trigonometric Cauchy matrix.  Up to row and column signs, write
\begin{equation}
G_{ab}=\frac{1}{L\sin(x_a-y_b)},
\qquad
x_a=\frac{\pi a}{L},\qquad y_b=\frac{\pi(b-1/2)}{L}.
\end{equation}
The trigonometric Cauchy determinant formula~\cite{Krattenthaler1999} is
\begin{equation}
\det\left[\frac1{\sin(x_a-y_b)}\right]
=\frac{\prod_{a<a'}\sin(x_a-x_{a'})\prod_{b<b'}\sin(y_{b'}-y_b)}
{\prod_{a,b}\sin(x_a-y_b)}
\end{equation}
up to an ordering sign.  Restricting to the same index set in rows and columns and grouping pairwise factors gives \eqref{eq:cauchy-factor}.  The pair factor follows from
\begin{equation}
\sin\!\left(x-\frac\pi{2L}\right)
\sin\!\left(x+\frac\pi{2L}\right)
=\sin^2x-\sin^2\frac\pi{2L}.
\end{equation}
The same sine products give the finite identity needed for the cut representation,
\begin{equation}
A_L^2\prod_{m=1}^{L-1}g_m^{(L)}=1.
\label{eq:finite-product-app}
\end{equation}
Indeed,
\begin{equation}
\prod_{m=1}^{L-1}\sin\frac{\pi m}{L}=\frac{L}{2^{L-1}},
\qquad
\prod_{m=0}^{L-1}\sin\frac{\pi(m+1/2)}{L}=2^{1-L},
\end{equation}
and substitution into the product of Eq.~\eqref{eq:g-z} gives Eq.~\eqref{eq:finite-product-app}.

For the infinite family \eqref{eq:Gnu-main}, ordinary Cauchy factorization gives
\begin{align}
\det_{a,b}\frac1{s_a-s_b+\nu}
&=\frac1{\nu^k}\prod_{a<b}\frac{(s_a-s_b)^2}{(s_a-s_b)^2-\nu^2},
\end{align}
which immediately yields \eqref{eq:dnu-main} after multiplying by $(\sin\pi\nu/\pi)^k$.

\subsection{Bernoulli parity-boundary law}
\label{app:bernoulli-proof}

We now complete the determinant-to-probability step used in Eq.~\eqref{eq:parity-law} by matching the cut products to the characteristic function of a Bernoulli graph boundary.  Let $\eta_e$ be independent Bernoulli variables with probabilities $\beta_e$.  For a vertex subset $S$,
\begin{equation}
\chi_S(\partial F)=(-1)^{|S\cap\partial F|}.
\end{equation}
Every edge with both endpoints in $S$ or both outside $S$ contributes an even number to $|S\cap\partial F|$ and cancels.  Let $\delta(S)$ denote the set of edges with exactly one endpoint in $S$.  Every cut edge $e\in\delta(S)$ contributes $(-1)^{\eta_e}$.  Therefore
\begin{equation}
\EE\,\chi_S(\partial F)=\prod_{e\in\delta(S)}\EE(-1)^{\eta_e}
=\prod_{e\in\delta(S)}(1-2\beta_e).
\end{equation}
To derive the TFI cut product rather than assume it, apply Eq.~\eqref{eq:finite-product-app} around every vertex and multiply it over $i\in S$.  Internal pairs in $S$ are then counted twice, while each cut pair is counted once:
\begin{equation}
1=A_L^{2|S|}
\left[\prod_{\substack{i<j\\i,j\in S}}(g_{j-i}^{(L)})^2\right]
\left[\prod_{e\in\delta(S)}g_e^{(L)}\right].
\end{equation}
The first two factors are exactly $d_L(S)^2$ by Eq.~\eqref{eq:cauchy-factor}.  Hence
\begin{equation}
d_L(S)^2=\prod_{e\in\delta(S)}\frac1{g_e^{(L)}}
=\prod_{e\in\delta(S)}(1-z_e).
\end{equation}
The principal minors are positive for the present kernel, so
\begin{equation}
d_L(S)=\prod_{e\in\delta(S)}\sqrt{1-z_e}.
\label{eq:cut-product-app}
\end{equation}
Choosing $1-2\beta_e=\sqrt{1-z_e}$ makes the Walsh coefficient of the Bernoulli boundary law equal to $d_L(S)$ for every $S\subseteq[L]$.  By the determinant expansion above, the Born law has the same coefficients.  The orthogonality and inversion formulae \eqref{eq:walsh-orthogonality-app}--\eqref{eq:walsh-inversion-app} therefore imply equality of the two laws pointwise, which proves \eqref{eq:parity-law}.

The entropy identity follows from the deterministic map $F\mapsto\partial F$:
\begin{equation}
H(F)=H(\partial F)+H(F\mid\partial F),
\end{equation}
because $H(\partial F\mid F)=0$.  Thus $H(\partial F)=H(F)-H(F\mid\partial F)$.

Finally, $\ker\partial$ is exactly the binary cycle space of the graph.  If $F$ and $F'$ have the same boundary then $\partial(F\triangle F')=0$, so their difference is an even subgraph.  Conditional entropy at fixed boundary is therefore entropy within a coset of the weighted cycle code.

\subsection{Exact support checkpoint and an equivalent current representation}
\label{app:exact-checkpoints}

The parity law immediately fixes the support of the finite periodic Born distribution.  A graph boundary always has even cardinality, and every even vertex subset occurs with positive probability because $0<\beta_e<1$.  Hence
\begin{equation}
|\operatorname{supp}p_L|=2^{L-1},
\qquad
H_0^{(z)}(L)=(L-1)\log2.
\label{eq:H0-check}
\end{equation}
This gives the exact endpoint values $s_0=\log2$, $r_0^{(z)}=-\log2$ and, using the duality shift in Eq.~\eqref{eq:KW}, $r_0^{(x)}=0$.  We use this only as an exact consistency check; the paper does not rely on an extrapolation in R\'enyi index.

The same Cauchy product also admits the exact auxiliary Ising/current form used in Ref.~\cite{RajabpourBoundaryPhase2026}.  With
\begin{equation}
J_m^{(L)}=\frac14\log g_m^{(L)}
\end{equation}
and $\tau_j=(-1)^{\mathbf 1(j\in S)}$, the principal minor can be written as a zero-field long-range Ising weight up to the normalization fixed by the finite product identity.  Expanding each factor
$e^{J_e\tau_i\tau_j}=\cosh J_e(1+\tanh J_e\,\tau_i\tau_j)$
gives a high-temperature current sum over edge sets with prescribed boundary.  Normalizing that sum makes each edge independently occupied with
\begin{equation}
\frac{\tanh J_e}{1+\tanh J_e}
=\frac12\left(1-\sqrt{1-z_e}\right)
=\beta_e,
\end{equation}
which is an independent derivation of the Bernoulli law used in the main text.  In the infinite-line limit
\begin{equation}
J_m=\frac14\log\frac{4m^2}{4m^2-1}
=\frac{1}{16m^2}+O(m^{-4}),
\end{equation}
making the inverse-square character of the auxiliary interaction explicit.

\subsection{Normalized R\'enyi moment and the exact Shannon derivative}
\label{app:renyi-moment}

This short derivation explains why the normalized R\'enyi moment has an especially simple derivative at the Shannon point and connects the neighboring moment formulation to the chain-rule decomposition used in the main text.  For a finite weighted graph $G=(V,E)$ with boundary law $P_G(D)=\PP\{\partial F=D\}$, define the R\'enyi moment of its boundary law and the corresponding one-edge moment by
\begin{equation}
\mathcal M_n(G)=\sum_D P_G(D)^n,
\qquad
b_e(n)=(1-\beta_e)^n+\beta_e^n,
\end{equation}
and normalize out the independent-edge contribution,
\begin{equation}
\widehat{\mathcal M}_n(G)=
\frac{\mathcal M_n(G)}{\prod_{e\in E}b_e(n)}.
\end{equation}
For a forest the boundary map is injective, hence $\widehat{\mathcal M}_n(G)=1$ for every $n$.  At $n=1$, $\mathcal M_1=b_e(1)=1$, while direct differentiation gives
\begin{equation}
\left.\partial_n\log\mathcal M_n(G)\right|_{n=1}=-H(\partial F),
\qquad
\left.\partial_n\log b_e(n)\right|_{n=1}=-h(\beta_e).
\end{equation}
Therefore the cycle conditional entropy has the exact finite-graph representation
\begin{equation}
H(F\mid\partial F)
=\left.\partial_n\log\widehat{\mathcal M}_n(G)\right|_{n=1}.
\label{eq:Hcyc-renyi-derivative}
\end{equation}
This identity does not assume a smooth thermodynamic continuation in $n$: it is simply the derivative of a finite-graph moment at $n=1$.  It also makes precise why the Shannon point is special even though the underlying Bernoulli probability law exists independently of $n$.

\subsection{Exterior product, infinite-volume limit and ghost independence}
\label{app:ghost-details}

For the infinite-line edge characteristic factor
\begin{equation}
q_m=\sqrt{1-\frac1{4m^2}},
\end{equation}
we have $\beta_m=O(m^{-2})$ and therefore $\sum_{m\ge1}\beta_m<\infty$.  A convenient way to make the exterior parity rigorous is to truncate the crossing edges at a maximal length $M$.  Let $\xi_i^{(M)}$ be the parity of all crossing edges incident on $i$ whose length is at most $M$.  For fixed $i$, the expected number of occupied crossing edges omitted beyond $M$ tends to zero as $M\to\infty$, and the full sum of their occupation probabilities is finite.  By the Borel--Cantelli lemma only finitely many crossing edges incident on $i$ are occupied almost surely.  Hence $\xi_i^{(M)}$ stabilizes almost surely to a well-defined limit $\xi_i$.

At finite $M$, $\xi_i^{(M)}$ and $\xi_j^{(M)}$ are functions of disjoint sets of Bernoulli edge variables whenever $i\neq j$: an edge crossing the interval boundary has exactly one endpoint in $A_\ell$.  Thus the truncated parity variables are independent.  Passing to the almost-sure limits preserves the factorized joint characteristic function, so the full exterior parity variables $\{\xi_i\}_{i\in A_\ell}$ are independent as asserted in the main text.

The exterior parity at site $i$ contains edges of lengths $m=i,i+1,\ldots$ on the left and $m=\ell+1-i,\ell+2-i,\ldots$ on the right.  For a parity of independent Bernoulli variables,
\begin{equation}
\EE(-1)^{\xi_i}=1-2\lambda_{i,\ell}
=\prod_{m=i}^{\infty}q_m\prod_{m=\ell+1-i}^{\infty}q_m.
\end{equation}
The squared tail product is
\begin{align}
P_n&=\prod_{m=n}^{\infty}\frac{(m-1/2)(m+1/2)}{m^2}\\
&=\frac{\Gamma(n)^2}{\Gamma(n-1/2)\Gamma(n+1/2)}.
\end{align}
This proves Eq.~\eqref{eq:lambda-main}.

Now let $\zeta_i$ be independent Bernoulli variables with the same one-site probabilities $\lambda_{i,\ell}$ and represent $\zeta_i$ by the ghost edge $(0,i)$.  Since the joint laws of $\xi$ and $\zeta$ are identical, replacing the entire exterior crossing-edge cloud by these $\ell$ ghost edges preserves the vector of physical parities exactly in distribution:
\begin{equation}
\partial F_A\oplus\xi\ \stackrel{d}{=}\ \partial F_A\oplus\zeta.
\end{equation}
This replacement does not assert equality of the microscopic exterior edge configurations; it identifies only the sufficient parity variables relevant to the subsystem measurement law.

Finally, for the augmented finite graph $F^+$ the total boundary has even parity.  Hence
\begin{equation}
(\partial F^+)_0=\bigoplus_{i=1}^{\ell}(\partial F^+)_i.
\end{equation}
The full augmented boundary is therefore a deterministic one-to-one extension of its physical components.  Consequently
\begin{equation}
H(D)=H(\partial F^+),
\end{equation}
and because $F^+\mapsto\partial F^+$ is deterministic,
\begin{equation}
H(D)=H(F^+)-H(F^+\mid\partial F^+),
\end{equation}
which proves Eq.~\eqref{eq:ghost-entropy-chain} and the subsystem split \eqref{eq:sub-split}.

\section{Asymptotics, Möbius inversion and finite-part prescriptions}
\label{app:asymptotics}

This appendix supplies the asymptotic and subtraction machinery used in the periodic and interval analyses: first the explicit edge terms, then support-space Möbius inversion, and finally the collision and boundary finite-part prescriptions.

\subsection{Periodic edge anomaly from singular Euler--Maclaurin}
\label{app:edge-anomaly}

This subsection derives the universal $-\pi^2/16$ independent-edge constant quoted in Sec.~\ref{sec:periodic-edge} and isolates the nonanalytic endpoint term responsible for it.  Set
\begin{equation}
f(z)=h\!\left(\frac{1-\sqrt{1-z}}2\right).
\end{equation}
The singular part near $z=0$ is
\begin{equation}
f_{\rm sing}(z)=-\frac14 z\log z.
\end{equation}
The analytic part does not generate an $O(1)$ ring anomaly.  This can be seen explicitly before treating the logarithm.  If
\begin{equation}
 f(z)=-\frac14z\log z+a_1z+a_2z^2+\cdots,
 \qquad a_1=\frac{1+\log4}{4},
\end{equation}
then the linear contribution obeys the exact cosecant sum
\begin{align}
 \frac{L}{2}a_1\sum_{m=1}^{L-1}z_m^{(L)}
 &=\frac{La_1}{2}\sin^2\!\delta\sum_{m=1}^{L-1}\csc^2\frac{\pi m}{L}\\
 &=\frac{La_1}{6}(L^2-1)\sin^2\!\delta
 =\frac{a_1\pi^2}{24}L+O(L^{-1}),
\end{align}
where $\sum_{m=1}^{L-1}\csc^2(\pi m/L)=(L^2-1)/3$.  There is no $L^0$ term.  The same conclusion holds term by term for every analytic power $z^k$: the finite cosecant-power sum is a polynomial in $L^2$ of degree $k$, while $\sin^{2k}\delta$ has an expansion in even powers of $L^{-1}$.  After multiplication by the outer factor $L/2$, only $L,L^{-1},L^{-3},\ldots$ can occur.  Since the physical range satisfies $z_m^{(L)}<1$ uniformly, the regular power series may be treated termwise.  Hence the constant anomaly comes from the nonanalytic $-z\log z/4$ term.  Define
\begin{equation}
S_L=\sum_{m=1}^{L-1}z_m^{(L)}\log z_m^{(L)},
\qquad z_m^{(L)}=\frac{\sin^2\delta}{\sin^2(\pi m/L)},\quad\delta=\frac\pi{2L}.
\end{equation}
Introduce
\begin{equation}
C_L(a)=\sum_{m=1}^{L-1}\sin^{-2a}\frac{\pi m}{L}.
\end{equation}
Then
\begin{equation}
S_L=\sin^2\delta\left[2\log(\sin\delta)C_L(1)+C'_L(1)\right].
\end{equation}
Near $a=1$ the singular Euler--Maclaurin form is
\begin{equation}
C_L(a)=2\left(\frac L\pi\right)^{2a}\zeta(2a)+L I(a)+O(1),
\end{equation}
where by analytic continuation
\begin{equation}
I(a)=\int_0^1\sin^{-2a}(\pi x)\dd x
=\frac{\Gamma(\frac12-a)}{\sqrt\pi\Gamma(1-a)}.
\end{equation}
At $a=1$, $I(1)=0$ and $I'(1)=2$.  Since $\sin^2\delta=\pi^2/(4L^2)+O(L^{-4})$, the $L I'(1)$ term produces
\begin{equation}
S_L=S_\infty+\frac{\pi^2}{2L}+o(L^{-1}).
\end{equation}
The singular contribution to $H_{\edge}$ is $-(L/8)S_L$, giving the constant
\begin{equation}
c_{\edge}=-\frac{\pi^2}{16}.
\end{equation}
This derivation also explains why the same Gamma ratio reappears in the exact grand-Hafnian sum in Appendix~\ref{app:grand-derivation}.

\subsection{Periodic M\"obius inversion and finite-size algorithms}
\label{app:mobius}

Here we derive the support-space inversion behind Eq.~\eqref{eq:linked-expansion} and then rewrite it in the form used for the exact finite-size linked-sector calculations.

\paragraph{Vertex-support inversion.}
For the periodic graph, regard $C_L(S)=H(F_S\mid\partial F_S)$ as a set function on vertex supports, with $C_L(\varnothing)=0$, and define $\Psi_L(S)$ as in Eq.~\eqref{eq:psi}.  The inverse M\"obius relation is
\begin{equation}
C_L(S)=\sum_{T\subseteq S}\Psi_L(T).
\label{eq:mobius-inverse-app}
\end{equation}
To verify it directly, interchange the two subset sums:
\begin{align}
\sum_{T\subseteq S}\Psi_L(T)
&=\sum_{R\subseteq S}C_L(R)
\sum_{R\subseteq T\subseteq S}(-1)^{|T|-|R|}\nonumber\\
&=\sum_{R\subseteq S}C_L(R)(1-1)^{|S|-|R|}
=C_L(S).
\end{align}
Thus $\Psi_L(S)$ is the signed inclusion--exclusion coefficient assigned to the full support $S$; it is not itself a conditional entropy and need not be positive.

The low-support cases make this interpretation explicit.  Since an induced graph on one or two vertices is a forest, its cycle entropy vanishes.  Hence for three vertices
\begin{equation}
\Psi_L(\{1,2,3\})=C_L(\{1,2,3\}),
\end{equation}
whereas for four vertices
\begin{equation}
\Psi_L(\{1,2,3,4\})
=C_L(\{1,2,3,4\})
-\sum_{\substack{T\subseteq\{1,2,3,4\}\\|T|=3}}C_L(T).
\label{eq:mobius-four-example}
\end{equation}
The four-vertex coefficient therefore subtracts all cycle entropy already assigned to triangular proper sub-supports.  Applying \eqref{eq:mobius-inverse-app} to $S=[L]$ gives
\begin{equation}
H_{\cyc}(L)=C_L([L])
=\sum_{S\subseteq[L]}\Psi_L(S)
=\sum_{k=3}^{L}A_{k,L},
\end{equation}
which is Eq.~\eqref{eq:linked-expansion}.

\paragraph{Raw-support form and finite-size summation.}
The local M\"obius representation is conceptually useful, but evaluating every $\Psi_L(S)$ separately repeats the same raw conditional entropies many times.  For finite-size calculations it is more efficient to interchange the support sums.  Define
\begin{equation}
\mathcal C_{j,L}^{\rm raw}=\sum_{|T|=j}C_L(T).
\end{equation}
Then
\begin{align}
A_{k,L}
&=\sum_{|S|=k}\sum_{T\subseteq S}(-1)^{k-|T|}C_L(T)\nonumber\\
&=\sum_{j=0}^{k}(-1)^{k-j}
\sum_{|T|=j}C_L(T)\,
\#\{S:\,|S|=k,\ T\subseteq S\}.
\end{align}
For a fixed $j$-vertex support $T$, the remaining $k-j$ vertices can be chosen from the $L-j$ vertices outside $T$, so the multiplicity is $\binom{L-j}{k-j}$.  Therefore
\begin{equation}
A_{k,L}=\sum_{j=0}^{k}(-1)^{k-j}\binom{L-j}{k-j}\mathcal C_{j,L}^{\rm raw}.
\label{eq:global-mobius}
\end{equation}
This is the global form of the same support-space M\"obius transform and is the formula used in the finite-size implementation.

\paragraph{Finite-size evaluation.}
For a fixed induced $k$-vertex graph one need not enumerate $2^{k(k-1)/2}$ edge states.  Let
\begin{equation}
\rho_{ij}=1-2\beta_{ij}.
\end{equation}
The boundary law has Walsh transform
\begin{equation}
\widehat P(A)=\prod_{\substack{i\in A\\j\notin A}}\rho_{ij},
\end{equation}
so
\begin{equation}
P(D)=2^{-k}\sum_{A\subseteq[k]}(-1)^{|A\cap D|}\widehat P(A).
\end{equation}
A fast Walsh--Hadamard transform therefore gives all boundary probabilities in $O(k2^k)$ after the cut products have been generated.  The induced conditional entropy is
\begin{equation}
C(S)=\sum_{i<j\in S}h(\beta_{ij})+
\sum_D P(D)\log P(D).
\label{eq:conditional-fwht}
\end{equation}
This is the algorithm used for the exact linked data quoted in the paper.

\subsection{Linked collision finite part}
\label{app:finite-part}

The raw full-support kernel $\Gamma_{2p}$ is positive, but its continuum integral over the gap simplex is not the universal coefficient because collision faces contain lower-support information.  The linked finite part follows the hierarchy already present in the exact lattice Möbius expansion.

If an odd cluster of $2s+1$ vertices collapses on a scale $\varepsilon$, the minimal boundary ambiguity is a near-perfect matching with one unmatched vertex.  Denote its total weight by
\begin{equation}
U_{2s+1}=\sum_{i=1}^{2s+1}\det K_{\widehat i}.
\end{equation}
The associated odd entropy residue has the schematic form
\begin{equation}
\mathcal O_{2s+1}=U_{2s+1}\log U_{2s+1}-\sum_i Q(K_{\widehat i}),
\end{equation}
where, for an even antisymmetric matrix $A$,
\begin{equation}
Q(A)=\sum_{a<b}A_{ab}^2\det A_{\widehat a\widehat b}\log A_{ab}^2
\end{equation}
is the logarithmic matching-weight sum of the even minor.  On a collision face the even kernel factorizes into the odd residue of the collapsing cluster times a lower even or odd kernel on the quotient configuration.  Nested collisions are therefore organized by compatible forests.

Operationally, our finite-part prescription is:
\begin{enumerate}[label=(\roman*)]
\item begin with the exact integer-gap expression for $A_{2p,L}$;
\item identify all boundary strata of the continuum simplex corresponding to microscopic or mesoscopic collisions;
\item subtract lower-support power pieces already assigned by the vertex Möbius expansion;
\item subtract logarithmic odd-cluster residues recursively on nested collision forests;
\item apply ordinary Euler--Maclaurin or numerical quadrature to the regular remainder;
\item restore the finite face anomalies inherited from integer gaps.
\end{enumerate}
At $p=2$ the subtraction can be carried out completely and leads to Appendix~\ref{app:c4}.  At $p=3,4$ the finite-lattice coefficients provide direct checks of the resulting continuum scaling and universality.  The odd-support selection rule below is explicitly conditional on the recursive forest construction; a rigorous all-order forest theorem remains an open mathematical problem.

\subsection{Odd-support selection rule}
\label{app:odd-selection}

The linked expansion begins at three vertices, and an odd sector may be extensive even though its $L^0$ intercept vanishes.  For a primitive macroscopic configuration on $k$ vertices, the minimum number of long edges needed for every vertex to participate is
\begin{equation}
e_{\min}(k)=\left\lceil\frac{k}{2}\right\rceil.
\end{equation}
Since every macroscopic edge costs $L^{-2}$ while the number of placements, including translation, is $O(L^k)$, the superficial degree is
\begin{equation}
\omega(k)=k-2e_{\min}(k)
=k-2\left\lceil\frac{k}{2}\right\rceil,
\qquad
\omega(2p)=0,
\quad
\omega(2p+1)=-1.
\label{eq:odd-superficial-degree}
\end{equation}
Thus even supports are marginal and can carry a primitive $L^0$ period, whereas a primitive odd support is suppressed by one inverse length.  Equivalently, a perfect matching closes all $2p$ macroscopic vertices, while at $2p+1$ vertices one leg remains unmatched.  On a periodic system that leg must either acquire an additional long edge or enter a proper collision cluster.

The three-vertex sector shows why the collision qualification matters.  Here $\Psi_L=C_\triangle$ and
\begin{equation}
A_{3,L}=\frac{L}{3}
\sum_{\substack{a+b+c=L\\a,b,c>0}}
C_\triangle\!\left(\beta_a^{(L)},\beta_b^{(L)},\beta_c^{(L)}\right).
\label{eq:A3-odd-analysis}
\end{equation}
The fully macroscopic region is $O(L^{-1})$, because a genuine three-vertex ambiguity requires at least two long edges.  A microscopic corner is subtler: expanding the exact triangle entropy \eqref{eq:triangle-exact} with $b$ fixed and $a\to\infty$ gives
\begin{equation}
C_\triangle(\beta_a,\beta_b,\beta_{a+b})
\sim \frac{h(\beta_b)}{8a^2}.
\label{eq:triangle-tail-odd}
\end{equation}
Such a tail could by itself leave an $L^0$ term after the translational prefactor is restored.  However, the periodic inverse-square kernel has the image representation
\begin{equation}
\frac{\pi^2}{L^2}\csc^2\!\left(\frac{\pi m}{L}\right)
=\sum_{n\in\mathbb Z}\frac{1}{(m+nL)^2}.
\label{eq:periodic-image-odd}
\end{equation}
The three equivalent microscopic corners and their periodic-image overlap describe the same intermediate region.  In matched asymptotics the $L^{-1}$ tail terms in the gap sum cancel between these descriptions, leaving
\begin{equation}
A_{3,L}=s_3L+O\!\left(\frac{\log L}{L}\right),
\label{eq:c3-selection}
\end{equation}
which gives the physics-level result $c_3=0$.  A rigorous version would require a uniform singular Euler--Maclaurin estimate controlling the corners and their overlaps.

The five-vertex sector is the first nontrivial test of collision closure.  Its primitive fully macroscopic interior needs at least three long edges and is again $O(L^{-1})$.  Among proper collision patterns, the only superficially marginal one is $2+1+1+1$: a microscopic pair behaves at long distance as one composite object, leaving four macroscopic blocks that can support a perfect matching.  Its leading factorization is therefore a local pair factor multiplying a four-point matching kernel.  That $O(1)$ piece is not new five-vertex information; it is precisely lower-support four-point ambiguity and is removed by the vertex-linked face counterterm.  The remaining partitions $3+1+1$, $2+2+1$, $4+1$, $3+2$, and the fully microscopic cluster either have an odd quotient, no quotient cycle ambiguity, or contribute to the extensive density rather than to an intercept.  This gives the corresponding physics-level result $c_5=0$.

The same reasoning suggests an induction on collision forests.  Assume that linked subtraction has been constructed on every proper support below $2p+1$.  Every proper collision face factorizes into an internal cluster times a lower-support quotient kernel.  Power pieces already assigned to smaller supports are removed by the vertex M\"obius counterterms, while odd collapsing clusters generate the near-perfect logarithmic residues described above and are subtracted recursively on nested forests.  After all proper subdivergences are removed, the root configuration is genuinely $(2p+1)$-vertex and therefore has superficial degree $-1$.  At worst logarithms decorate this power:
\begin{equation}
A_{2p+1,L}^{\rm ren}
=O\!\left(\frac{\log^mL}{L}\right).
\end{equation}
Under the assumption that this recursive collision-forest subtraction closes compatibly with the exact finite-$L$ M\"obius expansion at every order, one obtains the conditional selection rule \eqref{eq:odd-selection-main}.  Turning this argument into a theorem requires an all-order factorization/forest theorem together with a uniform bound on the renormalized odd remainder.

\subsection{Distinguished-vertex M\"obius inversion for the interval}
\label{app:distinguished-mobius}

The interval formulas in \Cref{sec:subsystem-linked} are the same identity with one distinguished vertex.  Let $V^+=\{0\}\cup A$ and let $C^{\rm aug}(U)$ be any set function on subsets of $V^+$ with $C^{\rm aug}(\varnothing)=0$.  Define
\begin{equation}
\widetilde\Psi(U)=\sum_{R\subseteq U}(-1)^{|U|-|R|}C^{\rm aug}(R).
\end{equation}
M\"obius inversion gives
\begin{equation}
C^{\rm aug}(V^+)=\sum_{U\subseteq V^+}\widetilde\Psi(U).
\end{equation}
Split the sum into supports that do and do not contain $0$:
\begin{equation}
C^{\rm aug}(V^+)
=\sum_{S\subseteq A}\widetilde\Psi(S)
+\sum_{S\subseteq A}\widetilde\Psi(\{0\}\cup S).
\label{eq:distinguished-split-app}
\end{equation}
For a ghost-containing support,
\begin{align}
\widetilde\Psi(\{0\}\cup S)
&=\sum_{T\subseteq S}(-1)^{|S|-|T|}
C^{\rm aug}(\{0\}\cup T)\nonumber\\
&\quad-\sum_{T\subseteq S}(-1)^{|S|-|T|}C^{\rm aug}(T).
\label{eq:distinguished-cumulant-app}
\end{align}
Taking $C^{\rm aug}(T)=C_\ell^{\linesys}(T)$ and $C^{\rm aug}(\{0\}\cup T)=C_\ell^+(0\cup T)$ gives Eq.~\eqref{eq:ghost-cumulant}.  The first sum in Eq.~\eqref{eq:distinguished-split-app} reconstructs $C_\ell^{\linesys}$ and the second reconstructs the total ghost-containing contribution.  Since any graph with fewer than three total vertices is a forest, the pure line sum starts at three physical vertices and the ghost sum starts at two physical vertices.  Grouping by total augmented support size $q$ immediately gives Eq.~\eqref{eq:augmented-sector-split}.

\paragraph{Interval finite-size evaluation.}
For the physical interval line graph one sets $\rho_{ij}=q_{|i-j|}$; for the augmented graph one additionally sets $\rho_{0i}=1-2\lambda_{i,\ell}$.  Thus $C_\ell^{\linesys}$, $C_\ell$, and every finite-support line or ghost cumulant can be evaluated by the boundary transform in Eq.~\eqref{eq:conditional-fwht}.

For the direct interval probability distribution one applies the transform instead to the exact principal minors \eqref{eq:d-infty}.  A Gray-code ordering, in which successive subsets differ by one element, can generate the product form of $d(S)$ in $O(2^\ell)$ multiplicative updates, after which the transform costs $O(\ell2^\ell)$.

\subsection{Independent subsystem logarithm}
\label{app:sub-independent}

Write $a=1/16$.  At large $m$,
\begin{equation}
\beta_m=\frac{a}{m^2}+O(m^{-4})
\end{equation}
and therefore
\begin{equation}
h(\beta_m)=\frac{a}{m^2}\left[1-\log a+2\log m\right]
+O(m^{-4}\log m).
\label{eq:hbeta-app-expanded}
\end{equation}
The remainder in Eq.~\eqref{eq:hbeta-app-expanded} is absolutely summable with enough powers that it can change the extensive coefficient and the $O(1)$ term but not the logarithms extracted below.

For the internal edges,
\begin{align}
E_\ell^{\rm int}
&=\sum_{m<\ell}(\ell-m)h(\beta_m)\nonumber\\
&=\ell\sum_{m\ge1}h(\beta_m)
-\ell\sum_{m\ge\ell}h(\beta_m)
-\sum_{m<\ell}m h(\beta_m).
\label{eq:Eint-app-split}
\end{align}
The first term is $s_{\edge}\ell$.  The needed elementary asymptotics are
\begin{align}
\sum_{m<\ell}\frac1m&=\log\ell+O(1),\\
\sum_{m<\ell}\frac{\log m}{m}&=\frac12\log^2\ell+O(1),\\
\ell\sum_{m\ge\ell}\frac1{m^2}&=1+o(1),\\
\ell\sum_{m\ge\ell}\frac{\log m}{m^2}&=\log\ell+1+o(1).
\end{align}
Substitution into Eq.~\eqref{eq:Eint-app-split} gives
\begin{equation}
E_\ell^{\rm int}
=s_{\edge}\ell-a\log^2\ell-a(3-\log a)\log\ell+O(1),
\end{equation}
which is Eq.~\eqref{eq:internal-asympt}.

For the ghost edges, Stirling expansion of
\begin{equation}
P_n=\frac{\Gamma(n)^2}{\Gamma(n-1/2)\Gamma(n+1/2)}
\end{equation}
gives
\begin{equation}
P_n=1-\frac1{4n}+O(n^{-2}),\qquad
\sqrt{P_n}=1-\frac1{8n}+O(n^{-2}).
\end{equation}
With $N=\ell+1$ this yields
\begin{equation}
\lambda_{i,\ell}
=a\left(\frac1i+\frac1{N-i}\right)
+O\!\left(i^{-2}+(N-i)^{-2}\right).
\label{eq:lambda-app-twoend}
\end{equation}
The error in Eq.~\eqref{eq:lambda-app-twoend}, inserted into $h(\lambda)$, gives an $O(1)$ correction after summation because $\sum_i i^{-2}\log(i+1)<\infty$ at either endpoint.  Hence the logarithmic terms can be computed from
\begin{equation}
\lambda_i^{(0)}=a\frac{N}{i(N-i)}.
\end{equation}
Using $h(\lambda)=\lambda(1-\log\lambda)+O(\lambda^2)$ and the symmetry $i\leftrightarrow N-i$ gives
\begin{align}
\sum_{i=1}^{N-1}h(\lambda_{i,\ell})
={}&2a\sum_{i=1}^{N-1}\frac1i
\left[1-\log a-\log N+\log i+\log(N-i)\right]+O(1).
\label{eq:ghost-sum-app}
\end{align}
Now
\begin{align}
H_{N-1}&=\log N+\gamma_{\rm Eul}+O(N^{-1}),\\
\sum_{i=1}^{N-1}\frac{\log i}{i}
&=\frac12\log^2N+O(1),\\
\sum_{i=1}^{N-1}\frac{\log(N-i)}{i}
&=\log N\,H_{N-1}
+\sum_{i=1}^{N-1}\frac{\log(1-i/N)}{i}\nonumber\\
&=\log^2N+\gamma_{\rm Eul}\log N-\frac{\pi^2}{6}+o(1).
\end{align}
The last limit follows from the Riemann sum for
$\int_0^1\log(1-x)x^{-1}\dd x=-\pi^2/6$; the endpoint is integrable after division by the lattice index.  Substitution into Eq.~\eqref{eq:ghost-sum-app} cancels the $\gamma_{\rm Eul}\log N$ terms and yields
\begin{equation}
E_\ell^{\rm ghost}
=a\log^2\ell+2a(1-\log a)\log\ell+O(1),
\end{equation}
which is Eq.~\eqref{eq:ghost-asympt}.  Adding the two independent sectors gives
\begin{align}
\gamma_E
&=-a(3-\log a)+2a(1-\log a)\nonumber\\
&=-a(1+\log a)
=\frac{\log2}{4}-\frac1{16}.
\end{align}
The same derivation with $a=\nu^2/4$ proves Eq.~\eqref{eq:gammaEnu-main}.

\subsection{Boundary monomer--dimer forests}
\label{app:boundary-forest}

The scaling degree of the boundary problem differs from the periodic one already before subdivergences are considered.  Fix the outermost physical point at distance $R$ from an endpoint and scale the remaining $k-1$ ordered points with it.  Every monomer--dimer covering of $k$ physical vertices has $m+2d=k$ and hence weight $R^{-m}R^{-2d}=R^{-k}$, while the number of internal lattice placements at that outer scale is $O(R^{k-1})$.  Thus every primitive boundary support has the same superficial contribution $R^{-1}$ and can generate a logarithm after summing the outer scale.  The role of the forest subtraction is not to select a parity of $k$, as in the periodic problem, but to remove lower-scale information nested inside this marginal counting.

The monomer--dimer kernel \eqref{eq:monomer-dimer} is the correct leading full-support scaling object, but its integral over ordered boundary positions is not by itself a renormalized coefficient.  Three types of singular regions must be distinguished:
\begin{enumerate}[label=(\alph*)]
\item physical points collide while remaining far from the endpoint;
\item a subset remains microscopic while the whole cluster is translated to distance $R\to\infty$ from an endpoint;
\item nested endpoint scales $1\ll R_1\ll R_2\ll\cdots\ll\ell$ occur simultaneously.
\end{enumerate}
These are the subsystem analogues of collision forests in the periodic problem, with additional one-boundary nesting.  The first higher boundary sector already displays this structure explicitly.  Fix three scaled physical positions $0<t_1<t_2<t_3=1$.  The four monomer--dimer coverings have weights
\begin{align}
 w_0&=\frac1{4096\,t_1t_2},&
 w_{12}&=\frac1{256(t_2-t_1)^2},\\
 w_{13}&=\frac1{256\,t_2(1-t_1)^2},&
 w_{23}&=\frac1{256\,t_1(1-t_2)^2},
\end{align}
so that
\begin{equation}
 Z_3=w_0+w_{12}+w_{13}+w_{23},\qquad
 \Gamma_3^\partial=Z_3\log Z_3-\sum_{\alpha}w_\alpha\log w_\alpha.
 \label{eq:Gamma3-boundary}
\end{equation}
An ordinary physical collision is singular but locally integrable.  If $t_2=t_1+\varepsilon$ with $t_1=t$ fixed away from the endpoint, then
\begin{equation}
 \Gamma_3^\partial(t,t+\varepsilon,1)
 =-2B(t)\log\varepsilon+O(1),\qquad
 B(t)=\frac1{4096t^2}+\frac1{128t(1-t)^2}.
 \label{eq:k3-pair-collision}
\end{equation}
By contrast, the endpoint regions generate nonintegrable logarithmic subdivergences and factorize onto the solved two-point kernel.  First, with $t_2=t$ fixed and $t_1\to0$,
\begin{equation}
 \Gamma_3^\partial(t_1,t,1)
 =\frac{1}{16t_1}\,\Gamma_2^\partial(t)
 +O(|\log t_1|),
 \label{eq:k3-factorization}
\end{equation}
because the two weights of order $t_1^{-1}$ are precisely $1/16$ times the two weights defining $\Gamma_2^\partial(t)$.  The corresponding shape residue is therefore
\begin{equation}
 \frac1{16}\int_0^1\Gamma_2^\partial(t)\dd t=\frac{I_2}{16}.
 \label{eq:k3-first-residue}
\end{equation}
Second, if the first two points approach the endpoint on a common scale,
\begin{equation}
 t_1=s\varepsilon,\qquad t_2=\varepsilon,
 \qquad 0<s<1,\qquad \varepsilon\to0^+,
\end{equation}
then
\begin{equation}
 \Gamma_3^\partial(s\varepsilon,\varepsilon,1)
 =\frac{1}{16\varepsilon^2}\Gamma_2^\partial(s)
 +O(\varepsilon^{-1}|\log\varepsilon|).
 \label{eq:k3-common-scale}
\end{equation}
Since $\dd t_1\dd t_2=\varepsilon\,\dd s\dd\varepsilon$, this region is again logarithmically divergent and carries the same shape residue $I_2/16$.  The two endpoint descriptions overlap in the hierarchy $t_1\ll t_2\ll1$, so they cannot be subtracted independently: their overlap requires the same forest inclusion--exclusion that appears in the all-order prescription below.  Indeed,
\begin{equation}
 \Gamma_2^\partial(t)=\frac1{16}\bigl[1-\log(16t)\bigr]+O(t|\log t|),
 \qquad t\to0^+,
 \label{eq:Gamma2-small-t}
\end{equation}
so inserting the first counterterm into a further endpoint scaling produces products of logarithms.  This gives a concrete mechanism for the $(\log\ell)^2$ and higher nested logarithms that occur in raw higher-support sectors.  The local factorizations above are exact asymptotic statements; we do not assign a primitive three-point coefficient because that would also require the complete overlap subtraction and the finite lattice face anomaly.

These explicit factorizations motivate the following formal all-order construction.  It is a renormalization prescription, not a proved all-order theorem.  A natural operator $R_\partial$ is obtained by summing over compatible forests $\mathfrak F$ of singular subclusters,
\begin{equation}
R_\partial\Gamma_k^\partial
=\sum_{\mathfrak F}(-1)^{|\mathfrak F|}T_{\mathfrak F}\Gamma_k^\partial,
\end{equation}
where $T_A$ replaces a singular subcluster $A$ by its leading factorized local kernel and $T_{\mathfrak F}$ denotes the compatible product of these replacements.  A renormalized primitive boundary coefficient would then be
\begin{equation}
\delta_k^\partial=2\,\FP_{\partial}\int_{0<t_1<\cdots<t_{k-1}<1}
R_\partial\Gamma_k^\partial(t_1,\ldots,t_{k-1},1)\dd^{k-1}t,
\end{equation}
with finite lattice face anomalies included in the definition of $\FP_\partial$.

The raw finite-size support sectors can contain $(\log\ell)^2$ and higher nested logarithms, so their single-log coefficients are not invariant objects before this subtraction.  We therefore quote only the analytically controlled two-vertex boundary coefficient and do not assign a higher-order sequence from unrenormalized fits.

\section{Analytic linked periods}
\label{app:periods}

This appendix develops the explicit continuum periods invoked in Secs.~\ref{sec:periodic-linked} and~\ref{sec:subsystem-higher-boundary}.  It first records the common conformal reduction, then evaluates the first nontrivial periodic and boundary periods, and finally constructs the next higher integrals without claiming their unresolved finite parts.

\paragraph{Conformal reduction of the periodic matching kernels.}
The conformal reduction used to describe the higher coefficients can also be stated cleanly.  On the real line the macroscopic kernel is
\begin{equation}
K^{\mathbb R}_{ij}=\frac{1}{4(x_i-x_j)}.
\end{equation}
A M\"obius transformation maps this kernel to the circle kernel up to one-body factors.  Because every perfect matching contains each marked point exactly once, those factors multiply every matching weight by the same amount and therefore leave the normalized matching probabilities unchanged.  Quotienting an ordered configuration of $2p$ points by $PSL(2,\mathbb R)$, the three-parameter group of real projective M\"obius transformations, leaves $2p-3$ genuine moduli.  Consequently the conformally reduced problems for $c_4,c_6,c_8,c_{10}$ and $c_{12}$ have dimensions $1,3,5,7$ and $9$, respectively.  The reduction of the kernel is exact; identifying its renormalized finite part with the exact lattice constant remains conditional on Conjecture~\ref{conj:FPmatch}.

\subsection{Pfaffian--Hafnian identity for the Cauchy matching kernel}
\label{app:pfhaf-proof}

The determinant reduction in Eq.~\eqref{eq:pfhaf} is sufficiently central that we record a self-contained proof.  It is a special Cauchy identity closely related to the Pfaffian--Hafnian analogues of Borchardt's identity in Ref.~\cite{IshikawaKawamukoOkada2005}.

\begin{lemma}[Rational skew-Cauchy identity]
Let $y_1,\ldots,y_{2p}$ be pairwise distinct and let
\begin{equation}
 A_{ij}=\begin{cases}(y_i-y_j)^{-1},&i\ne j,\\0,&i=j.\end{cases}
\end{equation}
Then
\begin{equation}
 \Hf(A_{ij}^2)=\det A=\Pf(A)^2.
 \label{eq:rational-pfhaf}
\end{equation}
\end{lemma}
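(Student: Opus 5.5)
The second equality $\det A=\Pf(A)^2$ holds for any antisymmetric matrix, so only $\Hf(A_{ij}^2)=\Pf(A)^2$ needs proof. I would expand the square of the Pfaffian as a double sum over pairs of perfect matchings,
\begin{equation}
\Pf(A)^2=\sum_{M,M'\in\cM_{2p}}\epsilon(M)\epsilon(M')\prod_{(ij)\in M}A_{ij}\prod_{(ij)\in M'}A_{ij}.
\end{equation}
The union $M\cup M'$ is a disjoint union of doubled edges, coming from common pairs, and even alternating cycles of length $2r\ge4$. The diagonal terms $M=M'$ give exactly $\sum_M\prod_{(ij)\in M}A_{ij}^2=\Hf(A_{ij}^2)$. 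The claim therefore reduces to showing that all off-diagonal contributions cancel.

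I would group the off-diagonal pairs by their union multigraph. Fix the set of doubled edges and a collection of vertex-disjoint cycles $C_1,\dots,C_s$ on the remaining vertices. Each cycle of length $2r$ splits into $M$ and $M'$ in two ways. The standard sign bookkeeping, the same one used in proving $\Pf^2=\det$ via cycle covers, gives $\epsilon(M)\epsilon(M')=\prod_c(-1)^{r_c-1}$ up to orientation conventions. The total contribution then factorizes over the cycles. It suffices to show that, for every fixed vertex set $\{y_1,\dots,y_{2r}\}$ with $r\ge2$, the signed sum over all Hamiltonian cycles on that set vanishes:
\begin{equation}
\sum_{\text{cycles }\sigma}\ \prod_{a=1}^{2r}\frac{1}{y_{\sigma(a)}-y_{\sigma(a+1)}}=0 .
\end{equation}
Each cycle carries the cyclic product of $A$ along its edges; the factor $2$ from the two splittings and the sign $(-1)^{r-1}$ are common to all cycles on the set. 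This is equivalently the statement that the sum over cyclic permutations of a $2r$-set of $\prod_a(y_{\sigma(a)}-y_{\sigma(a+1)})^{-1}$ vanishes for length at least $3$. It is the classical Cauchy/Borchardt cyclic identity: the "connected" part of $\det[1/(x_i-y_j)]$ at coincident points.

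\textbf{Main obstacle.} Proving this cyclic vanishing identity. I would first try induction via partial fractions in one variable. Regard the total cyclic sum $\Sigma_n(y_1,\dots,y_n)$ over all cyclic orders (up to rotation) as a rational function of $y_1$. It decays as $O(y_1^{-2})$ at infinity, since $y_1$ appears in exactly two factors. Its only possible poles are simple poles at $y_1=y_j$. The residue at $y_1=y_j$ collects the cycles in which $1$ and $j$ are adjacent. Pairing the orientations $\dots a,1,j,b\dots$ and $\dots a,j,1,b\dots$, the residues combine into $\pm\,\Sigma_{n-1}$ on the set with $1$ and $j$ merged. That vanishes by induction, or cancels directly as a difference of equal terms. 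A rational function with no poles that vanishes at infinity is zero. The base case $n=3$ is the elementary identity
\begin{equation}
\frac{1}{(y_1-y_2)(y_2-y_3)(y_3-y_1)}+\frac{1}{(y_1-y_3)(y_3-y_2)(y_2-y_1)}=0 .
\end{equation}
Only even $n$ is needed, but the induction runs through all $n\ge3$. The delicate points are the sign conventions in the Pfaffian cycle expansion and checking that the residue pairing really cancels rather than doubles. If this becomes messy, the fallback is to cite Borchardt's identity $\det[(x_i-y_j)^{-2}]=\det[(x_i-y_j)^{-1}]\,\mathrm{per}[(x_i-y_j)^{-1}]$ in the Pfaffian--Hafnian form of Ref.~\cite{IshikawaKawamukoOkada2005}.
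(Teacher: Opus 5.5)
Your proof is correct, but it takes a different route from the paper's. The paper never expands into cycles. It inducts on $p$, treats $\Hf(A_{ij}^2)$ and $\det A$ as rational functions of $y_{2p}$, and compares their principal parts at $y_{2p}=y_j$:
\begin{itemize}
\item the Hafnian's principal part is $(y_{2p}-y_j)^{-2}\Hf(A^2_{\widehat j\widehat{2p}})$;
\item for the determinant, $\Pf(A)$ is odd under exchanging $y_j$ and $y_{2p}$, which removes the simple pole and leaves the double-pole coefficient $\det A_{\widehat j\widehat{2p}}$.
\end{itemize}
The induction hypothesis identifies the two double-pole coefficients, so the difference has no poles, decays at infinity, and vanishes. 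You apply the same Liouville-type step to a smaller object, the connected cyclic sum $\Sigma_n$, and there no induction is needed. The orders $a\to1\to j\to b$ and $a\to j\to1\to b$ contribute residues at $y_1=y_j$ equal to $+$ and $-$ the same merged-cycle product, so every residue vanishes identically for $n\ge3$. State it this way rather than as ``$\pm\Sigma_{n-1}$, vanishing by induction''; with this wording the base case $n=3$ is superfluous.

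What your route buys is a stronger structural statement: every cycle of length $\ge3$ cancels, which explains why only matching terms survive. The paper's route is shorter and needs no sign bookkeeping. You can also remove your bookkeeping burden entirely by expanding $\det A$ with the Leibniz formula instead of $\Pf(A)^2$ over pairs of matchings. Then $\mathrm{sgn}\,\sigma\prod_iA_{i\sigma(i)}$ factorizes over the cycles of $\sigma$:
\begin{itemize}
\item fixed points give zero because the diagonal vanishes;
\item a transposition gives $-A_{ij}A_{ji}=A_{ij}^2$;
\item the sum over cyclic orders of any longer block is $\pm\Sigma_n=0$.
\end{itemize}
This gives $\det A=\Hf(A_{ij}^2)$ directly.

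If you keep the matching-pair version, correct one sign. Relative to the directed product $\prod_aA_{v_av_{a+1}}$, an alternating $2r$-cycle carries $\epsilon(M)\epsilon(M')=-1$, the sign of a $2r$-cycle, not $(-1)^{r-1}$; the two differ for odd $r\ge3$. The conclusion is unaffected, since only the uniformity of this factor over all cycles on a fixed vertex set matters.
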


\begin{proof}
The equality $\det A=\Pf(A)^2$ is the standard Pfaffian identity for an even antisymmetric matrix, so only the Hafnian equality needs proof.  We proceed by induction on $p$.  For $p=1$ both sides are $(y_1-y_2)^{-2}$.  Regard the two sides for $2p$ variables as rational functions of $y_{2p}$, with the other variables fixed.  The Hafnian has only double poles at $y_{2p}=y_j$, and the coefficient of $(y_{2p}-y_j)^{-2}$ is
\begin{equation}
 \Hf\!\left(A_{\widehat j\widehat{2p}}^{\,2}\right).
\end{equation}
For the determinant, expand the Pfaffian locally near the same collision.  If $y_j=u-\varepsilon/2$ and $y_{2p}=u+\varepsilon/2$, simultaneous interchange of the two variables is the simultaneous interchange of the corresponding rows and columns; hence the Pfaffian is odd in $\varepsilon$.  Its Laurent expansion therefore has the form
\begin{equation}
 \Pf(A)=\frac{\pm\Pf(A_{\widehat j\widehat{2p}})}{\varepsilon}+O(\varepsilon),
\end{equation}
with no constant term.  Squaring shows that $\det A$ has no simple pole and that its double-pole coefficient is $\det A_{\widehat j\widehat{2p}}$.  By the induction hypothesis this equals the Hafnian coefficient above.  Thus the difference $\det A-\Hf(A_{ij}^2)$ has no finite pole as a function of $y_{2p}$.  Both terms are $O(y_{2p}^{-2})$ at infinity, so the difference is an entire rational function vanishing at infinity and is therefore zero.
\end{proof}

To pass to the circle kernel, set $t_i=\tan(\pi x_i)$.  On any tangent chart avoiding $\cos(\pi x_i)=0$,
\begin{equation}
 \frac{\pi}{4\sin\pi(x_i-x_j)}
 =d_i\frac1{t_i-t_j}d_j,
 \qquad
 d_i=\sqrt{\frac\pi4}\sec(\pi x_i).
\end{equation}
If $K=DAD$ with diagonal $D=\diag(d_i)$, then $\Pf(K)=\det(D)\Pf(A)$, while every Hafnian matching of $K_{ij}^2=d_i^2d_j^2A_{ij}^2$ contains each vertex exactly once and hence acquires the same factor $\prod_i d_i^2$.  Equation~\eqref{eq:rational-pfhaf} therefore gives Eq.~\eqref{eq:pfhaf}; the identity is independent of this diagonal gauge choice, and continuity covers configurations crossing tangent-chart boundaries.

The logarithmic matching-weight identity used immediately after Eq.~\eqref{eq:pfhaf} follows without another combinatorial sum.  Since $\log w_M=\sum_{(ij)\in M}\log K_{ij}^2$,
\begin{align}
 \sum_M w_M\log w_M
 &=\sum_{i<j}\log K_{ij}^2\sum_{M\ni(ij)}w_M\\
 &=\sum_{i<j}K_{ij}^2\Hf\!\left((K_{\widehat i\widehat j})_{ab}^2\right)\log K_{ij}^2\\
 &=\sum_{i<j}K_{ij}^2\det K_{\widehat i\widehat j}\log K_{ij}^2.
\end{align}
For invertible $K$, Jacobi's complementary-minor identity applied to the deleted principal pair gives
\begin{equation}
 \det K_{\widehat i\widehat j}=\det K\,(K^{-1}_{ij})^2,
\end{equation}
which proves Eq.~\eqref{eq:Gamma-det}.  Singular configurations form collision sets and are understood by continuity before the linked finite-part subtraction.

\subsection{Four-vertex periodic coefficient \texorpdfstring{$c_4$}{c4}}
\label{app:c4}

This subsection gives the complete $p=2$ continuum calculation.  It is useful to separate two statements which were compressed in the main text.  First, the linked continuum finite part of the four-point matching kernel can be evaluated analytically.  Second, identifying that continuum finite part with the $L^0$ intercept of the exact integer-gap lattice sector is the additional lattice--continuum statement of Conjecture~\ref{conj:FPmatch}.  The calculation below establishes the first statement and makes explicit the normalization and sign which connect the one-modulus integral to $c_4$.

\paragraph{From the cyclic-gap integral to one conformal modulus.}
For $p=2$, the continuum object appearing in Eq.~\eqref{eq:c2p-period} is
\begin{equation}
 c_4^{\rm cont}
 =\frac14\FP_{\rm linked}
 \int_{\Delta_3}\Gamma_4(u_1,u_2,u_3,u_4)\,
 \dd u_1\dd u_2\dd u_3,
 \qquad u_4=1-u_1-u_2-u_3.
 \label{eq:c4-cont-start}
\end{equation}
Before taking the finite part, it is convenient to symmetrize the ordered-gap integral.  For a symmetric four-point integrand, translation invariance gives
\begin{equation}
 \frac14\int_{\Delta_3}F
 =\frac1{4!}\int_{\mathbb T^4}F,
 \label{eq:c4-simplex-sym}
\end{equation}
with the same relation understood coefficientwise after the linked collision subtraction.  Thus
\begin{equation}
 c_4^{\rm cont}
 =\frac1{24}\FP_{\rm linked}
 \int_{\mathbb T^4}\Gamma_4(x_1,x_2,x_3,x_4)\,
 \dd^4x.
 \label{eq:c4-torus-start}
\end{equation}

The reduction to a single modulus can be made explicit rather than inferred from conformal covariance.  Put
\begin{equation}
 y_i=\tan(\pi x_i).
\end{equation}
On a tangent chart the circle kernel factorizes as
\begin{equation}
 K_{ij}=\frac{\pi}{4\sin\pi(x_i-x_j)}
 =d_i\frac1{y_i-y_j}d_j,
 \qquad
 d_i=\sqrt{\frac\pi4}\sqrt{1+y_i^2}.
 \label{eq:c4-tangent-kernel}
\end{equation}
Since
\begin{equation}
 \dd x_i=\frac{\dd y_i}{\pi(1+y_i^2)},
 \qquad
 d_i^2\dd x_i=\frac14\dd y_i,
\end{equation}
every perfect-matching weight, and hence the homogeneous Shannon combination $\Gamma_4$, obeys
\begin{equation}
 \Gamma_4(x)\,\dd^4x
 =\frac1{4^4}\,\widetilde\Gamma_4(y)\,\dd^4y
 =\frac1{256}\,\widetilde\Gamma_4(y)\,\dd^4y,
 \label{eq:c4-tangent-measure}
\end{equation}
where $\widetilde\Gamma_4$ is constructed from the rational kernel
$\widetilde K_{ij}=1/(y_i-y_j)$.  The integrand is symmetric, so the factor $1/24$ in Eq.~\eqref{eq:c4-torus-start} allows us to restrict to the ordered chamber
$y_1<y_2<y_3<y_4$:
\begin{equation}
 c_4^{\rm cont}
 =\frac1{256}\FP_{\rm linked}
 \int_{y_1<y_2<y_3<y_4}
 \widetilde\Gamma_4(y)\,\dd^4y.
 \label{eq:c4-ordered-rational}
\end{equation}

Choose
\begin{equation}
 a=y_1,\qquad b=y_3,\qquad c=y_4,
\end{equation}
and use as the fourth coordinate the cross ratio
\begin{equation}
 t=\frac{(y_2-y_1)(y_3-y_4)}{(y_2-y_4)(y_3-y_1)},
 \qquad 0<t<1.
 \label{eq:c4-cross-ratio}
\end{equation}
Solving for the remaining point gives
\begin{equation}
 y_2=\frac{t c(b-a)-a(b-c)}{t(b-a)-(b-c)}.
 \label{eq:c4-y2-solve}
\end{equation}
Writing
\begin{equation}
 \Delta=t(b-a)-(b-c),
\end{equation}
one finds
\begin{equation}
 \left|\frac{\partial y_2}{\partial t}\right|
 =\frac{(b-a)(c-a)(c-b)}{\Delta^2}.
 \label{eq:c4-cross-jac}
\end{equation}
The three matching weights share the common factor
\begin{equation}
 \mathcal W(a,b,c;t)
 =\frac{\Delta^2}{(b-a)^2(c-a)^2(c-b)^2},
 \label{eq:c4-common-weight}
\end{equation}
and their relative weights are
\begin{equation}
 \frac1{t^2},\qquad 1,\qquad \frac1{(1-t)^2}.
 \label{eq:c4-relative-weights}
\end{equation}
Because $\Gamma_4$ is homogeneous of degree one in the matching weights, the common factor pulls out.  Defining
\begin{equation}
 S(t)=1+\frac1{t^2}+\frac1{(1-t)^2}
 =\frac{(1-t+t^2)^2}{t^2(1-t)^2}
 \label{eq:c4-St}
\end{equation}
and
\begin{equation}
 f(t)=S(t)\log S(t)
 -\frac1{t^2}\log\frac1{t^2}
 -\frac1{(1-t)^2}\log\frac1{(1-t)^2},
 \label{eq:c4-ft}
\end{equation}
Eqs.~\eqref{eq:c4-cross-jac} and~\eqref{eq:c4-common-weight} give the exact factorization
\begin{equation}
 \widetilde\Gamma_4(y)\,\dd^4y
 =f(t)\,\dd t\,
 \frac{\dd a\,\dd b\,\dd c}
 {(b-a)(c-a)(c-b)}.
 \label{eq:c4-orbit-factorization}
\end{equation}
The one-dimensional factor is the genuine four-point shape dependence.  The remaining three variables parametrize the projective orbit.  Equation~\eqref{eq:c4-orbit-factorization} is the precise form of the conformal reduction used in the main text.

\paragraph{Linked finite part of the projective orbit.}
The orbit integral in Eq.~\eqref{eq:c4-orbit-factorization} is not an ordinary convergent volume.  Its boundary strata are exactly the collision regions whose subtraction is part of $\FP_{\rm linked}$.  At four points there is only one primitive logarithmic residue: an odd cluster of three points, with the fourth point left outside the cluster.  There are four cyclic choices of this three-point cluster.  The overlap in which the remaining point also joins the cluster is the common all-four collision and is removed by the forest inclusion--exclusion of Appendix~\ref{app:finite-part}.  Thus the finite orbit normalization is fixed entirely by the four equivalent three-point faces.

The scale anomaly associated with one such face is the same two-point Cauchy anomaly that appears in the periodic edge calculation.  It is convenient to evaluate it with the analytic regulator
\begin{equation}
 J(q)=\int_0^1\dd x\int_0^1\dd y\,
 \left[\frac{\pi^2}{16\sin^2\!\pi(x-y)}\right]^q
 =\left(\frac\pi4\right)^{2q}
 \frac{\Gamma(\frac12-q)}{\sqrt\pi\,\Gamma(1-q)},
 \qquad \Re q<\frac12,
 \label{eq:c4-Jq}
\end{equation}
followed by analytic continuation.  Near $q=1$,
\begin{equation}
 J(q)=\frac{\pi^2}{8}(q-1)+O((q-1)^2),
 \qquad
 J(1)=0,
 \qquad
 J'(1)=\frac{\pi^2}{8}.
 \label{eq:c4-Jexp}
\end{equation}
The vanishing of $J(1)$ is important: after the lower-support power term is removed, the scale-independent part of a collision face does not survive the linked projection.  The logarithmic scale term is generated by differentiating the regulated Cauchy weight with respect to its power, and the subtraction sign in the linked counterterm therefore leaves the finite face anomaly
\begin{equation}
 -J'(1)
 \label{eq:c4-one-face-anomaly}
\end{equation}
for each cyclic three-point face.  At $p=2$ there are no further proper linked supports capable of carrying an independent finite four-point contribution.  The four faces therefore give the projective-orbit finite part
\begin{equation}
 \mathcal V_{4}^{\rm linked}
 \equiv
 \FP_{\rm linked}
 \int_{a<b<c}\frac{\dd a\,\dd b\,\dd c}
 {(b-a)(c-a)(c-b)}
 =-4J'(1)
 =-\frac{\pi^2}{2}.
 \label{eq:c4-orbit-volume}
\end{equation}
This equation is the normalization that was implicit in the shorter presentation.  It is a property of the $p=2$ continuum subtraction scheme itself; Conjecture~\ref{conj:FPmatch} is not used in obtaining Eq.~\eqref{eq:c4-orbit-volume}.  What remains conjectural is the subsequent identification of this continuum finite part with the intercept of the exact integer-gap lattice sector.

Combining Eqs.~\eqref{eq:c4-ordered-rational}, \eqref{eq:c4-orbit-factorization} and~\eqref{eq:c4-orbit-volume} gives
\begin{equation}
 c_4^{\rm cont}
 =\frac1{256}\left(-\frac{\pi^2}{2}\right)
 \int_0^1 f(t)\,\dd t.
 \label{eq:c4-before-shape-integral}
\end{equation}
This is the missing bridge between the conformally reduced shape integral and the normalization of the four-vertex continuum coefficient.

\paragraph{Evaluation of the shape period.}
It remains to evaluate the elementary integral in Eq.~\eqref{eq:c4-before-shape-integral}.  Put
\begin{equation}
 q(t)=1-t+t^2.
\end{equation}
For $0<t<1$ one verifies the exact total-derivative identity
\begin{equation}
 f(t)=\frac{9}{q(t)}+\frac{\mathrm d}{\mathrm dt}G(t),
 \label{eq:f4-total-derivative}
\end{equation}
where one convenient primitive is
\begin{align}
 G(t)={}&2\log t-\log q(t)
 +\frac{t^3-t^2-2t+1}{t(t-1)}\log S(t)\nonumber\\
 &+\frac{2\log(1-t)}{1-t}-\frac{2\log t}{t}.
 \label{eq:G4-primitive}
\end{align}
Although the individual terms in $G$ are singular, the Shannon combination gives
\begin{equation}
 \lim_{t\to0^+}G(t)=\lim_{t\to1^-}G(t)=0.
\end{equation}
Consequently the total derivative contributes no endpoint term and
\begin{align}
 \int_0^1 f(t)\,\dd t
 &=9\int_0^1\frac{\dd t}{t^2-t+1}\nonumber\\
 &=9\,\frac{2\pi}{3\sqrt3}
 =2\pi\sqrt3.
 \label{eq:f4-integral}
\end{align}
The quadratic $1-t+t^2$ has primitive sixth roots of unity as its zeros, which makes the origin of the algebraic factor $\sqrt3$ transparent.

It is also useful to isolate the three-point shape residue
\begin{equation}
 R_3\equiv\frac1{16}\int_0^1 f(t)\,\dd t
 =\frac{\pi\sqrt3}{8}.
 \label{eq:R3-c4}
\end{equation}
Equations~\eqref{eq:c4-orbit-volume} and~\eqref{eq:R3-c4} show explicitly that
\begin{equation}
 c_4^{\rm cont}
 =\frac{\mathcal V_4^{\rm linked}}{16}R_3
 =-\frac{\pi^2}{32}R_3
 =-\frac{\pi^3\sqrt3}{256}.
 \label{eq:c4-cont-final}
\end{equation}
Thus the minus sign and the factor $\pi^2/32$ have distinct origins: $R_3$ is the positive Shannon ambiguity of the three matching channels at fixed cross ratio, while $-\pi^2/32$ is the finite projective-orbit anomaly produced by the linked subtraction of the four equivalent three-point collision faces.

Finally, Conjecture~\ref{conj:FPmatch} identifies the continuum value in Eq.~\eqref{eq:c4-cont-final} with the $L^0$ coefficient of the exact lattice sector,
\begin{equation}
 c_4\stackrel{\text{Conj.~\ref{conj:FPmatch}}}{=}
 c_4^{\rm cont}
 =-\frac{\pi^3\sqrt3}{256}
 =-0.2097829943899115\ldots .
 \label{eq:c4-lattice-final}
\end{equation}
Independent extrapolation of the exact four-vertex lattice sector agrees with this value to the available numerical precision.  The analytic continuum calculation and the numerical lattice check are therefore separate pieces of evidence; the remaining unproved statement is precisely the lattice--continuum finite-part matching, not the normalization of the four-point conformal reduction.

\subsection{Exact triangle entropy, matched boundary scaling and the first augmented period}
\label{app:triangle-I2}

This subsection supplies the exact three-edge conditional entropy, the two-physical-vertex ghost period used in Sec.~\ref{sec:subsystem-B2}, and the matching physical-line asymptotic used to obtain the complete coefficient $\kappa_3^{+}$ in Sec.~\ref{sec:subsystem-kappa3}.  For a simple cycle, the conditional edge configurations at fixed boundary come in complementary pairs $F$ and $F\triangle C$.  For a triangle with edge probabilities $p_1,p_2,p_3$, write $q_a=1-p_a$ and order the edges as $(1,2,3)$.  The eight configurations then split into the four cycle-coset pairs
\begin{equation}
\{000,111\},\qquad
\{100,011\},\qquad
\{010,101\},\qquad
\{001,110\}.
\label{eq:triangle-four-pairs}
\end{equation}
Their Bernoulli weights are, explicitly,
\begin{align}
W_{000}&=q_1q_2q_3, & W_{111}&=p_1p_2p_3,\nonumber\\
W_{100}&=p_1q_2q_3, & W_{011}&=q_1p_2p_3,\nonumber\\
W_{010}&=q_1p_2q_3, & W_{101}&=p_1q_2p_3,\nonumber\\
W_{001}&=q_1q_2p_3, & W_{110}&=p_1p_2q_3.
\label{eq:triangle-eight-weights}
\end{align}
It is convenient to introduce
\begin{equation}
\Phi(a,b)=(a+b)h\!\left(\frac{a}{a+b}\right)
=(a+b)\log(a+b)-a\log a-b\log b.
\label{eq:triangle-Phi}
\end{equation}
The exact conditional entropy is therefore
\begin{align}
C_\triangle={}&\Phi(W_{000},W_{111})
+\Phi(W_{100},W_{011})\nonumber\\
&+\Phi(W_{010},W_{101})
+\Phi(W_{001},W_{110}).
\label{eq:triangle-exact-expanded}
\end{align}
Equivalently,
\begin{equation}
C_\triangle=\sum_{\{F,F\triangle C\}}
(W_F+W_{F\triangle C})\,
 h\!\left(\frac{W_F}{W_F+W_{F\triangle C}}\right).
\label{eq:triangle-exact}
\end{equation}
This explicit four-fiber form is used both for the exact finite-lattice $B_{2,\ell}$ and for its two scaling regions below.

\paragraph{Macroscopic boundary region.}
Take $i=tR$, $j=R$ with $0<t<1$ fixed and $1\ll R\ll\ell$.  From Eq.~\eqref{eq:lambda-app-twoend},
\begin{equation}
\lambda_{i,\ell}=\frac1{16tR}+O(R^{-2})+O(\ell^{-1}),
\qquad
\lambda_{j,\ell}=\frac1{16R}+O(R^{-2})+O(\ell^{-1}),
\end{equation}
while
\begin{equation}
\beta_{j-i}=\frac1{16(1-t)^2R^2}+O(R^{-4}).
\end{equation}
For the physical boundary in which $i$ and $j$ are odd, the two leading competing realizations are therefore the two ghost edges, with product weight $R^{-2}/(256t)$, and the single physical edge, with weight $R^{-2}/[16(1-t)^2]$.  All absent-edge factors are $1+o(1)$ at this order.  Hence
\begin{equation}
C_\triangle(tR,R)
=\frac1{R^2}\Gamma_2^\partial(t)+o(R^{-2}),
\end{equation}
where
\begin{equation}
u(t)=\frac1{256t},\qquad v(t)=\frac1{16(1-t)^2},
\end{equation}
and
\begin{equation}
\Gamma_2^\partial(t)=(u+v)\log(u+v)-u\log u-v\log v.
\end{equation}
The common-scale sum must be taken away from the nonuniform endpoint layers.  For fixed $\epsilon>0$ the mesh-$1/R$ Riemann sum is
\begin{equation}
\sum_{\epsilon R\le i\le(1-\epsilon)R}C_\triangle(i,R)
=\frac1R\int_\epsilon^{1-\epsilon}\Gamma_2^\partial(t)\dd t
+o_\epsilon(R^{-1}).
\end{equation}
Taking $R\to\infty$ first and then $\epsilon\to0$ gives the common-scale coefficient $I_2/R$, with
\begin{equation}
I_2=\int_0^1\Gamma_2^\partial(t)\dd t.
\end{equation}
The microscopic near-diagonal strip is not contained in this ordered limit and is matched separately below.  Thus summing the outer scale gives $I_2\log\ell$ at each endpoint.

To evaluate $I_2$ explicitly, it is useful to keep one auxiliary parameter until the last step.  Put $a=1/16$ and $k=1/a=16$.  Then
\begin{equation}
u=\frac{a^2}{t},\qquad
v=\frac{a^2k}{(1-t)^2},
\end{equation}
so homogeneity of the Shannon combination gives
\begin{equation}
I_2=a^2J(k),
\label{eq:I2-Jk-def}
\end{equation}
with
\begin{align}
J(k)=\int_0^1\!\dd t\,\Bigg[&
\left(\frac1t+\frac{k}{(1-t)^2}\right)
\log\left(\frac1t+\frac{k}{(1-t)^2}\right)\nonumber\\
&-\frac1t\log\frac1t
-\frac{k}{(1-t)^2}\log\frac{k}{(1-t)^2}\Bigg].
\label{eq:Jk-boundary}
\end{align}
The common factor $a^2$ drops out of the logarithms because the three coefficients in the Shannon combination sum to zero.  Differentiating Eq.~\eqref{eq:Jk-boundary} with respect to $k$ removes the remaining logarithmic cancellation problem and gives the convergent integral
\begin{equation}
J'(k)=\int_0^1\frac{\dd t}{(1-t)^2}
\log\left[1+\frac{(1-t)^2}{kt}\right].
\label{eq:Jk-prime-t}
\end{equation}
Now set $x=(1-t)/t$.  After one integration by parts,
\begin{align}
J'(k)
&=\int_0^\infty\frac{\dd x}{x^2}
\log\frac{x^2+kx+k}{k(x+1)}\nonumber\\
&=\int_0^\infty
\frac{x+2}{(x+1)(x^2+kx+k)}\dd x.
\label{eq:Jk-prime-rational}
\end{align}
This last integral is elementary.  For $k\ge4$ write
\begin{equation}
k=4\cosh^2\eta,
\end{equation}
for which
\begin{equation}
x^2+kx+k=
\bigl(x+2e^{\eta}\cosh\eta\bigr)
\bigl(x+2e^{-\eta}\cosh\eta\bigr).
\end{equation}
Partial fractions, together with
\begin{equation}
\int_0^\infty\left(\frac1{x+A}-\frac1{x+B}\right)\dd x
=\log\frac{B}{A},
\end{equation}
then yield
\begin{equation}
J'(k)=\log(2\cosh\eta)-\eta\tanh\eta.
\label{eq:Jk-prime-hyperbolic}
\end{equation}
The integration constant can be fixed without an asymptotic matching.  On the branch $0<k<4$, write $k=4\cos^2\theta$; analytic continuation of Eq.~\eqref{eq:Jk-prime-hyperbolic} gives
\begin{equation}
J'(k)=\log(2\cos\theta)+\theta\tan\theta.
\end{equation}
Since $J(0)=0$, integration from $k=0$ to $4$ gives
\begin{align}
J(4)
&=4\int_0^{\pi/2}\sin2\theta
\bigl[\log(2\cos\theta)+\theta\tan\theta\bigr]\dd\theta\nonumber\\
&=4\left(\log2-\frac12\right)
+8\left(\frac{\pi^2}{16}+\frac14\right)
=\frac{\pi^2}{2}+4\log2.
\label{eq:J4-boundary}
\end{align}
Integrating Eq.~\eqref{eq:Jk-prime-hyperbolic} from $\eta=0$ therefore gives
\begin{equation}
J(\eta)=\frac{\pi^2}{2}
+4\cosh^2\eta\,\log(2\cosh\eta)
-2\eta\sinh2\eta+2\eta^2.
\label{eq:Jeta-boundary}
\end{equation}
At the Ising value $k=16$ one has $\cosh\eta=2$, hence
\begin{equation}
\eta=\arcosh2=\log(2+\sqrt3)\equiv L_3,
\qquad
\sinh2\eta=4\sqrt3.
\end{equation}
Using $a^2=1/256$ in Eq.~\eqref{eq:I2-Jk-def} finally gives
\begin{equation}
I_2=\frac1{256}\left[
\frac{\pi^2}{2}+2L_3^2+32\log2-8\sqrt3L_3
\right],
\qquad L_3=\log(2+\sqrt3).
\label{eq:I2-app-derived}
\end{equation}
The factorization
\begin{equation}
t^2+14t+1=[t+(2+\sqrt3)^2][t+(2-\sqrt3)^2]
\end{equation}
is the same hyperbolic parametrization written directly in the original $t$ variable and makes the origin of $L_3$ transparent.

\paragraph{Microscopic fixed-separation boundary layer.}
Now hold $r=j-i$ fixed and translate the pair a distance $R$ from the same endpoint.  Put
\begin{equation}
c=\frac1{16R},\qquad p=\beta_r.
\end{equation}
Then the two ghost probabilities are $c+O(R^{-2})$ while the physical edge probability is $p$.  In the edge ordering $((0,i),(0,j),(i,j))$, Eq.~\eqref{eq:triangle-exact} shows that the complementary pairs $\{100,011\}$ and $\{010,101\}$ have total weights $c+O(c^2)$ and conditional probabilities $1-p+O(c)$ and $p+O(c)$, respectively.  They therefore contribute $c h(p)+O(c^2|\log c|)$ each.  The pairs $\{000,111\}$ and $\{001,110\}$ have ambiguity only at order $c^2|\log c|$.  Consequently
\begin{equation}
C_\triangle(\lambda_{R,\ell},\lambda_{R+r,\ell},\beta_r)
=2c\,h(\beta_r)+O(c^2|\log c|)
=\frac{h(\beta_r)}{8R}+O(R^{-2}\log R),
\end{equation}
which proves Eq.~\eqref{eq:triangle-micro-tail}.  Summing $R^{-1}$ and then $r$ yields $s_{\edge}/8$ per endpoint and therefore $s_{\edge}/4$ in total.

The macroscopic integral and the fixed-$r$ term are the bulk and boundary-layer pieces of the same lattice sum rather than two independent counts of the same configurations.  One may make the matching explicit by introducing an intermediate separation cutoff $M$: treat $r\le M$ with the fixed-$r$ expansion and $r>M$ with the Riemann sum.  Because
\begin{equation}
\sum_{r>M}h(\beta_r)=O\!\left(\frac{\log M}{M}\right)
\end{equation}
and the endpoint singularity of $\Gamma_2^\partial$ is integrable, the coefficient of $\log\ell$ is independent of the order in which the limits $\ell\to\infty$ and then $M\to\infty$ are taken.  This gives Eq.~\eqref{eq:delta2-structural}.

\paragraph{Matching physical-line triangle.}
For the three-physical-vertex sector, translation invariance gives Eqs.~\eqref{eq:A3-span}--\eqref{eq:Phi3-span}.  Hold $r$ fixed and send the outer span $R\to\infty$.  With $p=\beta_r$ and $c=1/(16R^2)$, the two long physical edges have probabilities $c+O(R^{-3})$.  Applying the same four-fiber formula \eqref{eq:triangle-exact}, the two complementary pairs with total weight $c+o(c)$ contribute $c h(p)$ each, while the other pairs are $O(c^2|\log c|)$.  Hence
\begin{equation}
C_\triangle(\beta_r,\beta_{R-r},\beta_R)
=\frac{h(\beta_r)}{8R^2}+O(R^{-3}\log R),
\end{equation}
which proves Eq.~\eqref{eq:line-triangle-tail}.  The opposite collision face $R-r=O(1)$ gives the same contribution.  The common-scale interior is $o(R^{-2})$ after summing its $O(R)$ positions, so the same intermediate-cutoff matching gives Eq.~\eqref{eq:Phi3-tail}.  The span sum then has logarithmic coefficient $-s_{\edge}/4$, proving Eq.~\eqref{eq:alpha3-line} and the exact cancellation in Eq.~\eqref{eq:kappa3-main}.

\paragraph{General Cauchy family.}
The same calculation extends directly to the auxiliary Cauchy deformation.  Put $a=\nu^2/4$, $u=a^2/t$, $v=a/(1-t)^2$ and $k=1/a=4/\nu^2$.  With $I_2(\nu)=a^2J(k)$, Eqs.~\eqref{eq:Jk-prime-rational}--\eqref{eq:Jk-prime-hyperbolic} give
\begin{equation}
J'(k)=\int_0^{\infty}\frac{x+2}{(x+1)(x^2+kx+k)}\dd x
=\log(2\cosh\eta)-\eta\tanh\eta,
\qquad k=4\cosh^2\eta.
\end{equation}
Using the normalization fixed in Eq.~\eqref{eq:J4-boundary} therefore yields
\begin{equation}
J(\eta)=\frac{\pi^2}{2}+4\cosh^2\eta\log(2\cosh\eta)-2\eta\sinh2\eta+2\eta^2,
\end{equation}
which is equivalent to Eq.~\eqref{eq:I2nu}.  The fixed-separation matching deforms at the same time.  Since $\beta_R(\nu)=\nu^2/(4R^2)+O(R^{-4})$, the two collision ends of the physical-line triangle give $\alpha_3^{\linesys}(\nu)=-\nu^2s_{\edge}(\nu)$.  Since $\lambda_R^{(\nu)}=\nu^2/(4R)+O(R^{-2})$, the ghost boundary layer contributes $+\nu^2s_{\edge}(\nu)$ to $\delta_2^{\partial}(\nu)$.  The two terms cancel in the complete augmented sector, proving Eq.~\eqref{eq:kappa3nu}.  Thus the hyperbolic coordinate used in the deformed problem is not an additional ansatz: it evaluates the first complete augmented boundary period through $\kappa_3^{+}(\nu)=2I_2(\nu)$.

\subsection{Higher linked periods: explicit next-order constructions}
\label{app:higher-periods}

The preceding two subsections solve the first nontrivial periodic and boundary periods completely.  Here we spell out how the same machinery produces the \emph{next} analytic integrals.  Without attempting to evaluate their finite parts, the construction reduces the periodic six-point sector to an explicit three-modulus rational/logarithmic integral and the boundary three-point sector to an explicit two-variable regulated integral.  In both examples the remaining difficulty is the linked subtraction on intersecting singular faces, not the construction of the integrand itself.  The resulting representations provide starting points for symbolic integration, sector decomposition, differential-equation methods, or high-precision numerical quadrature.

The singular domains and the recursive appearance of lower-support residues are summarized in \Cref{fig:higher-period-renormalization}.

\begin{figure}[t]
\centering
\includegraphics[width=\textwidth]{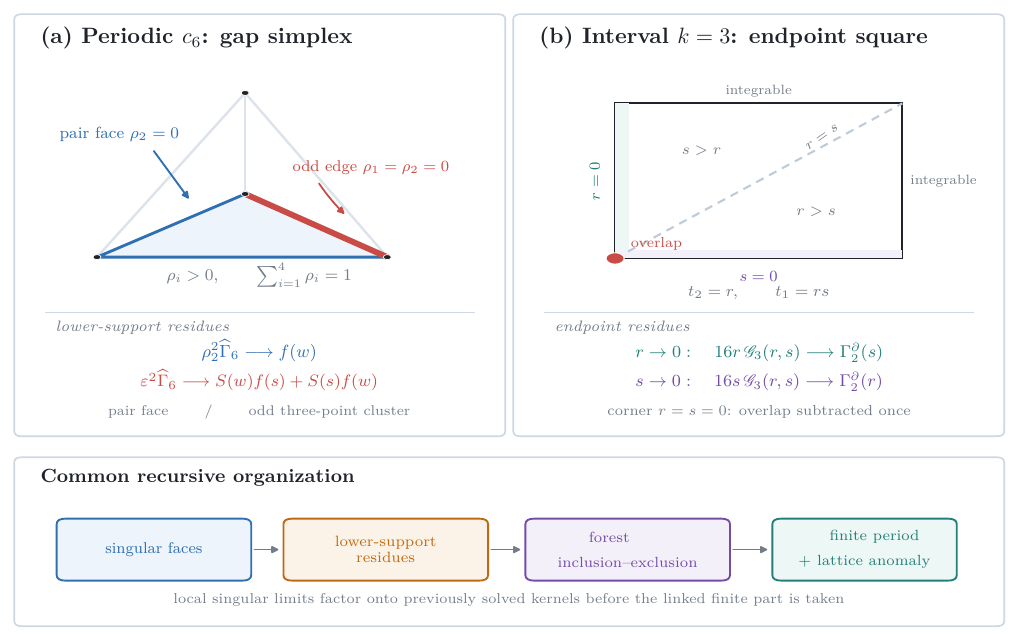}
\caption{Singular geometry of the first higher linked periods.  (a) For the periodic six-vertex problem, the positive gap variables $\rho_i$ form a three-dimensional simplex.  The pair-collision face $\rho_2=0$ carries the already solved four-point residue $f$, while the codimension-two edge $\rho_1=\rho_2=0$ gives the odd three-point-cluster factorization involving $S$ and $f$.  (b) For the interval $k=3$ problem, $t_2=r$ and $t_1=rs$ map the ordered shape domain to the unit square.  The faces $r=0$ and $s=0$ carry the lower-order kernel $\Gamma_2^\partial$; the top and right faces are locally integrable, and the corner $r=s=0$ is an overlap removed once by sector/forest inclusion--exclusion.  The lower strip summarizes the common recursive organization: singular faces factor onto lower-support kernels before the linked finite part is taken and the finite lattice face anomaly is restored.}
\label{fig:higher-period-renormalization}
\end{figure}

\subsubsection{Periodic example: the six-vertex period}
\label{app:c6-setup}

For $2p=6$, symmetrization of Eq.~\eqref{eq:c2p-period} gives
\begin{equation}
 c_6^{\rm cont}
 =\frac1{6!}\,\FP_{\rm linked}
 \int_{\mathbb T^6}\Gamma_6(x_1,\ldots,x_6)\,\dd^6x.
 \label{eq:c6-torus}
\end{equation}
The tangent transformation $y_i=\tan(\pi x_i)$ used for $c_4$ again removes all one-body factors.  Since every perfect matching contains each of the six vertices once,
\begin{equation}
 \Gamma_6(x)\,\dd^6x
 =\frac1{4^6}\,\widetilde\Gamma_6(y)\,\dd^6y,
 \qquad
 \widetilde K_{ij}=\frac1{y_i-y_j}.
 \label{eq:c6-tangent-measure}
\end{equation}
The integrand is symmetric, so the factor $1/6!$ restricts the rational integral to the ordered chamber:
\begin{equation}
 c_6^{\rm cont}
 =\frac1{4096}\,\FP_{\rm linked}
 \int_{y_1<\cdots<y_6}\widetilde\Gamma_6(y)\,\dd^6y.
 \label{eq:c6-ordered}
\end{equation}

A convenient projective chart makes the three genuine shape variables explicit.  Choose
\begin{equation}
 y_1=a,\qquad y_5=b,\qquad y_6=c,
 \qquad a<b<c,
\end{equation}
and write the remaining points as
\begin{equation}
 y_2=g(u),\qquad y_3=g(v),\qquad y_4=g(w),
 \qquad 0<u<v<w<1,
 \label{eq:c6-projective-coordinates}
\end{equation}
where the unique increasing M\"obius map sending $(0,1,\infty)$ to $(a,b,c)$ may be written
\begin{equation}
 g(z)=\frac{cz+ad}{z+d},
 \qquad
 d=\frac{c-b}{b-a}.
 \label{eq:c6-mobius-map}
\end{equation}
It obeys
\begin{equation}
 g'(z)=\frac{d(c-a)}{(z+d)^2}.
\end{equation}
Let
\begin{equation}
 z_1=0,\qquad z_2=u,\qquad z_3=v,\qquad
 z_4=w,\qquad z_5=1,\qquad z_6=\infty .
\end{equation}
For finite $z_i,z_j$,
\begin{equation}
 g(z_i)-g(z_j)
 =
 \frac{d(c-a)(z_i-z_j)}
 {(z_i+d)(z_j+d)},
 \qquad
 c-g(z_i)=\frac{d(c-a)}{z_i+d}.
 \label{eq:c6-difference-transform}
\end{equation}
Every six-point matching contains one edge ending at $z_6=\infty$ and each of the other five marked points exactly once.  Hence all fifteen matching weights acquire the same M\"obius factor.  Meanwhile the full coordinate Jacobian is triangular after $(y_1,y_5,y_6)$ are ordered first and equals
\begin{equation}
 \left|\frac{\partial(y_1,\ldots,y_6)}
 {\partial(a,b,c,u,v,w)}\right|
 =
 g'(u)g'(v)g'(w).
 \label{eq:c6-projective-jacobian}
\end{equation}
For completeness, the cancellation can be displayed explicitly.  Writing
\begin{equation}
 A=d(c-a),
\end{equation}
the common factor carried by any matching before the coordinate Jacobian is
\begin{equation}
 \mathcal W_{\rm Mob}
 =\frac{d^2(1+d)^2(u+d)^2(v+d)^2(w+d)^2}{A^6},
\end{equation}
where the factors $d^2$ and $(1+d)^2$ come from the marked points $z_1=0$ and $z_5=1$, while the remaining three factors come from $u,v,w$.  Since
\begin{equation}
 g'(u)g'(v)g'(w)
 =\frac{A^3}{(u+d)^2(v+d)^2(w+d)^2},
\end{equation}
one obtains
\begin{equation}
 \mathcal W_{\rm Mob}\,g'(u)g'(v)g'(w)
 =\frac{d^2(1+d)^2}{A^3}
 =\frac1{(b-a)(c-b)(c-a)},
 \label{eq:c6-mobius-factor-cancel}
\end{equation}
where in the last equality $d=(c-b)/(b-a)$ was used.  This gives the exact projective factorization
\begin{equation}
 \widetilde\Gamma_6(y)\,\dd^6y
 =
 \frac{\dd a\,\dd b\,\dd c}
 {(b-a)(c-b)(c-a)}
 \,
 \widehat\Gamma_6(u,v,w)\,
 \dd u\,\dd v\,\dd w .
 \label{eq:c6-projective-factorization}
\end{equation}
Thus all nontrivial six-point shape dependence is three-dimensional.  The orbit factor is the same projective measure that appeared in the four-point calculation; the new information is entirely in $\widehat\Gamma_6$.

The reduced kernel can be written without any hidden combinatorics.  Define
\begin{equation}
 r_{ij}=\frac1{(z_i-z_j)^2},
 \qquad 1\le i<j\le5,
 \label{eq:c6-rij}
\end{equation}
so explicitly
\begin{align}
&r_{12}=u^{-2},\quad r_{13}=v^{-2},\quad r_{14}=w^{-2},\quad r_{15}=1,\nonumber\\
&r_{23}=(v-u)^{-2},\quad r_{24}=(w-u)^{-2},\quad r_{25}=(1-u)^{-2},\nonumber\\
&r_{34}=(w-v)^{-2},\quad r_{35}=(1-v)^{-2},\quad r_{45}=(1-w)^{-2}.
\label{eq:c6-rij-explicit}
\end{align}
After the common edge-to-infinity factor is removed, the fifteen perfect-matching weights are
\begin{equation}
\begin{array}{lll}
 r_{23}r_{45},&r_{24}r_{35},&r_{25}r_{34},\\
 r_{13}r_{45},&r_{14}r_{35},&r_{15}r_{34},\\
 r_{12}r_{45},&r_{14}r_{25},&r_{15}r_{24},\\
 r_{12}r_{35},&r_{13}r_{25},&r_{15}r_{23},\\
 r_{12}r_{34},&r_{13}r_{24},&r_{14}r_{23}.
\end{array}
\label{eq:c6-fifteen-weights}
\end{equation}
The five rows correspond to the five possible partners of the point at infinity.  With
\begin{equation}
 \widehat Z_6=\sum_{\alpha=1}^{15}\widehat w_\alpha,
\end{equation}
the explicit three-modulus Shannon kernel is
\begin{equation}
 \widehat\Gamma_6(u,v,w)
 =
 \widehat Z_6\log\widehat Z_6
 -\sum_{\alpha=1}^{15}\widehat w_\alpha\log\widehat w_\alpha .
 \label{eq:c6-shape-kernel}
\end{equation}
Equations~\eqref{eq:c6-projective-factorization}--\eqref{eq:c6-shape-kernel} turn Eq.~\eqref{eq:c6-ordered} into the fully explicit analytic representation
\begin{equation}
 c_6^{\rm cont}
 =
 \frac1{4096}\,
 \FP_{\rm linked}
 \int_{a<b<c}\frac{\dd a\,\dd b\,\dd c}
 {(b-a)(c-b)(c-a)}
 \int_{0<u<v<w<1}
 \widehat\Gamma_6(u,v,w)\,
 \dd u\,\dd v\,\dd w .
 \label{eq:c6-explicit-integral}
\end{equation}
The notation in Eq.~\eqref{eq:c6-explicit-integral} is deliberately a \emph{single} linked finite part.  Although the ordinary integrand factorizes into an orbit measure and a shape kernel, the collision forest contains overlaps between orbit and shape boundaries.  One should therefore not replace $\FP_{\rm linked}$ by the product of two independently chosen finite parts without proving their compatibility.

For analytic work it is convenient to replace $(u,v,w)$ by the four positive gaps
\begin{equation}
 \rho_1=u,\qquad
 \rho_2=v-u,\qquad
 \rho_3=w-v,\qquad
 \rho_4=1-w,
 \qquad
 \sum_{i=1}^4\rho_i=1.
 \label{eq:c6-shape-gaps}
\end{equation}
All adjacent shape collisions are then coordinate faces of the simplex.  To expose their residues it is useful to introduce the \emph{local shape regulator}
\begin{equation}
 \mathcal P_6(\boldsymbol\alpha)
 =
 \int_{\substack{\rho_i>0\\ \sum_i\rho_i=1}}
 \left(\prod_{i=1}^4\rho_i^{\alpha_i}\right)
 \widehat\Gamma_6(\rho)\,
 \dd\rho_1\,\dd\rho_2\,\dd\rho_3,
 \label{eq:c6-analytic-regulator}
\end{equation}
initially for real parts of the $\alpha_i$ large enough to make every shape face integrable.  Equation~\eqref{eq:c6-analytic-regulator} is \emph{not} a definition of the complete linked finite part in Eq.~\eqref{eq:c6-explicit-integral}: it regulates the shape simplex at fixed projective chart and is used only to identify local face poles and their residues.  In particular it does not separately renormalize the projective-orbit measure, nor does it by itself resolve collision forests whose representatives mix orbit and shape boundaries.  Sector decomposition of simultaneous shape limits nevertheless produces a meromorphic continuation toward $\boldsymbol\alpha=0$, and the resulting poles supply precisely the lower-support residue data that enter the single global $\FP_{\rm linked}$.  Those residues are fixed by the exact factorization of $\widehat\Gamma_6$.

For example, let $v=u+\varepsilon$ with $u$ and $w$ fixed away from all other faces.  The three dominant matchings are precisely those containing the edge $(2,3)$, and their remaining four-point weights are
\begin{equation}
 \frac1{(1-w)^2},\qquad 1,\qquad \frac1{w^2}.
\end{equation}
Using the four-point function $f$ defined in Eq.~\eqref{eq:c4-ft}, one obtains
\begin{equation}
 \widehat\Gamma_6(u,u+\varepsilon,w)
 =
 \frac{f(w)}{\varepsilon^2}
 +O(|\log\varepsilon|).
 \label{eq:c6-pair-factorization}
\end{equation}
Thus the pair-face counterterm is explicitly the already solved four-point kernel divided by the microscopic pair scale squared.  Equation~\eqref{eq:c6-pair-factorization} is the six-point version of the lower-support power subtraction in Appendix~\ref{app:finite-part}.

The first genuinely nested example is an odd three-point cluster.  Put
\begin{equation}
 u=\varepsilon s,\qquad v=\varepsilon,
 \qquad 0<s<1,
\end{equation}
with $w$ fixed.  At order $\varepsilon^{-2}$ the nine dominant matchings factor into the three near-perfect internal pairings of $(0,u,v)$ and the three perfect matchings of the quotient four-point configuration $(0,w,1,\infty)$.  If $S$ and $f$ are the functions in Eqs.~\eqref{eq:c4-St} and~\eqref{eq:c4-ft}, the relevant factorization follows from an elementary identity.  For positive weights $a_i,b_j$, let $A=\sum_i a_i$ and $B=\sum_jb_j$.  Then
\begin{align}
 AB\log(AB)-\sum_{i,j}a_i b_j\log(a_i b_j)
 ={}&B\left(A\log A-\sum_i a_i\log a_i\right)\nonumber\\
 &+A\left(B\log B-\sum_j b_j\log b_j\right).
 \label{eq:product-entropy-identity}
\end{align}
Applying this identity to the internal and quotient matching weights gives
\begin{equation}
 \widehat\Gamma_6(\varepsilon s,\varepsilon,w)
 =
 \frac1{\varepsilon^2}
 \left[S(w)f(s)+S(s)f(w)\right]
 +O\!\left(\varepsilon^{-1}|\log\varepsilon|\right).
 \label{eq:c6-triple-factorization}
\end{equation}
The two terms have a transparent interpretation: one is the entropy of the internal odd-cluster channel weighted by the quotient partition function, and the other is the quotient four-point entropy weighted by the internal near-perfect-matching partition function.  The remaining pair, triple and nested/disjoint faces are generated by projective images and compatible combinations of the same local factorizations.

Equations~\eqref{eq:c6-explicit-integral}--\eqref{eq:c6-triple-factorization} provide an executable analytic route to $c_6$: enumerate the singular sectors, replace each by its lower-support factorization, combine compatible counterterms by forest inclusion--exclusion, analytically continue the regulated remainder, and finally restore the finite lattice face anomalies.  What is not supplied here is the completed three-modulus finite part itself.  Identifying that continuum result with the lattice intercept remains precisely the $p=3$ instance of Conjecture~\ref{conj:FPmatch}.

\subsubsection{Boundary example: the three-physical-vertex period}
\label{app:k3-setup}

The next interval kernel is already given in Eqs.~\eqref{eq:Gamma3-boundary-main} and~\eqref{eq:Gamma3-boundary}.  A change of variables makes its analytic structure considerably more explicit.  Set
\begin{equation}
 t_2=r,\qquad t_1=rs,\qquad
 0<r<1,\quad 0<s<1,
 \label{eq:k3-square-coordinates}
\end{equation}
so that
\begin{equation}
 \dd t_1\,\dd t_2=r\,\dd r\,\dd s.
\end{equation}
The four monomer--dimer weights become
\begin{align}
 w_0&=\frac1{4096\,r^2s},&
 w_{12}&=\frac1{256\,r^2(1-s)^2},\nonumber\\
 w_{13}&=\frac1{256\,r(1-rs)^2},&
 w_{23}&=\frac1{256\,rs(1-r)^2}.
 \label{eq:k3-square-weights}
\end{align}
Define the square-integrand
\begin{equation}
 \mathscr G_3(r,s)
 \equiv r\,\Gamma_3^\partial(rs,r,1).
 \label{eq:k3-square-integrand}
\end{equation}
Before endpoint subtraction, the common-scale shape integral is simply
\begin{equation}
 \int_{0<t_1<t_2<1}\Gamma_3^\partial(t_1,t_2,1)\,
 \dd t_1\,\dd t_2
 =
 \int_0^1\dd r\int_0^1\dd s\,\mathscr G_3(r,s).
 \label{eq:k3-square-integral}
\end{equation}
The advantage of Eq.~\eqref{eq:k3-square-coordinates} is that the two nonintegrable endpoint hierarchies become coordinate faces of the unit square.

Indeed, Eq.~\eqref{eq:k3-common-scale} gives, at fixed $s$,
\begin{equation}
 \mathscr G_3(r,s)
 =
 \frac1{16r}\Gamma_2^\partial(s)
 +O(|\log r|),
 \qquad r\to0^+,
 \label{eq:k3-r-face}
\end{equation}
while Eq.~\eqref{eq:k3-factorization} gives, at fixed $r$,
\begin{equation}
 \mathscr G_3(r,s)
 =
 \frac1{16s}\Gamma_2^\partial(r)
 +O(|\log s|),
 \qquad s\to0^+.
 \label{eq:k3-s-face}
\end{equation}
Thus both nonintegrable residues are completely fixed by the solved two-point period.  By contrast, $s\to1$ is the ordinary collision $t_1\to t_2$ and is only logarithmically singular by Eq.~\eqref{eq:k3-pair-collision}; the analogous $r\to1$ physical collision is also locally integrable.  No new counterterm is required on those faces at the level of the shape integral.

An analytic regulator can therefore be introduced directly:
\begin{equation}
 \mathcal J_3(\alpha,\beta)
 =
 \int_0^1\dd r\int_0^1\dd s\,
 r^\alpha s^\beta\,\mathscr G_3(r,s),
 \label{eq:k3-regulator}
\end{equation}
which converges for $\Re\alpha,\Re\beta>0$ after the locally integrable pair-collision faces are treated in the ordinary sense.  Keeping the second regulator in its convergence half-plane, the two boundary residues follow immediately from Eqs.~\eqref{eq:k3-r-face} and~\eqref{eq:k3-s-face}:
\begin{align}
 \underset{\alpha=0}{\operatorname{Res}}\,
 \mathcal J_3(\alpha,\beta)
 &=
 \frac1{16}\int_0^1s^\beta\Gamma_2^\partial(s)\,\dd s,
 \label{eq:k3-alpha-residue}\\
 \underset{\beta=0}{\operatorname{Res}}\,
 \mathcal J_3(\alpha,\beta)
 &=
 \frac1{16}\int_0^1r^\alpha\Gamma_2^\partial(r)\,\dd r.
 \label{eq:k3-beta-residue}
\end{align}
At zero regulator each residue reduces to $I_2/16$, in agreement with Eq.~\eqref{eq:k3-first-residue}.  The appearance of the same lower period on both faces is the analytic origin of the nested logarithms in the raw $B_{3,\ell}$ sector.

The corner $r=s=0$ belongs to both descriptions and must not be subtracted twice.  A convenient way to make the overlap unambiguous is standard sector decomposition:
\begin{equation}
 \int_0^1\dd r\int_0^1\dd s
 =
 \int_{r>s}\dd r\,\dd s
 +
 \int_{s>r}\dd r\,\dd s,
 \label{eq:k3-sector-split}
\end{equation}
followed respectively by
\begin{equation}
 (r,s)=(\rho,\rho\sigma),
 \qquad
 (r,s)=(\rho\sigma,\rho),
 \qquad 0<\rho,\sigma<1.
 \label{eq:k3-sector-maps}
\end{equation}
In either sector the nested endpoint limit is a single coordinate face $\rho\to0$, while $\sigma$ keeps the relative hierarchy fixed.  Expanding the weights in Eq.~\eqref{eq:k3-square-weights} at fixed $\sigma$ generates the overlap counterterm, and the forest inclusion--exclusion in Appendix~\ref{app:boundary-forest} removes it exactly once.  Equivalently one may regard
\begin{equation}
 \mathcal I_{3,\rm ren}^\partial
 \equiv
 \FP_{\alpha,\beta\to0}^{\rm forest}
 \mathcal J_3(\alpha,\beta)
 \label{eq:k3-ren-period}
\end{equation}
as the renormalized continuum three-point shape period, where the superscript ``forest'' means that the Laurent finite part is taken only after the two single-face residues and their sectorwise overlap have been combined according to the compatible-forest prescription.

Equation~\eqref{eq:k3-ren-period} is not yet the complete lattice coefficient.  As in the solved $k=2$ example, a microscopic fixed-separation analysis can leave a finite face anomaly that is invisible to the naive continuum integral.  If the order-$k=3$ boundary forest construction is completed, the primitive single-log coefficient has the structural form
\begin{equation}
 \delta_3^\partial
 =
 2\,\mathcal I_{3,\rm ren}^\partial
 +\mathcal A_3^{\rm face},
 \label{eq:k3-structural-coefficient}
\end{equation}
where the factor two is for the two interval endpoints and $\mathcal A_3^{\rm face}$ denotes the net finite lattice boundary-layer anomaly after all overlaps with lower supports are removed.  Neither term on the right-hand side of Eq.~\eqref{eq:k3-structural-coefficient} is evaluated in this work, so Eq.~\eqref{eq:k3-structural-coefficient} is a roadmap rather than a numerical claim.

\paragraph{General recursive pattern.}
The two examples above show that the higher-order constructions remain finite at every fixed support before renormalization.  In the periodic problem there are $(2p-1)!!$ perfect matchings and $2p-3$ projective shape variables.  In the boundary problem the number $N_k$ of monomer--dimer coverings obeys
\begin{equation}
 N_k=N_{k-1}+(k-1)N_{k-2},
 \qquad N_0=N_1=1,
 \label{eq:telephone-recurrence}
\end{equation}
and fixing the outer scale leaves $k-1$ shape variables.  In both geometries the singular faces factor onto lower-support kernels, so the same previously computed periods reappear as residues at the next order.

At fixed support, the analytic construction is finite and recursive.  The matching or monomer--dimer entropy kernel is expressed in projective or endpoint-scale coordinates adapted to the singular strata; primitive faces are then expanded to identify their lower-support residues, disjoint and nested faces are combined by forest inclusion--exclusion, and the continuum finite part is supplemented by any finite lattice face anomaly not captured by the integral.  In the periodic problem the projective-orbit subtraction is part of this same linked operation, while in the interval problem the additional endpoint hierarchy must be included together with ordinary physical collisions.

This recursive closure is what makes an order-by-order analytic program plausible: once all lower supports have been controlled, the next support produces a finite list of new shape integrals and a finite list of counterterms inherited from earlier orders.  What remains open is a proof that this forest construction closes uniformly at arbitrary support, together with control of the resulting all-order series.  Nor does the recursion imply that all periods reduce to elementary constants: beginning already at higher support, genuine multidimensional polylogarithmic or more general periods may occur.

\section{Numerical evaluation}
\label{app:numerics}

This appendix documents the deterministic finite-size calculations used for the numerical statements in the main text: linked periodic sectors, direct periodic and interval entropies, and the stability of the extrapolations.

\subsection{Finite-size linked sectors}
\label{app:periodic-numerics}

The linked coefficients are obtained from the exact finite-size M\"obius sectors defined in Eq.~\eqref{eq:linked-expansion}.  Representative values are
\begin{center}
\begingroup
\small
\setlength{\tabcolsep}{3pt}
\begin{tabular}{crrrrrr}
\toprule
$L$&$A_3$&$A_4$&$A_5$&$A_6$&$A_7$&$A_8$\\
\midrule
12&0.78720596&0.17167279&-0.02584279&-0.00461292&0.00047104&0.00004013\\
16&1.07426446&0.27849701&-0.04047482&-0.01020244&0.00104755&0.00016019\\
20&1.35749257&0.38883276&-0.05464654&-0.01678121&0.00167215&0.00035082\\
24&1.63870655&0.50093100&-0.06846487&-0.02392760&0.00230434&0.00059286\\
28&1.91873391&0.61403526&-0.08203151&-0.03142509&0.00293173&0.00087150\\
30&2.05844546&0.67083837&-0.08874375&-0.03526572&0.00324247&0.00102122\\
\bottomrule
\end{tabular}
\endgroup
\end{center}
The extensive sign of an odd sector should not be confused with its constant part.  The macroscopic power counting excludes a new full-support odd matching kernel at order $L^0$, but the stronger statement that all odd lattice constants vanish remains unproved.

For completeness, the much smaller nine- and ten-vertex sectors are
\begin{center}
\begin{tabular}{c rr}
\toprule
$L$&$A_9$&$A_{10}$\\
\midrule
12&$-3.0481\times10^{-6}$&$-1.1134\times10^{-7}$\\
14&$-7.3615\times10^{-6}$&$-4.5672\times10^{-7}$\\
16&$-1.3342\times10^{-5}$&$-1.1679\times10^{-6}$\\
18&$-2.0613\times10^{-5}$&$-2.3129\times10^{-6}$\\
20&$-2.8831\times10^{-5}$&$-3.9131\times10^{-6}$\\
22&$-3.7725\times10^{-5}$&$-5.9599\times10^{-6}$\\
\bottomrule
\end{tabular}
\end{center}
At these sizes the inferred $L^0$ term of $A_{10,L}$ is strongly correlated with the inverse-size corrections.  We therefore do not assign a numerical value to $c_{10}$ in this work.

\subsection{Representative finite-size data}
\label{app:finite-data}

\paragraph{Periodic system.}
The full Walsh transform gives the measurement entropy directly.  The cycle entropy is then obtained independently as $H_{\cyc}=H_{\edge}-H_1^{(z)}$:
\begin{table}[h]
\centering
\caption{Finite-size periodic entropies obtained from the exact Walsh coefficients.}
\label{tab:periodic-full-data}
\begin{tabular}{c r r r}
\toprule
$L$ & $H_1^{(z)}(L)$ & $H_{\edge}(L)$ & $H_{\cyc}(L)$\\
\midrule
8  & $2.949687563358$ & $3.502820194538$ & $0.553132631179$\\
10 & $3.795785544094$ & $4.537069151390$ & $0.741283607295$\\
12 & $4.642032629666$ & $5.570963683756$ & $0.928931054091$\\
14 & $5.488365191525$ & $6.604592936651$ & $1.116227745126$\\
16 & $6.334751276926$ & $7.638028783959$ & $1.303277507033$\\
18 & $7.181173087765$ & $8.671322010005$ & $1.490148922240$\\
20 & $8.027619926909$ & $9.704507991239$ & $1.676888064330$\\
22 & $8.874084979058$ & $10.737611680427$ & $1.863526701368$\\
24 & $9.720563696818$ & $11.770651020953$ & $2.050087324134$\\
\bottomrule
\end{tabular}
\end{table}

\paragraph{Infinite-chain interval.}
The infinite-line minors in Eq.~\eqref{eq:d-infty} give
\begin{table}[h]
\centering
\caption{Interval Shannon entropies from the exact infinite-line Walsh coefficients.}
\label{tab:interval-full-data}
\begin{tabular}{c r@{\qquad}c r}
\toprule
$\ell$ & $H_1(A_\ell)$ & $\ell$ & $H_1(A_\ell)$\\
\midrule
4 & 1.805854593071358 & 14 & 6.107833679024373\\
6 & 2.674003797245196 & 16 & 6.962089515106668\\
8 & 3.536422963908597 & 18 & 7.815467577832148\\
10& 4.395517906953367 & 20 & 8.668147394540483\\
12& 5.252441034545338 & 22 & 9.520258511384979\\
\bottomrule
\end{tabular}
\end{table}

The edge--cycle split and the distinguished-vertex decomposition can be checked independently at the same finite sizes.  The quantities $E_\ell$ are obtained from Eq.~\eqref{eq:Eell}; $C_\ell=E_\ell-H_1(A_\ell)$; $C_\ell^{\linesys}$ is computed from the physical-line boundary law using Eq.~\eqref{eq:conditional-fwht}; and $B_{2,\ell}$ is the exact triangle sum in Eq.~\eqref{eq:B2-triangle}.  Write $C_\ell^\partial\equiv C_\ell-C_\ell^{\linesys}$ for the complete ghost-containing contribution.  Representative values are
\begin{table}[h]
\centering
\caption{Finite-size interval cycle decomposition.  The quantity $C_\ell^\partial$ contains all ghost-linked M\"obius sectors, while $B_{2,\ell}$ is only the first one.}
\label{tab:interval-cycle-decomp}
\begingroup
\small
\setlength{\tabcolsep}{3pt}
\begin{tabular}{c r r r r r}
\toprule
$\ell$ & $E_\ell$ & $C_\ell$ & $C_\ell^{\linesys}$ & $C_\ell^\partial$ & $B_{2,\ell}$\\
\midrule
4  & 2.11884181 & 0.31298722 & 0.05541551 & 0.25757172 & 0.24416915\\
8  & 4.25439690 & 0.71797394 & 0.26162216 & 0.45635178 & 0.40513392\\
12 & 6.36356297 & 1.11112194 & 0.52110057 & 0.59002137 & 0.50245801\\
16 & 8.46063608 & 1.49854657 & 0.80649071 & 0.69205586 & 0.57181081\\
\bottomrule
\end{tabular}
\endgroup
\end{table}
The table makes Eq.~\eqref{eq:Cell-line-ghost} concrete: the ghost-containing contribution is not the complete conditional entropy, and even at these sizes the first ghost sector does not exhaust it.

The same data also make the cancellation behind Eq.~\eqref{eq:conditional-slope-cancellation} visible.  Define the two-scale operator
\begin{equation}
\mathcal D_\ell[X]=\frac{X(2\ell)-X(\ell)}{\log2}.
\end{equation}
For the complete conditional entropy and the line contribution we first subtract the common bulk term $s_{\cyc}\ell$; the ghost pieces have no extensive term.  Indeed, with $Y=\partial F_A$ and $\zeta$ the independent ghost-edge vector,
\begin{equation}
C_\ell^\partial=C_\ell-C_\ell^{\linesys}
=H(\zeta)+H(Y)-H(Y\oplus\zeta),
\qquad
0\le C_\ell^\partial\le2H(\zeta)=2E_\ell^{\rm ghost}=O(\log^2\ell),
\end{equation}
where the bound follows from the entropy inequalities for $Y$, $\zeta$, and $Y\oplus\zeta$.  Using $s_{\cyc}=0.09306147$, the displayed data give
\begin{table}[h]
\centering
\caption{Two-scale effective slopes illustrating the cancellation of the interval conditional logarithm.  The first two columns are bulk-subtracted.  These are finite-size diagnostics, not separately defined universal coefficients for the higher raw sectors.}
\label{tab:interval-cycle-effective-slopes}
\begingroup
\small
\setlength{\tabcolsep}{4pt}
\begin{tabular}{c r r r r}
\toprule
$\ell$ & $\mathcal D_\ell[C-s_{\cyc}\ell]$
& $\mathcal D_\ell[C^{\linesys}-s_{\cyc}\ell]$
& $\mathcal D_\ell[B_2]$
& $\mathcal D_\ell[C^\partial-B_2]$\\
\midrule
4 & $0.047235$ & $-0.239544$ & $0.232223$ & $0.054556$\\
8 & $0.052054$ & $-0.287995$ & $0.240464$ & $0.099585$\\
\bottomrule
\end{tabular}
\endgroup
\end{table}
The columns add exactly, up to the rounding of Table~\ref{tab:interval-cycle-decomp}, as required by Eq.~\eqref{eq:Cell-three-piece}.  The total bulk-subtracted slope is already close to $\gamma_C=\gamma_E-\gamma_1\simeq0.05077$, while the individual pieces are much larger and cancel strongly.  The drift of the higher ghost column should not be interpreted as convergence to a separate single-log coefficient: \Cref{sec:subsystem-higher-boundary} shows that raw $k\ge3$ ghost sectors contain nested logarithms before forest subtraction.

The analytic coefficient $\delta_2^\partial$ can be checked at sizes much larger than those accessible to the full $2^\ell$ Walsh reconstruction because $B_{2,\ell}$ requires only a double sum of triangle entropies.  Define the two-scale effective slope
\begin{equation}
\delta_{2,\rm eff}^{\partial}(\ell)
=\frac{B_{2,2\ell}-B_{2,\ell}}{\log2}.
\end{equation}
Direct lattice evaluation gives
\begin{center}
\begin{tabular}{c r}
\toprule
$\ell$ & $\delta_{2,\rm eff}^{\partial}(\ell)$\\
\midrule
20  & 0.23905364\\
50  & 0.23451640\\
100 & 0.23154266\\
200 & 0.22935789\\
\bottomrule
\end{tabular}
\end{center}
The slow approach is consistent with the analytic asymptote in Eq.~\eqref{eq:delta2-boundary}.

\paragraph{Decimation check.}
As a direct finite-size check of Eqs.~\eqref{eq:Gm-block}--\eqref{eq:decimation-general}, the correlation matrix can be evaluated from Eq.~\eqref{eq:Gm-decimation} and the corresponding determinant probabilities compared with the sum of TFI block entropies.  Representative values are
\begin{center}
\begin{tabular}{c c c c c}
\toprule
$m$ & $\ell$ & block sizes $\{\ell_r\}$ & direct $H_1^{(m)}$ & TFI block sum\\
\midrule
2 & 7 & $\{4,3\}$ & 3.173825205087676 & 3.173825205087676\\
3 & 7 & $\{3,2,2\}$ & 3.218852722600712 & 3.218852722600712\\
4 & 9 & $\{3,2,2,2\}$ & 4.144293777892909 & 4.144293777892909\\
\bottomrule
\end{tabular}
\end{center}
The two evaluations agree at double-precision roundoff for these sizes.

\subsection{Finite-size extrapolations}
\label{app:fit-stability}

All sequences below are deterministic evaluations of the exact finite-size formulas.  Parenthetical uncertainties therefore denote the spread under changes of the minimum size and correction order, not statistical standard errors.  For a periodic sequence we use
\begin{equation}
 X(L)=sL+c+\sum_{j=1}^{J}\frac{a_j}{L^j},
 \label{eq:periodic-fit-family}
\end{equation}
and for the interval entropy
\begin{equation}
 H_1(A_\ell)=s\ell+\gamma\log\ell+c+\sum_{j=1}^{J}\frac{a_j}{\ell^j}.
 \label{eq:interval-fit-family}
\end{equation}
The tables below show representative variations of these fits.

\paragraph{Fixed-support linked sectors.}
Fitting the displayed linked-sector values over the available sizes gives
\begin{table}[H]
\centering
\caption{Dependence of the linked-sector extrapolations on the highest inverse power $J$ retained in Eq.~\eqref{eq:periodic-fit-family}.  The table is based on the values printed above and is used only to determine the precision quoted in the main text.}
\label{tab:linked-fit-order}
\begin{tabular}{c r r r}
\toprule
$J$ & fitted $c_6$ & fitted $c_8$ & fitted $c_{10}$\\
\midrule
2 & $0.03471790$ & $-0.00278935$ & $8.93\times10^{-5}$\\
3 & $0.03469398$ & $-0.00305098$ & $1.401\times10^{-4}$\\
4 & $0.03458790$ & $-0.00305737$ & $1.644\times10^{-4}$\\
\bottomrule
\end{tabular}
\end{table}
For $c_6$ and $c_8$ the higher-order fits support the precision used in Eq.~\eqref{eq:coeffs-main}, namely $c_6\simeq0.0346$ and $c_8\simeq-0.00305$.  The much larger relative drift of the ten-vertex fit is the reason no value of $c_{10}$ is quoted.

\paragraph{Full periodic cycle entropy.}
Using the values in Table~\ref{tab:periodic-full-data}, representative fits give
\begin{table}[H]
\centering
\caption{Finite-size fits for $H_{\cyc}(L)$ under changes of the lower cutoff and correction order.}
\label{tab:periodic-full-fit}
\begin{tabular}{c c c r r}
\toprule
$L_{\min}$ & $J$ & sizes & $s_{\cyc}$ & $c_{\cyc}$\\
\midrule
8  & 2 & 9 & 0.0930619888 & -0.1781259999\\
8  & 3 & 9 & 0.0930614485 & -0.1780942365\\
8  & 4 & 9 & 0.0930614703 & -0.1780958504\\
10 & 3 & 8 & 0.0930614583 & -0.1780948708\\
10 & 4 & 8 & 0.0930614691 & -0.1780957479\\
12 & 4 & 7 & 0.0930614685 & -0.1780956983\\
14 & 3 & 6 & 0.0930614644 & -0.1780953151\\
\bottomrule
\end{tabular}
\end{table}
The $J=2$ fit is visibly under-corrected.  The remaining windows support the conservative values
\begin{equation}
 s_{\cyc}=0.09306147(2),\qquad c_{\cyc}=-0.1780956(3),
\end{equation}
which are the values used consistently in the main text and in the deformation comparison below.

\paragraph{Interval entropy.}
Allowing both $s$ and $\gamma$ to vary gives
\begin{table}[H]
\centering
\caption{Finite-size fits of the interval entropy in Eq.~\eqref{eq:interval-fit-family}.  The degrees of freedom are the number of data points minus the number of fitted parameters; zero-degree-of-freedom rows are displayed only as interpolation diagnostics and are not used as independent evidence for the uncertainty.}
\label{tab:interval-fit}
\begin{tabular}{c c c c r r}
\toprule
$\ell_{\min}$ & $J$ & sizes & d.o.f. & $s$ & $\gamma$\\
\midrule
4  & 2 & 10 & 5 & 0.4232729855 & 0.06004524\\
4  & 3 & 10 & 4 & 0.4232734978 & 0.06002324\\
4  & 4 & 10 & 3 & 0.4232735945 & 0.06001789\\
6  & 3 & 9  & 3 & 0.4232735577 & 0.06002023\\
6  & 4 & 9  & 2 & 0.4232735783 & 0.06001892\\
8  & 3 & 8  & 2 & 0.4232735683 & 0.06001964\\
8  & 4 & 8  & 1 & 0.4232735704 & 0.06001949\\
10 & 3 & 7  & 1 & 0.4232735692 & 0.06001958\\
10 & 4 & 7  & 0 & 0.4232735665 & 0.06001979\\
12 & 3 & 6  & 0 & 0.4232735682 & 0.06001965\\
\bottomrule
\end{tabular}
\end{table}

A cleaner test fixes the extensive density to the central value $s=0.42327356$ obtained independently from the periodic calculation and retains $J=3$:
\begin{table}[H]
\centering
\caption{Interval fits with the bulk density fixed to the independent periodic value $s=0.42327356$.}
\label{tab:interval-fit-fixed-s}
\begin{tabular}{c c c r}
\toprule
$\ell_{\min}$ & sizes & d.o.f. & $\gamma$\\
\midrule
4  & 10 & 5 & 0.06002061\\
6  & 9  & 4 & 0.06002012\\
8  & 8  & 3 & 0.06002010\\
10 & 7  & 2 & 0.06002015\\
12 & 6  & 1 & 0.06002020\\
\bottomrule
\end{tabular}
\end{table}
The fixed-$s$ sequence removes most of the $s$--$\gamma$ correlation and is stable at the few-$10^{-7}$ level over the last four displayed windows.

Because raw boundary support sectors can contain nested logarithms, it is useful to test rather than silently assume that no logarithmically enhanced correction can mimic part of the finite-size drift of the complete entropy.  As a diagnostic, fix the same $s$ and fit
\begin{equation}
H_1(A_\ell)=s\ell+\gamma\log\ell+c
+b\frac{\log\ell}{\ell}+\frac{a_1}{\ell}+\frac{a_2}{\ell^2}.
\label{eq:interval-logcorr-fit}
\end{equation}
The resulting values are
\begin{center}
\begin{tabular}{c c r}
\toprule
$\ell_{\min}$ & d.o.f. & $\gamma$\\
\midrule
4  & 5 & 0.06000879\\
6  & 4 & 0.06001298\\
8  & 3 & 0.06001556\\
10 & 2 & 0.06001706\\
12 & 1 & 0.06001798\\
\bottomrule
\end{tabular}
\end{center}
The log-enhanced fits drift upward toward the pure-power estimate as the lower cutoff is raised, but the available size range is too short to identify the correction basis uniquely.  We therefore retain the conservative quotation $\gamma_1=0.060020(3)$ as the envelope of the stable higher-cutoff extrapolations used in the main text, while emphasizing that this parenthetical number is not a rigorous bound over arbitrary asymptotic ans\"atze.  The literature value $b_1/8$ is not used as a fit input.

\section{Cauchy deformation, grand organization and quadratic-chain generalizations}
\label{app:grand-details}

This appendix follows the progression of Sec.~\ref{sec:cauchy-family}: we first test the auxiliary Cauchy deformation, then record the formal grand determinant/Hafnian construction, and finally separate from those auxiliary structures the exact decimation identity for a genuine family of critical quadratic chains.

\subsection{Cauchy deformation as an independent coefficient generator}
\label{app:deformation-test}

The deformation introduced in \Cref{sec:cauchy-family} gives a practical, independent test of the linked expansion.  For the finite ring let
\begin{equation}
(G_{\nu,L})_{jk}=\frac{\sin\pi\nu}{L\sin[\pi(j-k+\nu)/L]},
\qquad z=4\nu^2.
\end{equation}
Up to diagonal gauge signs (which do not change principal minors), its principal-minor factorization has
\begin{equation}
A_{\nu,L}=\frac{\sin\pi\nu}{L\sin(\pi\nu/L)},\qquad
g_m^{(\nu,L)}=\frac{1}{1-z_m^{(\nu,L)}},\qquad
z_m^{(\nu,L)}=\frac{\sin^2(\pi\nu/L)}{\sin^2(\pi m/L)},
\end{equation}
and the finite product $A_{\nu,L}^2\prod_{m=1}^{L-1}g_m^{(\nu,L)}=1$ gives the exact Bernoulli edge law
\begin{equation}
\beta_m^{(\nu,L)}=\frac12\left[1-\sqrt{1-z_m^{(\nu,L)}}\right].
\end{equation}
At fixed macroscopic separation $m=xL$ the Bernoulli activity is
\begin{equation}
 \beta_{xL}^{(\nu,L)}
 =z\,\frac{\pi^2}{16L^2}\csc^2(\pi x)+O(L^{-4}).
\end{equation}
A $2p$-vertex perfect matching contains $p$ macroscopic edges, so the full-support entropy kernel is exactly homogeneous of degree $p$ in $z$.  The same homogeneity is inherited by its local collision counterterms.  Conditional on the lattice--continuum finite-part matching of Conjecture~\ref{conj:FPmatch}, the corresponding lattice coefficients would therefore obey
\begin{equation}
c_{2p}(z)=z^p c_{2p}(1),\qquad
 C(z)=\sum_{p\ge2}c_{2p}z^p.
\label{eq:Cz-app}
\end{equation}
The Cauchy deformation itself is an exact finite-size Cauchy/parity probability family; no generic-$\nu$ local parent Hamiltonian is assumed.  What is conditional here is the identification of its universal cycle constant with the all-order linked series $C(z)$.  The finite-size tests below provide an independent numerical check of that identification at several $z<1$.

We can test \eqref{eq:Cz-app} without enumerating a fixed high vertex support.  For a chosen $z<1$, construct the complete finite-$L$ Walsh distribution of the deformed Cauchy probability law, compute its Shannon entropy, subtract the analytically known deformed edge sector, and extrapolate the cycle constant.  Higher linked orders are suppressed by $z^p$.  Representative values are given in Table~\ref{tab:Cz-test}.
\begin{table}[h]
\centering
\caption{Finite-size comparison for the Cauchy deformation.  The second column is obtained by fitting the complete deformed distributions at $L=12,14,\ldots,24$ with $J=4$ in Eq.~\eqref{eq:periodic-fit-family}.  The third uses the exact $c_4$ and the rounded values $c_6=0.0346$, $c_8=-0.00305$ quoted in Eq.~\eqref{eq:coeffs-main}.}
\label{tab:Cz-test}
\begin{tabular}{c r r r}
\toprule
$z$ & $C(z)$, full distribution & $c_4z^2+c_6z^3+c_8z^4$ & difference\\
\midrule
0.25 & $-0.01258302228$ & $-0.01258272621$ & $-2.96\times10^{-7}$\\
0.50 & $-0.04830976245$ & $-0.04831137360$ & $+1.61\times10^{-6}$\\
0.75 & $-0.10434347929$ & $-0.10437109841$ & $+2.76\times10^{-5}$\\
1.00 & $-0.17809568530$ & $-0.17823299439$ & $+1.37\times10^{-4}$\\
\bottomrule
\end{tabular}
\end{table}
The discrepancy decreases as $z$ is lowered, as expected when the first omitted linked term is of order $z^5$.  At $z=1$ the fixed-window value in the second column lies inside the conservative fit envelope in Eq.~\eqref{eq:ccyc-direct}.  The table is used only to test the organization of the deformation; no higher linked coefficient is inferred from it.

\subsection{Grand determinant/Hafnian derivation}
\label{app:grand-derivation}

This subsection supplies the coefficient normalization and Hafnian summation underlying the formal grand organization in Sec.~\ref{sec:grand}; the renormalization and $z=1$ resummation assumptions remain exactly those stated there.  For a symmetric integrand of $n=2p$ points on the circle, translation invariance gives
\begin{equation}
\frac1n\int_{\Delta_{n-1}}F=\frac1{n!}\int_{\mathbb T^n}F.
\end{equation}
This converts the linked coefficient normalization to the factorial normalization used in the grand series.

At fixed $p$ define
\begin{equation}
D_{2p}(q;x)=[\det K_{2p}(x)]^q,
\qquad
H_{2p}(q;x)=\Hf[(K_{ij}(x)^2)^q]
=\sum_M\prod_{(ij)\in M}K_{ij}^{2q}.
\end{equation}
For distinct points, the Pfaffian--Hafnian identity gives $D_{2p}(1;x)=H_{2p}(1;x)=Z_p(x)$.  Differentiating before any collision renormalization,
\begin{align}
\left.\partial_qD_{2p}(q;x)\right|_{q=1}&=Z_p\log Z_p,\\
\left.\partial_qH_{2p}(q;x)\right|_{q=1}&=\sum_Mw_M\log w_M,
\end{align}
so their difference is exactly $\Gamma_{2p}$.  This fixed-support derivative is the part of the grand construction that is already contained in the linked analysis of the main text.

The matching side can be summed independently in a region where no finite part is needed.  For $\Re q<1/2$ the two-point integral is absolutely convergent:
\begin{align}
J(q)&=\int_0^1\dd x\int_0^1\dd y
\left[\frac{\pi^2}{16\sin^2\pi(x-y)}\right]^q\\
&=\left(\frac\pi4\right)^{2q}
\frac{\Gamma(\frac12-q)}{\sqrt\pi\Gamma(1-q)}.
\end{align}
Every perfect matching factorizes into $p$ copies of $J(q)$, while
\begin{equation}
\frac{(2p-1)!!}{(2p)!}=\frac1{2^pp!}.
\end{equation}
Hence, in the convergence half-plane,
\begin{equation}
\cB(q,z)=\sum_{p=0}^{\infty}\frac1{p!}
\left(\frac{zJ(q)}2\right)^p
=e^{zJ(q)/2}.
\end{equation}
The explicit Gamma ratio gives its meromorphic continuation.  Expanding that continuation at $q=1+\epsilon$ yields
\begin{equation}
\frac{\Gamma(\frac12-q)}{\sqrt\pi\Gamma(1-q)}
=2\epsilon+O(\epsilon^2),
\end{equation}
so that
\begin{equation}
J'(1)=\frac{\pi^2}{8},
\qquad
\left.\partial_q\cB(q,z)\right|_{q=1}=\frac{\pi^2}{16}z.
\end{equation}
This calculation is exact as analytic continuation of the convergent Hafnian series; it does not require a linked collision finite part.

The determinant side is different.  The linked prescription constructed in Appendix~\ref{app:finite-part} renormalizes the Shannon combination obtained after differentiating at $q=1$.  It does not by itself produce a family of separately renormalized integrals $\operatorname{FP}_{q,\rm linked}\int D_{2p}(q)$ for $q$ near one.  Defining such a family would require counterterms depending coherently on $q$, including on nested collision forests, and then proving that differentiation at $q=1$ commutes with that renormalization.  This is the compatibility condition written explicitly in Eq.~\eqref{eq:q-ren-compatibility}.

A finite-matrix analogue helps explain why the proposed determinant-power object is natural even though the continuum renormalization remains open.  For a finite matrix $A$,
\begin{equation}
M_q(A;t)=\sum_St^{|S|}[\det A_S]^q
\end{equation}
reduces at $q=1$ to $\det(I+tA)$, while
\begin{equation}
\left.\partial_qM_q(A;t)\right|_{q=1}
=\sum_St^{|S|}\det A_S\log\det A_S,
\end{equation}
which is the SPPM structure studied in Ref.~\cite{NajafiRamezanpourRajabpour2025}.  The continuum grand determinant in Eq.~\eqref{eq:Aqz} is motivated by this analogy, but its $q$-dependent linked renormalization is an additional problem rather than a consequence of the finite-matrix identity.

Finally, even if the compatibility condition is established coefficientwise, evaluation at the physical point $z=1$ requires convergence or a controlled summation of the support series and the all-order odd-support statement.  Thus Eqs.~\eqref{eq:C-grand}, \eqref{eq:grand-master-z} and \eqref{eq:grand-master-x} should be read in the formal hierarchy stated in the main text, not as additional proved equalities for the physical constant.

\subsection{Exact computational-basis decimation for \texorpdfstring{$f_m(z)=z^m-1$}{fm(z) = z to the m minus 1}}
\label{app:decimation}
The quadratic models $f_m(z)=z^m-1$ considered in Ref.~\cite{Tarighi2022} provide a useful extension of the TFI calculation.  Write
\begin{equation}
 q_m(k)=\frac{e^{imk}-1}{|e^{imk}-1|}=q_1(mk),
\end{equation}
where $q_1$ is the phase of the critical TFI symbol $f_1(z)=z-1$.  In the thermodynamic limit their correlation matrices are Fourier coefficients of this phase,
\begin{equation}
 G^{(m)}_{ab}=\frac{1}{2\pi}\int_0^{2\pi}q_1(mk)e^{-i(b-a)k}\dd k.
 \label{eq:Gm-decimation}
\end{equation}
Split the integral into the $m$ intervals $2\pi r/m<k<2\pi(r+1)/m$ and set $\theta=mk-2\pi r$.  The sum over intervals produces the root-of-unity projector
\begin{equation}
 \frac1m\sum_{r=0}^{m-1}e^{-2\pi i r(b-a)/m}
 =\mathbf 1_{m\mid(b-a)}.
\end{equation}
Consequently
\begin{equation}
 G^{(m)}_{ab}=
 \begin{cases}
 G^{(1)}_{0,(b-a)/m},&m\mid(b-a),\\
 0,&m\nmid(b-a),
 \end{cases}
 \label{eq:Gm-sparse}
\end{equation}
up to the harmless diagonal gauge convention used for the TFI matrix.  For a contiguous interval $A_\ell$, permute the sites so that equal residue classes modulo $m$ are adjacent.  If $\ell_r$ is the number of sites in residue class $r$, Eq.~\eqref{eq:Gm-sparse} becomes the exact block decomposition
\begin{equation}
 P G_{A_\ell}^{(m)}P^{-1}
 =\bigoplus_{r=0}^{m-1}G_{A_{\ell_r}}^{\rm TFI}.
 \label{eq:Gm-block}
\end{equation}
The diagonal configuration matrix is block diagonal under the same permutation.  The formation-probability determinant therefore factorizes configuration by configuration,
\begin{equation}
 p_{A_\ell}^{(m)}(\bm\sigma)
 =\prod_{r=0}^{m-1}p_{A_{\ell_r}}^{\rm TFI}(\bm\sigma^{(r)}).
 \label{eq:p-decimation}
\end{equation}
Additivity of the Shannon--R\'enyi entropy for product distributions applied to Eq.~\eqref{eq:p-decimation} proves the factorization of the entropy for every $\alpha>0$.  Writing $\ell=mq+s$, $0\le s<m$, gives the more explicit identity
\begin{equation}
 H_\alpha^{(m)}(A_\ell)
 =s H_\alpha^{\rm TFI}(A_{q+1})+(m-s)H_\alpha^{\rm TFI}(A_q).
 \label{eq:decimation-general}
\end{equation}
At the Shannon point, let
\begin{equation}
 H_1^{\rm TFI}(A_q)=s_1q+\gamma_1\log q+\kappa_1+o(1).
\end{equation}
The extensive pieces in Eq.~\eqref{eq:decimation-general} combine exactly to $s_1\ell$.  For the logarithmic piece, using $q=(\ell-s)/m$ gives, at fixed residue $s$,
\begin{align}
 s\log(q+1)+(m-s)\log q
 ={}&m\log\frac{\ell}{m}
 -\frac{ms(m-s)}{2\ell^2}\nonumber\\
 &+\frac{ms(m-s)(m-2s)}{3\ell^3}
 +O(\ell^{-4}).
 \label{eq:decimation-log-expand}
\end{align}
In particular there is no residue-class-dependent $1/\ell$ term from the universal logarithm.  Equation~\eqref{eq:decimation-general} therefore yields, for every fixed residue class $s=0,\ldots,m-1$,
\begin{equation}
 H_1^{(m)}(A_\ell)
 =s_1\ell+m\gamma_1\log\ell
 +m(\kappa_1-\gamma_1\log m)+o(1),
 \label{eq:decimation-asympt}
\end{equation}
so the universal logarithmic coefficient is multiplied by $m$ for all interval lengths, not only along the subsequence $\ell=mq$.  The first residue dependence generated by the logarithmic term itself occurs at order $\ell^{-2}$; model-dependent corrections contained in the $o(1)$ remainder are not constrained by this observation.  Since this family has $c=m/2$, its Shannon logarithm is $2c\,\gamma_1$, not the naive $c/8$ branch.

\end{document}